\patchcmd\Gread@eps{\@inputcheck#1 }{\@inputcheck"#1"\relax}{}{}
\DeclareMathAlphabet{\mathantt}{OT1}{antt}{li}{it}
\DeclareMathAlphabet{\mathpzc}{OT1}{pzc}{m}{it}
\DeclarePairedDelimiter\norm{\lVert}{\rVert}%
\newtheorem{theorem}{Theorem}
\newtheorem{lemma}{Lemma}
\newtheorem{definition}{Definition}
\def\I{\mathcal{C}}
\def\J{\mathcal{J}}
\def\C{\mathcal{C}}
\def\G{\mathcal{G}}
\def\u{\mathpzc{u}}
\def\U{\mathcal{U}}
\def\V{\mathcal{V}}
\def\Y{\mathcal{Y}}
\def\minf{\textsc{MinF}}
\DeclareMathOperator{\argmin}{\arg\min}
\renewcommand{\vec}[1]{\mathbold{#1}}
\renewcommand{\bm}[1]{\mathbold{#1}}
\NewDocumentCommand{\overarrow}{O{=} O{\uparrow} m}{%
  \overset{\makebox[0pt]{\begin{tabular}{@{}c@{}}#3\\[0pt]\ensuremath{#2}\end{tabular}}}{#1}
}
\NewDocumentCommand{\underarrow}{O{=} O{\downarrow} m}{%
  \underset{\makebox[0pt]{\begin{tabular}{@{}c@{}}\ensuremath{#2}\\[0pt]#3\end{tabular}}}{#1}
}
\begin{document}

\title{\textcolor[rgb]{0,0,0}{Resource Optimization with Load Coupling \\in Multi-cell NOMA}}
\author{

    \IEEEauthorblockN{Lei~You\IEEEauthorrefmark{1}, Di~Yuan\IEEEauthorrefmark{1}, Lei~Lei\IEEEauthorrefmark{2}, Sumei~Sun\IEEEauthorrefmark{3}, Symeon~Chatzinotas\IEEEauthorrefmark{2}, and Bj\"{o}rn~Ottersten\IEEEauthorrefmark{2}
    }\\
    
    \IEEEauthorblockA{\IEEEauthorrefmark{1}Department of Information Technology, Uppsala University, Sweden
    \\\{lei.you; di.yuan\}@it.uu.se
    }
    
    \IEEEauthorblockA{\IEEEauthorrefmark{2}Interdisciplinary Centre for Security, Reliability and Trust, Luxembourg University, Luxembourg
    \\\{lei.lei; symeon.chatzinotas; bjorn.ottersten\}@uni.lu
    }
    
    \IEEEauthorblockA{\IEEEauthorrefmark{3}Institute for Infocomm Research, A*STAR, Singapore
    \\ sunsm@i2r.a-star.edu.sg
    }
    
    \thanks{Part of this paper has been presented at IEEE GLOBECOM, Singapore, Dec. 2017 \cite{lei:globecom17}.}

}



\maketitle


\begin{abstract}
Optimizing non-orthogonal multiple access (NOMA) in multi-cell scenarios is much more challenging than the single-cell case because inter-cell interference must be considered. 
Most papers addressing NOMA consider a single cell. 
We take a significant step of analyzing NOMA in multi-cell scenarios. \textcolor[rgb]{0,0,0}{We explore the potential of NOMA networks in achieving optimal resource utilization with arbitrary topologies. Towards this goal, we investigate a broad class of problems consisting in optimizing power allocation and user pairing for any cost function that is monotonically increasing in time-frequency resource consumption. We propose an algorithm that achieves global optimality for this problem class. The basic idea is to prove that solving the joint optimization problem of power allocation, user pair selection, and time-frequency resource allocation amounts to solving a so-called iterated function without a closed form. We prove that the algorithm approaches optimality with fast convergence.} Numerically, 
we evaluate and demonstrate the performance of NOMA for multi-cell scenarios in terms of resource efficiency and load balancing.
\end{abstract}
\begin{IEEEkeywords}
NOMA, multi-cell, resource allocation
\end{IEEEkeywords}

\section{Introduction}

\IEEEPARstart{N}{on-orthogonal} multiple access (NOMA) is considered as a promising technique for enhancing resource efficiency~\cite{7676258,2016arXiv161101607S,7273963,R1-153332,7582424,7357604,DBLP:journals/corr/TabassumHH16,7964738,7878674,7974737,7973138,7982784}.
In two recent surveys~\cite{7676258,2016arXiv161101607S}, the authors pointed out that resource allocation in multi-cell NOMA poses much more research challenges compared to the single-cell case, because optimizing NOMA with multiple cells has to model the interplay between successive interference cancellation (SIC) and inter-cell interference. As one step forward, the investigations in \cite{7676258,2016arXiv161101607S} have addressed two-cell scenarios. 
In~\cite{7582424}, the authors proposed two coordinated NOMA beamforming methods for two-cell scenarios. 
Reference~\cite{DBLP:journals/corr/TabassumHH16} uses stochastic geometry to model the inter-cell interference in NOMA. Hence the results do not apply for analyzing network with specific given network topology.
Reference~\cite{7964738} optimizes energy efficiency in multi-cell NOMA with downlink power control. However, 
the aspect of determining which users share resource by SIC, i.e., \textit{user pairing}, is not considered.
To the best of our knowledge, finding \textit{optimal power allocation and user pairing} simultaneously for enhancing network resource efficiency in multi-cell NOMA without restrictions on network topology has not been addressed yet. 

The crucial aspect of multi-cell NOMA consists of capturing the mutual interference among cells; This is a key consideration in SIC of NOMA. Therefore, the cells cannot be optimized independently.
For orthogonal multiple access (OMA) networks, a modeling approach had been proposed that characterizes the inter-cell interference via capturing the mutual influence among the cells' resource allocations \textcolor[rgb]{0,0,0}{
\cite{7959870,7132788,5198628,6204009,6732895,7585124,7880696,5450287,5489842,6292896,6363999,6479364,6747283,6924853,6815652,7151124,6887352,7480379,7273956,7332797,7962728,7744690,DBLP:journals/corr/abs-1710-09318}}. The model, named~\textit{load-coupling}, refers to the time-frequency resource consumption in each cell as the \textit{cell load}. However, 
the model does not allow SIC.
In our recent work~\cite{lei:globecom17}, we addressed resource optimization in multi-cell NOMA.
However, the system model is constrained by fixed power allocation. How to model joint optimization of power allocation and user pairing and how to solve the resulting problems to optimality have remained open so far.

\section{Main Results}

\textcolor[rgb]{0,0,0}{Thus far, for multi-cell NOMA, stochastic geometry is adopted to model inter-cell interference \cite{DBLP:journals/corr/TabassumHH16}, which results in difficulties for analysis upon specific network topologies. In this paper, we target optimizing multi-cell NOMA network with any given topology. In the modeling approaches of OMA used by \cite{7959870,7132788,5198628,6204009,6732895,7585124,7880696,5450287,5489842,6292896,6363999,6479364,6747283,6924853,6815652,7151124,6887352,7480379,7273956,7332797,7962728,7744690,DBLP:journals/corr/abs-1710-09318}, instead of making micro-level assumptions on the behavior of the resource scheduler or slot-by-slot consideration of inter-cell interference per resource block (RB) in each individual cell, the level of interference generated by a cell is directly related to the amount of allocated time-frequency resource in the cell. This is used to model the coupling relationship of resource allocation among cells, which is shown to be sufficiently accurate for network-level interference characterization \cite{6363999}, \cite{7273956}.}

We demonstrate how NOMA can be modeled in multi-cell scenarios by significantly extending the approaches in~\textcolor[rgb]{0,0,0}{\cite{7959870,7132788,5198628,6204009,6732895,7585124,7880696,5450287,5489842,6292896,6363999,6479364,6747283,6924853,6815652,7151124,6887352,7480379,7273956,7332797,7962728,7744690,DBLP:journals/corr/abs-1710-09318}}, with joint optimization of power allocation and user pairing. 
\textcolor[rgb]{0,0,0}{One fundamental result under such type of models in OMA is the existence of the equilibrium for resource allocation. However, this modeling approach in NOMA leads to non-closed form formulation of cell load coupling, unlike the case of OMA. The fact poses significant challenges in analysis and problem solving. As one of our main results, we prove that such an equilibrium for resource allocation in NOMA exists as well and propose an efficient algorithm for obtaining the equilibrium. Furthermore, we prove that the equilibrium is the global optimum for resource optimization in multi-cell NOMA and thus a wide class of resource optimization problems can be optimally solved by our algorithm. 
Because of our analytical results, previous works about OMA with load coupling is a special case of ours, namely, the algorithmic notions and mathematical tools being used in those works of classic multi-cell power control or OMA load coupling thus directly apply to the analysis multi-cell NOMA, suggesting future works on this topic.} \textcolor[rgb]{0,0,0}{All our analytical results are based on the extended model.}

To the best of our knowledge, this is the first work investigating how to \textit{optimally utilize power and time-frequency resources jointly} in multi-cell NOMA. As a key strength of our modeling approach, it enables to formulate and optimize an entire class of resource optimization problems. \textcolor[rgb]{0,0,0}{Namely, as long as the cost function is monotonically increasing in the cells' time-frequency resource consumption, our proposed framework in multi-cell NOMA applies.} Specifically, for solving this class of problems optimally, we derive a \textit{polynomial-time} algorithm \textsc{S-Cell} that gives the optimal power allocation and user pairing, for any given input of inter-cell interference. To address the dynamic coupling of inter-cell interference, we derive a unified algorithmic framework \textsc{M-Cell} that solves the multi-cell resource optimization problems optimally. The algorithm \textsc{S-Cell} serves as a sub-routine and is iteratively called by \textsc{M-Cell}. We demonstrate theoretically the \textit{linear convergence} of this process.

\textcolor[rgb]{0,0,0}{The fundamental differences between our investigated problems and single-cell NOMA are summarized as follows. For multiple cells, the resource allocation in one cell affects the interference that the cell generates to other cells. The amounts of required resource to meet the demand for all cells are coupled together, rather than being independent to each other. Optimizing resource allocation within one cell leads to a chain reaction among all other cells. Individual optimization for the cells results in sub-optimality and very inaccurate performance analysis. For multi-cell NOMA, not only the time-frequency resource allocation but also the power splits and user pairings in all cells are coupled together for the same reason. Therefore, joint optimization in NOMA leads to a rather complex problem for analysis.}

By numerical experiments, optimizing resource utilization by our algorithm
enlightens how much we can gain from NOMA in terms of \textit{resource efficiency} and \textit{load balancing}.

\section{Cell Load Modeling}
\label{sec:load}

Denote by $\I=\{1,2,\ldots,n\}$ and $\J=\{1,2,\ldots,m\}$ the sets of cells and user equipments (UEs), respectively. Denote by $\J_i$ the set of UEs served by cell $i$ ($i\in\I$). When using $j$ to refer to one UE in $\J$, $i$ by default indicates $j$'s serving cell, unless stated otherwise. 
Downlink is considered in our model.

\subsection{Resource Sharing in NOMA} 

The resource in time-frequency domain is divided into RBs. In OMA, one RB can be accessed by only one UE. In NOMA, multiple UEs can be clustered together to access the same RB by SIC.
Increasing the number of UEs in SIC, however, leads to fast growing decoding complexity~\cite{7676258,2016arXiv161101607S}. In previous works, it has been
demonstrated that most of the possible performance improvement by SIC is reached by pairing as few as two UEs \cite{7676258,2016arXiv161101607S,7273963,R1-153332,7582424,7357604}. Pairing two UEs for resource sharing is illustrated in~\figurename~\ref{fig:RB}.
\begin{figure}[!t]
\centering
\includegraphics[width=\linewidth]{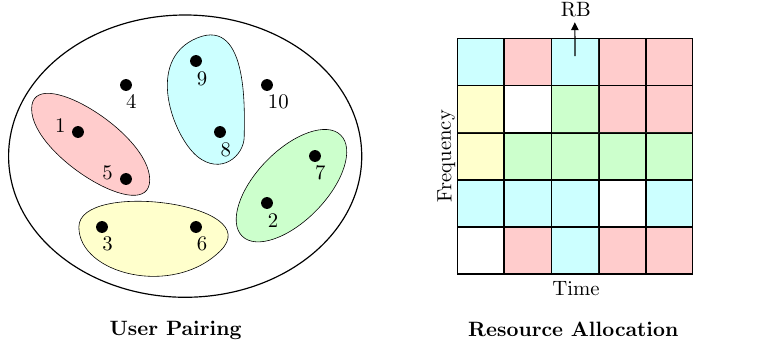}
\caption{This figure illustrates user pairing and time-frequency resource sharing. There are $10$ UEs in one cell. Eight form four user pairs $\{1,5\}$, $\{2,7\}$, $\{3,6\}$, $\{8,9\}$, and the other two UEs $4$ and $10$ are unpaired. The UEs within one pair share the same time-frequency resource as indicated by the colors.}
\label{fig:RB}
\end{figure}
UEs within one pair share the same RB and the RBs allocated to different pairs do not overlap.
We use $\mathpzc{u}$ as a generic notation for a user pair (referred to as ``pair'' for simplicity). 
For cell $i$ ($i\in\I$), denote by $\U_i$ the set of candidate pairs. Suppose there are in total $m_i$ UEs in cell $i$. Then $|\U_i|$ is up to ${m_i \choose 2}$. Denote by $\V_j$ ($j\in\J$) the set of pairs containing UE $j$. Let $\U=\bigcup_{i\in\I}\U_{i}$ (or equivalently $\U=\bigcup_{j\in\J}\V_j$) be the set of candidate pairs of all cells. Let $s=|\U|$. If there is a need to differentiate between pairs, we put indices on $\u$, i.e., $\U=\{\u_1,\u_2,\u_3,\ldots \u_s\}$. Finally, in our model, for UEs we allow for both OMA and NOMA with SIC. For each UE, that which mode is used (or both can be used) is determined by optimization. In the following, we refer to these two modes as   \textit{orthogonal RB allocation} and \textit{non-orthogonal RB allocation}, respectively. In general NOMA, we include both modes.

\subsection{NOMA Downlink}

We first consider orthogonal RB allocation in NOMA.
Let $p_i$ be the transmission power per RB in cell $i$ ($i\in\I$). Denote by $g_{ij}$ the \textcolor[rgb]{0,0,0}{channel coefficient} from cell $i$ to UE $j$. 
The signal-to-interference-and-noise ratio (SINR) is:
\begin{equation}
\gamma_j = \frac{p_{i}g_{ij}}{\sum_{k\in\I\backslash\{i\}}
   I_{kj}+\sigma^2}.
\label{eq:sinrOMA}
\end{equation}
\textcolor[rgb]{0,0,0}{The term $I_{kj}$ denotes the inter-cell interference from cell $k$ to UE $j$, and is possibly zero. This generic notation is used for the sake of presentation. Later, we use the load-coupling model, where the cell load that reflects the usage of RBs governs the amount of interference.} The term $\sigma^2$ is the noise power.

We then consider non-orthogonal RB allocation in NOMA. In~\cite{tse2005fundamentals} (Chapter 6.2.2, pp. 238) it is shown that, with superposition coding, one UE of pair $\u$ ($\u\in\U$) can decode the other by SIC.
When there is need to consider the decoding order in $\u$, to be intuitive, we use $\oplus$ to denote the UE that applies interference cancellation, followed by decoding its own signal. And $\ominus$ denotes the UE that only decodes its own signal. \textcolor[rgb]{0,0,0}{Note that both $\oplus$ and $\ominus$ are generic notations and refer to the two different users in any pair $\u$ ($\u\in\U$) in consideration.}
For any pair $\u$, $p_i$ is divided to $q_{\oplus\u}$ and $q_{\ominus\u}$ ($q_{\oplus\u}+q_{\ominus\u}=p_i$), with $q_{\oplus\u}$ and $q_{\ominus\u}$ being allocated to $\oplus$ and $\ominus$, respectively. (The generic notation $q_{j\u}$ ($j\in\u$) denotes the power allocated to UE $j$.) We remark that $\oplus$ decodes $\ominus$'s signal first and hence $\ominus$'s signal does not compose the interference for $\oplus$. The SINR of $\oplus$ is computed by~\eqref{eq:sinr+}.
\begin{equation}
\gamma_{\oplus\u} = \frac{q_{\oplus\u}g_{i\oplus}}{\sum_{k\in\I\backslash\{i\}}
   I_{k\oplus}+\sigma^2}.
\label{eq:sinr+}
\end{equation}
The UE $\ominus$ is subject to intra-cell interference from $\oplus$, i.e., 
\begin{equation}
\gamma_{\ominus\u} = \frac{q_{\ominus\u}g_{i\ominus}}
{\underbrace{
q_{\oplus\u}g_{i\ominus}}_{\textnormal{intra-cell}} 
+\underbrace{\sum_{k\in\I\backslash\{i\}}
   I_{k\ominus}
}_{\textnormal{inter-cell}}
      +\sigma^2
}.
\label{eq:sinr-}	
\end{equation}

Denote by $\vec{q}$ the power allocation of all candidate pairs:
\[
\vec{q}=\left[
\begin{matrix}
q_{\oplus\u_1}  & q_{\oplus\u_2}  & \cdots & q_{\oplus\u_s}\\
q_{\ominus\u_1} & q_{\ominus\u_2} & \cdots & q_{\ominus\u_s}
\end{matrix}\right].
\]
We use $\vec{q}_{\u}$ to represent the column of pair $\u$ ($\u\in\U$) in $\vec{q}$, named \textit{power split} for $\u$. 
We remark that it is not necessary to use all the pairs in $\U$ for resource sharing. Whether or not a pair would be put in use and allocated with RBs is determined by optimization, discussed later in Section~\ref{subsec:RB-alloc}. 
In addition, we remark that the decoding order is not constrained by the power split~\cite{tse2005fundamentals}, even though by our numerical results, more power is always allocated to $\ominus$ in optimal solutions. The issue of the influence of inter-cell interference on the decoding order is addressed later in Section~\ref{subsec:decoding}.

\subsection{Inter-cell Interference Modeling}
\label{subsec:averaging}

\textcolor[rgb]{0,0,0}{The basic idea is to use the cells' RB consumption levels to characterize respectively the cell's likelihood of interfering to the others. The approach is specified as follows.}
Denote by $\rho_{k}$ the proportion of RBs allocated for serving UEs in cell $k$. \textcolor[rgb]{0,0,0}{The intuition behind the model is partially explained by the two extreme cases $\rho_k=1$ and $\rho_k=0$.} If cell $k$ is fully loaded, meaning that all RBs are allocated, then $\rho_{k}=1$. In the other extreme case, cell $k$ is idle and accordingly $\rho_{k}=0$. Consider any UE $j$ served by cell $i$. The interference $j$ receives from cell $k$ is $I_{kj}=p_{k}g_{kj}$ or $I_{kj}=0$ in the two cases, respectively. 
In general, $\rho_k$ serves as a scaling parameter for interference, see~\eqref{eq:Ikj}. \textcolor[rgb]{0,0,0}{By the interference modeling approach, the cell load directly translates to the scaling effect of interference and therefore the same notation is used for both.}
\begin{equation}
I_{kj} = p_{k}g_{kj}\rho_{k}	.
\label{eq:Ikj}
\end{equation}

Intuitively, $\rho_{k}$ reflects the likelihood that a UE outside cell $k$ receives interference from $k$. Note that $\rho_k$ in fact is the amount of time-frequency resource consumption of cell $k$ and hence is referred to as \textit{the load of cell} $k$. 

\textcolor[rgb]{0,0,0}{
We remark that this type of interference modeling approach is a suitable approximation for network-level performance analysis, which enables study of inter-cell interference in large-scale multi-cell networks without having to modeling micro-level interference. Detailed system-level simulations (e.g. \cite{6363999} and \cite{7273956}) have shown that this type of modeling has sufficient accuracy for cell-level interference characterization. This approach has been widely used and is getting increasingly popular \cite{7959870,7132788,5198628,6204009,6732895,7585124,7880696,5450287,5489842,6292896,6363999,6479364,6747283,6924853,6815652,7151124,6887352,7480379,7273956,7332797,7962728,7744690,DBLP:journals/corr/abs-1710-09318}, which however, to our best knowledge, are limited to OMA. We provide analytical results in order to extend the modeling approach to NOMA. }

\begin{tcolorbox}[boxrule=0.5pt, colback=yellow!5]
\textbf{We remark that this section can be dropped without affecting any of the theoretical results in this paper. Please refer to our note in \href{https://arxiv.org/pdf/1909.08651.pdf}{\texttt{https://arxiv.org/pdf/1909.08651.pdf}} for more details.}
\subsection{Determining Decoding Order}
\label{subsec:decoding}
Inter-cell interference affects the decoding order in NOMA, and thus how to model the load-coupling in NOMA is significantly more challenging than OMA.
Lemma~\ref{rmk:decoding} below resolves this issue by identifying pairs for which the decoding order can be determined independently of interference. 
As another benefit, it significantly reduces the set of candidate pairs.
\begin{lemma}
\textcolor[rgb]{0,0,0}{For $\u=\{j,h\}$ ($g_{ij}\geq g_{ih}$) in cell $i$, if $g_{ij}/g_{ih}\geq g_{kj}/g_{kh}$ ($k\in\C\backslash\{i\}$), then SIC at $j$ decodes first the signal for $h$, followed by decoding its own signal, and, user $h$ does not apply SIC. That is, the decoding order $\oplus=j$ and $\ominus=h$ always hold for pair independent of interference.}
\label{rmk:decoding}
\end{lemma}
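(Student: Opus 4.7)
My plan is to reduce the claim to showing that user $j$'s effective channel-to-interference-plus-noise ratio dominates $h$'s regardless of the load vector $\rho=(\rho_k)_{k\in\C\setminus\{i\}}$, and then extract the cross-ratio hypothesis of the lemma as exactly the termwise condition that makes this dominance hold. The standard downlink-NOMA fact I would invoke (as in~\cite{tse2005fundamentals}, after absorbing inter-cell interference into an effective noise variance per user) is that in a two-user superposition-coded link the user with the larger $g_{i\bullet}/N_\bullet$ is the one who can successfully perform SIC, where $N_j:=\sum_{k\in\C\setminus\{i\}} I_{kj}+\sigma^2$ and analogously for $h$. Therefore it suffices to prove $g_{ij}N_h\geq g_{ih}N_j$ for every admissible $\rho$, from which $\oplus=j$ and $\ominus=h$ follow.

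First I would substitute $I_{kj}=p_k g_{kj}\rho_k$ from~\eqref{eq:Ikj} into $g_{ij}N_h-g_{ih}N_j$ and separate the noise and interference contributions, arriving at
\begin{equation*}
(g_{ij}-g_{ih})\sigma^2\;+\;\sum_{k\in\C\setminus\{i\}} p_k\rho_k\bigl(g_{ij}g_{kh}-g_{ih}g_{kj}\bigr).
\end{equation*}
The first summand is nonnegative by the hypothesis $g_{ij}\geq g_{ih}$. For each cell $k$ in the sum the factor $p_k\rho_k$ is nonnegative, and, assuming strictly positive gains, the bracket $g_{ij}g_{kh}-g_{ih}g_{kj}$ is nonnegative precisely when $g_{ij}/g_{ih}\geq g_{kj}/g_{kh}$, which is the second hypothesis of the lemma. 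Hence the expression is nonnegative termwise, uniformly in $\rho$.

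The termwise nature of this argument is what delivers the strong ``independent of interference'' conclusion in the statement: no equilibrium property of $\rho$ is invoked, only nonnegativity of each $\rho_k$. I expect the main obstacle to be not any heavy calculation but the clean phrasing of the underlying NOMA fact that ties the decoding order to the effective SINR, so that the multi-cell assertion can be obtained from a single-cell result by treating $\sum_{k\neq i} I_{k\bullet}+\sigma^2$ as the per-user effective noise variance. As a useful by-product I would also remark that for every pair satisfying the hypothesis, the two orientations of $\u=\{j,h\}$ in the candidate set $\U_i$ collapse into a single one, trimming $\U_i$ before the combinatorial work of \textsc{S-Cell} begins, which explains the ``significantly reduces the set of candidate pairs'' comment preceding the lemma.
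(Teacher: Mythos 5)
Your proposal is correct and follows essentially the same route as the paper: the paper likewise reduces the claim to the inequality $\gamma_{hj}\geq\gamma_{hh}$, observes that the power-split terms cancel, and arrives at exactly your termwise expression $(g_{ij}-g_{ih})\sigma^2+\sum_{k\in\C\setminus\{i\}}p_k\rho_k\bigl(g_{ij}g_{kh}-g_{ih}g_{kj}\bigr)\geq 0$, concluded by the same two hypotheses. The only difference is cosmetic: where you quote the effective-channel-ordering fact ($g_{i\bullet}/N_\bullet$ determines who can perform SIC, independent of the power split), the paper derives it explicitly from the two SINR expressions, showing the cancellation of $q_{j\u}g_{ij}g_{ih}$ on both sides.
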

\begin{proof}
Denote by $\gamma_{hj}$ and $\gamma_{hh}$ respectively in  \eqref{eq:-+} and \eqref{eq:--} the SINRs at users $j$ and $h$ for the downlink signal of $h$.
\begin{equation}
    \gamma_{hj}=\frac{q_{h\u}g_{ij}}{q_{j\u}g_{ij}+
        \sum_{k\in\I\backslash\{i\}}
  p_{k}g_{kj}\rho_k
+\sigma^2}.
\label{eq:-+}
\end{equation}
\begin{equation}
    \gamma_{hh}=\frac{q_{h\u}g_{ih}}{q_{j\u}g_{ih}+
    \sum_{k\in\I\backslash\{i\}}
  p_{k}g_{kh}\rho_k+\sigma^2}.
  \label{eq:--}
\end{equation}
With superposition coding, $j$ cancels the interference from $h$ if $j$ can decode any data that $h$ can decode~\cite{tse2005fundamentals}, i.e. $\gamma_{hj}\geq\gamma_{hh}$, which reads:
\begin{multline*}
q_{j\u}g_{ij}g_{ih}+g_{ij}\sum_{k\in\I\backslash\{i\}}
    p_{k}g_{kh}\rho_{k}+g_{ij}\sigma^2 \\
    \geq q_{j\u}g_{ij}g_{ih}+g_{ih}\sum_{k\in\I\backslash\{i\}}p_{k}g_{kj}\rho_k+g_{ih}\sigma^2.
\end{multline*}
Further, $\gamma_{hj}\geq\gamma_{hh}$ if and only if:
\begin{equation}
\sum_{k\in\I\backslash\{i\}}
\frac{p_{k}\rho_{k}}{\sigma^2}(g_{ih}g_{kj}-g_{ij}g_{kh}) 
\leq (g_{ij}-g_{ih}).
\label{eq:decoding}
\end{equation}
Recall that $g_{ij}\geq g_{ih}$, and therefore the right-hand side of~\eqref{eq:decoding} is non-negative. \textcolor[rgb]{0,0,0}{Because of the condition $g_{ij}/g_{ih}\geq g_{kj}/g_{kh}$ for all $k\in\C\backslash\{i\}$ in the statement of the lemma, the left-hand side is non-positive.} Hence Lemma~\ref{rmk:decoding}.
\end{proof}

\end{tcolorbox}
\begin{tcolorbox}[boxrule=0.5pt, colback=yellow!5]
The result of Lemma~\ref{rmk:decoding} is coherent with the previous observations that
two UEs with large difference in channel conditions are preferred
to be paired~\cite{7273963,7357604}. 

If $g_{ij}\gg g_{ih}$, then most likely the condition in Lemma~\ref{rmk:decoding} holds, as the large scale path-loss from other cells, tends not to differ as much as from the serving cell $i$ in this case. 
Besides, the large scale path-loss is a practically reasonable factor for ranking the decoding order~\cite{7307220,6954404}.
\textcolor[rgb]{0,0,0}{In Section~\ref{sec:simulation}, numerical results further show that considering the UE pairs as defined by Lemma~\ref{rmk:decoding} virtually does not lead to any loss of performance.
 Lemma 1 is used to filter the candidate pairs set $\U$ (i.e. to drop some candidate pairs from $\U$) so as to reduce computational complexity.} From now on, we let $\U_i$ be composed of pairs satisfying Lemma~\ref{rmk:decoding}.
\end{tcolorbox}

\subsection{RB Allocation}
\label{subsec:RB-alloc}

\textcolor[rgb]{0,0,0}{If UE $j$ ($j\in\J$) is using orthogonal RB allocation, then the achievable capacity}\footnote{For the sake of presentation, we use the natural logarithm throughout the paper. We remark that all conclusions hold for the logarithm to base $2$.} \textcolor[rgb]{0,0,0}{of $j$ is~\eqref{eq:cj}, with $\gamma_j$ being~\eqref{eq:sinrOMA}.}
\begin{equation}
c_{j}=\log(1+\gamma_j).
\label{eq:cj}
\end{equation}

For non-orthogonal RB allocation, the achievable capacity for $j$ and $\u$ ($j\in\u$) is computed by $c_{j\u}=MB\log\left(1+\gamma_{j\u}\right)$ with $\gamma_{j\u}$ being~\eqref{eq:sinr+} or~\eqref{eq:sinr-}. Therefore,
\begin{equation}
    c_{j\u}= \left\{
\begin{array}{ll}
 \log\left(1+\gamma_{j\u}\right) & j\in\u \\
 0 & j\notin\u
\end{array}\right..
\label{eq:cju}
\end{equation}

For UE $j$ ($j\in\J$), we use $x_j$ to denote the proportion of RBs with orthogonal RB allocation to $j$. For any pair $\u$ ($\u\in\U$), denote by $x_{\u}$ the non-orthogonal RB allocation for the two UEs in pair $\u$. We use the vector $\vec{x}$ to represent the RB allocation for all the UEs, i.e., 
\[
\vec{x}=[\underbrace{x_1,x_{2},\ldots,x_{m}}_{\text{Orthogonal RB allocation}},\underbrace{x_{\u_1},x_{\u_2},\ldots,x_{\u_s}}_{\text{Non-orthogonal RB allocation}}].
\] 
\textcolor[rgb]{0,0,0}{For any UE $j$, $x_j=0$ means that UE $j$ does not use orthogonal RB allocation.} Similarly, for any pair $\u$, $x_{\u}=0$ means that pair $\u$ is not put in use. For any UE $j$, if $x_{\u}=0$ for all $\u\in\V_j$, then it means that UE $j$ only uses orthogonal RB allocation. \textcolor[rgb]{0,0,0}{Resources used by different pairs are orthogonal such that there is no interference among pairs.}
\textcolor[rgb]{0,0,0}{Denote by $\bar{\rho}$ the cell load limit. By constraining that the sum of them which equals to the load of cell $i$ does not exceed $\bar{\rho}$, the amounts represented by $x_j$ ($j\in\J_i$) and $x_{\u}$ ($\u\in\U_i$) do not overlap. Orthogonal RB allocation is considered among the pairs in one cell, meaning that the pairs do not have interference with each other.}
\begin{equation}
    \underarrow[\rho_i][\uparrow]{$\substack{\text{Cell} \\ \text{load}}$}=\underbrace{\sum_{j\in\J_i} x_j}_{\substack{\text{Orthogonal} \\ \text{RB proportion}}}
     + \underbrace{\sum_{\u\in\U_i} x_{\u}}_{\substack{\text{Non-orthogonal} \\ \text{RB proportion}}} \leq \underarrow[\bar{\rho}][\uparrow]{$\substack{\text{Load} \\ \text{limit}}$}.
\label{eq:rhoi}
\end{equation}
We use $\bm{\rho}$ to represent the vector of network load, i.e.,
\[
\bm{\rho}=[\rho_1,\rho_2,\ldots,\rho_n].
\]
\textcolor[rgb]{0,0,0}{Similarly, we use vector $\bar{\bm{\rho}}$ to denote the load limits of all cells.}

The term $c_{j}x_j$ computes the bits delivered to UE $j$ with orthogonal RB allocation, because $c_j$ is the achievable capacity of UE $j$ on all RBs and $x_j$ is the proportion of RBs with orthogonal RB allocation.
 Similarly, the term $c_{j\u}x_\u$ is the bits delivered to UE $j$ by non-orthogonal RB allocation for pair $\u$. Denote by $d_j$ the bits demand of UE $j$. The quality-of-service (QoS) requirement is:
\begin{equation}
\underbrace{c_{j}x_j}_{\substack{\text{Bits delivered} \\ \text{by orthogonal} \\ \text{RB allocation}}}+\underbrace{\sum_{\u\in\V_j}c_{j\u}x_{\u}}_{\substack{\text{Bits delivered by} \\ \text{non-orthogonal} \\ \text{RB allocation} }}\geq \underarrow[d_j][\uparrow]{$\substack{\text{Bits} \\ \text{demand}}$}.
\label{eq:dj}
\end{equation}
\textcolor[rgb]{0,0,0}{We remark that $d_j$ is normalized by the RB spectral bandwidth and the total number of RBs, for the sake of presentation.} \textcolor[rgb]{0,0,0}{Note that a user can use orthogonal RB allocation individually, or non-orthogonal RB allocation with the other user in the pair, or both, which is subject to optimization. The amount of allocated RBs to a user in OMA or a pair adopting NOMA, is subject to optimization, under the constraint that the overall allocated resource does not exceed limit.}


We remark that there is an implicit \textit{pair selection} problem in the above expressions. Note that $|\U|$ increases fast with $|\J|$. It is therefore impractical to simultaneously use all pairs in $\U$. To deal with this issue, each UE is allowed to use up to one pair in $\U$ for optimization, as formulated later in Section~\ref{sec:formulation}, though our system model is not limited by this. The problem of pairing and resource allocation is challenging: First, UEs of the same pair are coupled in resource allocation. Second, one can observe that increasing $x_{\u}$ (or $x_j$) for some pair $\u$ (or some UE $j$) may enhance the throughput of the UEs of $\u$ (or UE $j$). However, since $x_{\u}$ (or $x_j$) appears in the inter-cell interference term (see \eqref{eq:Ikj} and \eqref{eq:rhoi}), the increase of $x_{\u}$ (or $x_j$) results in less available resources for other UEs and leads to more interference. \textcolor[rgb]{0,0,0}{The user pairing selection is not given a priori but is determined by optimization.} \textcolor[rgb]{0,0,0}{We remark that whether or not a UE should be allocated with resources with OMA or NOMA, or both, is up to optimization. The overall amount of resource used by NOMA and OMA in the entire network are part of the optimization output.}


\subsection{Comparison to OMA Modeling}
The models proposed for OMA in~\cite{7959870,7132788,5198628,6204009,6732895,7585124,7880696} are inherently a special case of our NOMA model. The former is obtained by setting $\U=\phi$. Then, the terms for non-orthogonal RB allocation disappear in~\eqref{eq:rhoi} and~\eqref{eq:dj} and $\vec{x}$ is therefore eliminated in \eqref{eq:cj}--\eqref{eq:dj}. Also, there is no power split in OMA. Hence \eqref{eq:cj}--\eqref{eq:dj} form a non-linear system only in terms of $\bm{\rho}$. This system falls into the analytical framework of standard interference function (SIF)~\cite{414651}, which enables the computation of the optimal network load settings via fixed-point iterations~\cite{6204009}. However, for the general NOMA case, the resource allocation is not at UE-level.
One needs to split a UE's demand between orthogonal and non-orthogonal RB allocations, which results in a new dimension of complexity.


\section{Problem Formulation}
\label{sec:formulation}

 By successively plugging \eqref{eq:sinrOMA} and \eqref{eq:Ikj} into \eqref{eq:cj}, we obtain a function $c_{j}$ in load $\bm{\rho}$, i.e., $c_{j}(\bm{\rho})$. Similarly, we obtain $c_{j\u}(\vec{q},\bm{\rho})$ from \eqref{eq:sinr+}, \eqref{eq:sinr-}, \eqref{eq:Ikj}, and \eqref{eq:cju}. For pair $\u$ ($\u\in\U$), we use a binary variable $y_{\u}$ to indicate whether or not the pair $\u$ is selected. Define $\vec{y}$ as
\[
\vec{y} = [y_{\u_1},y_{\u_2},\ldots,y_{\u_s}].
\]
We minimize a generic cost function \textit{$F(\bm{\rho})$ that is monotonically (but not necessarily strictly monotonically) increasing in each element of $\bm{\rho}$}. \minf~is given below.
Constraints~\eqref{eq:minf-rhobar} guarantee that the cell load complies to the load limit $\bar{\rho}$. Constraints~\eqref{eq:minf-rho} state the relationship between RB allocation and cell load. Constraints~\eqref{eq:minf-d} and \eqref{eq:minf-p} are for QoS and power, respectively. Constraints~\eqref{eq:minf-xy} guarantee that RB allocation occurs only for selected pairs. By constraints~\eqref{eq:minf-y}, each UE belongs up to one pair such that the selected pairs are mutually exclusive. The variables are cell load $\bm{\rho}$, power allocation $\vec{q}$, RB allocation $\vec{x}$, and pair selection $\vec{y}$. The variable domains are imposed by~\eqref{eq:minf-rhoqx} and~\eqref{eq:minf-integer}. Throughout this paper, we use $\vec{0}$ to represent zero vector/matrix. For simplicity, the dimension(s) of $\vec{0}$ is not explicitly stated. 

\begin{subequations}
\begin{alignat}{2}
[\minf]\quad &
\min\limits_{\bm{\rho},\vec{q},\vec{x},\vec{y}} F(\bm{\rho}) \\
\textnormal{s.t.} \quad 
&  \rho_i\leq \bar{\rho},~i\in\I \label{eq:minf-rhobar}\\
& \rho_i=\sum_{j\in\J_i} x_j + \sum_{\u\in\U_i} x_{\u},~i\in\I \label{eq:minf-rho}\\
&     c_{j}(\bm{\rho})x_j+\sum_{\u\in\V_j}c_{j\u}(\vec{q},\bm{\rho})x_{\u}\geq d_j,~j\in\J \label{eq:minf-d}\\
& \sum_{j\in\u}q_{j\u}= p_i,~\u\in\U_i,~i\in\I \label{eq:minf-p}\\
& x_{\u}\leq y_{\u},~\u\in\U \label{eq:minf-xy}\\
& \sum_{\u\in\V_j}y_{\u}\leq 1,~j\in\J \label{eq:minf-y}\\
& \bm{\rho},\vec{q},\vec{x}\geq\vec{0} \label{eq:minf-rhoqx}\\
& y_{\u} \in \{0,1\},~\u\in\U \label{eq:minf-integer}
\end{alignat}
\label{eq:minF}
\end{subequations}

\section{Optimization within a Cell}
\label{sec:single}

\textcolor[rgb]{0,0,0}{In multi-cell NOMA, due to the interference among cells, one cell's pair selection may affect the other cells' power splits, and vice versa. Let us consider a simple case in this section. 
Suppose we optimize the load of one cell $i$, and the cell load levels of $\I\backslash\{i\}$ are temporarily fixed.
This optimization step is a module for solving \minf~later in Section~\ref{sec:multiple}.}
We respectively use $\vec{q}_i$, $\vec{x}_i$, $\vec{y}_i$ to denote the corresponding variable elements for power allocation, RB allocation, and pair selection. Vector $\bm{\rho}_{-i}$ is composed of all elements but $\rho_i$ of $\bm{\rho}$. We minimize $\rho_i$ under fixed $\bm{\rho}_{-i}$, as formulated below. 
\begin{equation}
\min\limits_{\rho_i,\vec{q}_i,\vec{x}_i,\vec{y}_i} \rho_i~\text{s.t. \eqref{eq:minf-rho}--\eqref{eq:minf-integer} of cell $i$, with fixed $\bm{\rho}_{-i}$.} 
\label{eq:minload}
\end{equation}
Since  $\bm{\rho}_{-i}$ is fixed, $c_j$ is a constant
and $c_{j\u}$ is a function in $\vec{q}_i$ only. \textcolor[rgb]{0,0,0}{Different from previous works \cite{7560605} and \cite{7587811}, this single-cell resource optimization problem is subject to user demand constraints.}

\textcolor[rgb]{0,0,0}{The optimization is not straightforward even under fixed inter-cell interference. The optimal power split for one pair is up to how much time-frequency resource is allocated to this pair. In other words, for one pair $\u$, if the amount of RBs allocated to $\u$ changes, the optimal power split for $\u$ before this change loses its optimality. So the power split $\vec{q}$ and the resource allocation $\vec{x}$ are coupled together. In addition, the pair selection is a combinatorial problem. Therefore, the power split $\vec{q}$, the time-frequency resource allocation $\vec{x}$, and the user pair selection $\bm{y}$, must be optimized jointly.}

\begin{lemma}
All constraints of \eqref{eq:minf-d} in \eqref{eq:minload} hold as equalities at any optimum.
\label{lma:equality}
\end{lemma}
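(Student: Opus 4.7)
The plan is to argue by contradiction: suppose $(\rho_i^\ast,\vec{q}_i^\ast,\vec{x}_i^\ast,\vec{y}_i^\ast)$ is an optimum of \eqref{eq:minload} at which the demand constraint \eqref{eq:minf-d} of some user $j\in\J_i$ holds with strict slack, and exhibit a small perturbation that keeps the whole solution feasible while strictly decreasing $\rho_i$. A convenient feature of the single-cell subproblem is that $\bm{\rho}_{-i}$ is frozen, so $c_j$ is a constant and each $c_{j\u}$ depends only on the power split $\vec{q}_{\u}$; neither $\rho_i$ nor the vector $\vec{x}_i$ enters any rate expression. Hence every perturbation below affects only the constraints it visibly touches.

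First I would dispose of the easy case $x_j^\ast>0$. Decreasing $x_j^\ast$ by a sufficiently small $\epsilon>0$ preserves the strict slack of $j$'s demand constraint, leaves every other UE's demand expression unchanged (since $x_j$ appears only in \eqref{eq:minf-rho} for cell $i$ and in $j$'s own constraint in \eqref{eq:minf-d}), and drops $\rho_i$ by $\epsilon$, contradicting optimality.

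The substantive case is $x_j^\ast=0$. Then the positive excess over $d_j$ must be produced entirely by non-orthogonal allocation, so by \eqref{eq:minf-xy}--\eqref{eq:minf-y} there is a unique selected pair $\u=\{j,h\}$ with $y_\u^\ast=1$, $x_\u^\ast>0$, and $q_{j\u}^\ast>0$ (otherwise $c_{j\u}=0$ from \eqref{eq:sinr+}--\eqref{eq:sinr-} and the strict excess for $j$ would be impossible). If $h$'s demand constraint is also strict, then lowering $x_\u^\ast$ by a small $\epsilon>0$ keeps both constraints feasible and reduces $\rho_i$ by $\epsilon$; contradiction again.

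The remaining and only delicate subcase is when $j$'s constraint is strict while $h$'s is tight, because a direct reduction of $x_\u^\ast$ would now violate $h$'s demand. Here I would first perturb only the power split inside $\u$, transferring a small amount $\delta>0$ from $q_{j\u}^\ast$ to $q_{h\u}^\ast$ while preserving $q_{j\u}+q_{h\u}=p_i$. From \eqref{eq:sinr+}--\eqref{eq:sinr-}, $c_{j\u}$ is strictly monotone in one direction of this shift and $c_{h\u}$ strictly in the opposite, and by Lemma~\ref{rmk:decoding} the decoding order in $\u$ is determined by the channel coefficients alone, so no feasibility question about SIC is raised by the reallocation. Choosing $\delta$ small enough preserves $j$'s strict slack while converting $h$'s binding constraint into a strict one. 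I am then back in the previous subcase, and a subsequent small decrease of $x_\u^\ast$ produces the required contradiction. The main obstacle throughout is precisely this two-user coupling inside a shared pair: the power-shifting step, together with the strict monotonicity of the SINRs and the interference-independence of the decoding order granted by Lemma~\ref{rmk:decoding}, is what lets the contradiction argument go through uniformly.
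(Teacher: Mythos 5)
Your proof is correct, and its skeleton is the same as the paper's: suppose some constraint of \eqref{eq:minf-d} is slack at an optimum, then exhibit a small feasibility-preserving perturbation that strictly lowers $\rho_i$. The case $x_j^\ast>0$ is handled identically in both. The difference lies in the case $x_j^\ast=0$: the paper dismisses it in one sentence (``the same argument applies to variable $x_{\u}$''), which silently assumes that shrinking $x_{\u}$ does not break the partner $h$'s demand constraint; when $h$'s constraint is tight this assumption fails, and that is exactly the subcase you isolate as delicate. Your extra step --- shifting a small amount of power $\delta$ from $q_{j\u}$ to $q_{h\u}$, which by the strict monotonicity of the SINRs \eqref{eq:sinr+}--\eqref{eq:sinr-} in the split strictly increases $c_{h\u}$ (and hence, since $x_{\u}^\ast>0$, strictly increases $h$'s left-hand side), while continuity preserves $j$'s slack for $\delta$ small --- converts $h$'s tight constraint into a slack one, after which the reduction of $x_{\u}$ goes through and yields the contradiction. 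Your supporting observations are also the right ones: $q_{j\u}^\ast>0$ is forced (otherwise $c_{j\u}=0$ and $j$'s strict excess with $x_j^\ast=0$ is impossible), uniqueness of the selected pair follows from \eqref{eq:minf-xy}--\eqref{eq:minf-y}, and Lemma~\ref{rmk:decoding} guarantees the decoding order is unaffected by the power reallocation. In short, your argument is the paper's argument made rigorous at the one point where the two users of a pair are genuinely coupled; what the paper's version buys is brevity, what yours buys is a complete case analysis.
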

\begin{proof}
Denote the optimal objective value of~\eqref{eq:minload} by $\rho'_i$ and the optimal orthogonal RB allocation of $j$ ($j\in\J_i$) by $x'_j$. Suppose strict inequality holds for some $j$. If $x'_j>0$, by fixing all other variables except for $x_j$ in~\eqref{eq:minload}, one can verify that the solution $x'_j-\epsilon$ ($\epsilon>0$) is still feasible to~\eqref{eq:minload} as long as $\epsilon$ is sufficiently small. In addition,  $x'_j-\epsilon$ leads to a lower objective value $\rho'_i-\epsilon$, which contradicts our assumption that $\rho'_i$ is the optimal objective value. If $x'_j=0$, then $j$'s demand has to be satisfied by non-orthogonal RB allocation and the same argument applies to variable $x_{\u}$ ($j\in\u$). 
\end{proof}
The first analytical result is that, \textit{the optimal power split is independent of pair selection}, in~Theorem~\ref{thm:independency}. Denote by $\Y_{\u}$ the set of all possible pairing solutions of~\eqref{eq:minload} that includes pair $\u$, i.e., 
\[
\Y_{\u}=\{\vec{y}_i | y_{\u}=1, \sum_{\u\in\V_j}y_{\u}\leq 1,~j\in\J_i \}.
\]
\begin{definition}
Given a pair selection $\hat{\vec{y}}_i$ ($\hat{\vec{y}}_i\in\Y_\u$), 
the optimal power split for pair $\u$ ($\u\in\U_i$), denoted by $\hat{\vec{q}}_{\u}$, is the column for pair $\u$ in $\hat{\vec{q}}_i$, where $\hat{\vec{q}}_i$ is obtained by optimally solving \eqref{eq:minload} for $\hat{\vec{y}}_i$.
\label{def:opt_splitting}
\end{definition} 
\begin{theorem}
Consider $\u$ ($\u\in\U_i$).
The optimal power split for any $\hat{\vec{y}}_i$ ($\hat{\vec{y}}_i\in\Y_{\u}$) is also optimal for $\hat{\vec{y}}'_i$ ($\hat{\vec{y}}'_i\in\Y_{\u}$).
\label{thm:independency}
\end{theorem}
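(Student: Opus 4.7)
My plan is to show that, once the pair-selection vector $\hat{\vec{y}}_i$ is fixed, problem~\eqref{eq:minload} decouples into independent subproblems, one per selected pair (plus trivial ones per unpaired UE), and that the subproblem associated to a given $\u\in\U_i$ is \emph{identical} under every $\hat{\vec{y}}_i\in\Y_\u$. Identical subproblems have identical optimal $\vec{q}_\u$, which delivers the claim.

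First, I would apply Lemma~\ref{lma:equality} to convert each demand constraint~\eqref{eq:minf-d} into an equality at optimum, and use~\eqref{eq:minf-xy} together with the fixed $\hat{\vec{y}}_i$ to eliminate $x_{\u'}$ for every $\u'$ with $\hat{y}_{\u'}=0$. Constraint~\eqref{eq:minf-y} guarantees that each UE $j\in\J_i$ belongs to at most one selected pair, so its demand equality reduces to either $c_j x_j=d_j$ (when $j$ is unpaired in $\hat{\vec{y}}_i$) or $c_j x_j + c_{j\u}(\vec{q}_\u)x_\u = d_j$ (when $j\in\u$ and $\hat{y}_\u=1$). Crucially, neither form involves any power or RB variable of a pair other than $\u$, because $c_j$ is frozen by the fixed $\bm{\rho}_{-i}$ and $c_{j\u}$ depends on $\vec{q}$ only through $\vec{q}_\u$.

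Second, I would observe that the objective $\rho_i=\sum_{j\in\J_i}x_j+\sum_{\u\in\U_i}x_\u$ is additive, and that, together with the reduced constraint set, it separates by selected pair: the power-sum constraint~\eqref{eq:minf-p} acts only on $\vec{q}_\u$, and the demand equalities of $\u$'s two UEs involve only $\vec{q}_\u$, $x_\u$, and the individual $x_j,x_h$ for $j,h\in\u$. Hence each selected $\u=\{j,h\}$ produces the autonomous subproblem
\[
\min_{\vec{q}_\u,\,x_j,\,x_h,\,x_\u\geq 0}\; x_j+x_h+x_\u
\]
subject to $c_j x_j+c_{j\u}(\vec{q}_\u)x_\u=d_j$, $c_h x_h+c_{h\u}(\vec{q}_\u)x_\u=d_h$, and $q_{\oplus\u}+q_{\ominus\u}=p_i$. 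All quantities appearing here depend only on $\u$ and on the fixed $\bm{\rho}_{-i}$, so the subproblem is entirely determined by $\u$, independently of the rest of $\hat{\vec{y}}_i$.

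Finally, Definition~\ref{def:opt_splitting} extracts precisely the $\vec{q}_\u$-component of a global optimum of the decomposed problem; since the $\u$-subproblem is identical under every $\hat{\vec{y}}_i\in\Y_\u$, any $\vec{q}_\u$ optimal for one such instance is optimal for any other, which is exactly Theorem~\ref{thm:independency}. The only delicate point, and the main obstacle I foresee, is ruling out hidden coupling through the cell-load cap $\rho_i\leq\bar{\rho}$ and confirming that no UE outside $\u$ can shoulder part of $\u$'s demand in an optimal solution; the former is harmless because we are \emph{minimizing} $\rho_i$, and the latter follows from~\eqref{eq:minf-y} and the elimination step above. With these two checks in hand, the decomposition is clean and the theorem follows.
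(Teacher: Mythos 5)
Your proof is correct, but it reaches the theorem by a genuinely different route than the paper's. The paper argues by contradiction and exchange: letting $\hat{\vec{q}}_{\u}$ and $\hat{\vec{q}}'_{\u}$ denote optimal splits under $\hat{\vec{y}}_i$ and $\hat{\vec{y}}'_i$, it distinguishes the case of equal capacities (then both splits are interchangeable) from the case $c_{j\u}(\hat{\vec{q}}_{\u})>c_{j\u}(\hat{\vec{q}}'_{\u})$, where it invokes Lemma~\ref{lma:equality} (demands are tight at any optimum) to conclude that substituting $\hat{\vec{q}}_{\u}$ for $\hat{\vec{q}}'_{\u}$ creates slack in \eqref{eq:minf-d} and hence improves the objective, a contradiction. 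You instead establish separability outright: with $\bm{\rho}_{-i}$ and $\hat{\vec{y}}_i$ fixed, the objective is additive and, thanks to \eqref{eq:minf-y} and \eqref{eq:minf-xy}, the constraints involving $\vec{q}_{\u}$, $x_{\u}$, and the orthogonal allocations of $\u$'s two UEs touch no other pair's variables, so the $\u$-subproblem is literally the same optimization problem for every $\hat{\vec{y}}_i\in\Y_{\u}$. Your route buys two things: it is immune to the trade-off implicit in the paper's second case (because of the power-sum constraint \eqref{eq:minf-p}, raising $c_{\oplus\u}$ generally lowers $c_{\ominus\u}$, so a component-wise capacity comparison needs care that your decomposition makes unnecessary), and it directly legitimizes the per-pair formulation \eqref{eq:min_pair} that the paper introduces only after the theorem. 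The paper's exchange argument, in turn, is shorter and stays entirely at the level of the original formulation \eqref{eq:minload}. One small remark: your concern about hidden coupling through the load cap is moot, since \eqref{eq:minload} is constrained only by \eqref{eq:minf-rho}--\eqref{eq:minf-integer}; the cap \eqref{eq:minf-rhobar} is not part of the single-cell problem, so no check is needed there.
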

\begin{proof}
Denote by $\hat{\vec{q}}_{\u}$ and $\hat{\vec{q}}'_{\u}$ the optimal power splits for $\hat{\vec{y}}_i$ and $\hat{\vec{y}}'_i$, respectively. Suppose $\hat{\vec{q}}_{\u}$ is not optimal for $\hat{\vec{y}}'_i$. 
There are two possibilities: 1) $c_{j\u}(\hat{\vec{q}}_{\u})=c_{j\u}(\hat{\vec{q}}'_{\u})$ ($j\in\u$); 2) $c_{j\u}(\hat{\vec{q}}_{\u})\neq c_{j\u}(\hat{\vec{q}}'_{\u})$ for at least one $j$ in $\u$. 

For 1), $\hat{\vec{q}}_{\u}$ and $\hat{\vec{q}}'_{\u}$ result in the same $x_j$ and $x_{\u}$ for satisfying~\eqref{eq:minf-d} and are equally good for~\eqref{eq:minload}, which conflicts our assumption. Thus $\hat{\vec{q}}_{\u}$ is optimal for $\hat{\vec{y}}'_i$. We then consider 2) and assume $c_{j\u}(\hat{\vec{q}}_{\u})>c_{j\u}(\hat{\vec{q}}'_{\u})$. By Lemma~\ref{lma:equality}, $\hat{\vec{q}}'_{\u}$ makes \eqref{eq:minf-d} become equality under $\hat{\vec{y}}'_i$. Replacing $\hat{\vec{q}}'_{\u}$ by $\hat{\vec{q}}_{\u}$ leads to some slack in \eqref{eq:minf-d} and hence the objective can be improved. This contradicts that $\hat{\vec{q}}'_{\u}$ is optimal for $\vec{y}'_i$. The same proof applies to $c_{j\u}(\hat{\vec{q}}_{\u})<c_{j\u}(\hat{\vec{q}}'_{\u})$. Hence the conclusion.
\end{proof}

By Theorem~\ref{thm:independency}, the optimal power split is decoupled from pair selection. 
Next we analytically prove how to find the optimal power split for any pair. 

\subsection{Finding Optimal Power Split}
\label{subsec:pair}

Under fixed $\vec{y}_i$ ($\vec{y}_i\in\Y_{\u}$, $\u\in\U_i$), constraints~\eqref{eq:minf-y} are removed. Constraints~\eqref{eq:minf-xy}, and \eqref{eq:minf-integer} of \eqref{eq:minload} for all $\u$ with $y_{\u}=0$ in $\vec{y}_i$ are removed. \textcolor[rgb]{0,0,0}{Therefore, for each pair $\u=\{\oplus,\ominus\}$, we can formulate a problem in~\eqref{eq:min_pair}. Solving this problem yields the optimal power split.} In~\eqref{eq:min_pair}, $x_{\oplus}$ and $x_{\ominus}$ are the orthogonal RB allocation for $\oplus$ and $\ominus$, respectively. The variable $x_{\u}$ denotes the amount of non-orthogonal RB allocation for $\u$.
\begin{subequations}
\begin{alignat}{2}
&
\min\limits_{\substack{x_{\oplus},x_{\ominus},x_{\u}\geq 0 \\ \vec{q}_{\u}\geq\vec{0}}}  x_{\oplus} + x_{\ominus} + x_{\u} \\
\text{s.t.} \quad &  c_{\oplus}(\bm{\rho}_{-i})x_{\oplus}+ c_{\oplus\u}(\vec{q}_{\u},\bm{\rho}_{-i})x_{\u} \geq d_{\oplus} \label{eq:min_decompd1}\\
& c_{\ominus}(\bm{\rho}_{-i})x_{\ominus} + c_{\ominus\u}(\vec{q}_{\u},\bm{\rho}_{-i})x_{\u} \geq d_{\ominus} \\
& q_{\oplus\u}+q_{\ominus\u}= p_i \label{eq:min_pair-p}
\end{alignat}
\label{eq:min_pair}
\end{subequations}

For deriving solution method for \eqref{eq:min_pair}, define function $w_j$ ($j=\oplus$ or $j=\ominus$) of $\bm{\rho}_{-i}$ as follows.
\begin{equation}
w_{j}(\bm{\rho}_{-i})=\left(\sum_{k\in\I\backslash\{i\}}p_{k}g_{kj}\rho_k+\sigma^2\middle)\right/g_{ij}.
\end{equation}
For $q_{\oplus\u}$, one can derive from~\eqref{eq:sinr+} and \eqref{eq:cju}: 
\begin{equation}
q_{\oplus\u}=(e^{c_{\oplus\u}}-1)w_{\oplus}(\bm{\rho}_{-i}).
\label{eq:convert}
\end{equation}
Combining~\eqref{eq:convert} with~\eqref{eq:min_pair-p}, $q_{\oplus\u}$ and $q_{\ominus\u}$ can be eliminated, giving~\eqref{eq:min_pair2} below. Formulation~\eqref{eq:min_pair2} is equivalent to~\eqref{eq:min_pair}. Given $c_{\oplus\u}$ and $c_{\ominus\u}$, the corresponding $q_{\oplus\u}$ and $q_{\ominus\u}$ can be obtained from $c_{\oplus\u}$ and $c_{\ominus\u}$ by \eqref{eq:min_pair-p} and  \eqref{eq:convert}.  
\begin{subequations}
\begin{alignat}{2}
&
\min\limits_{\substack{x_{\oplus},x_{\ominus},x_{\u}\geq 0\\ c_{\oplus\u},c_{\ominus\u}\geq 0}}  x_{\oplus} + x_{\ominus}+x_{\u} \\
\text{s.t.} \quad &  c_{\oplus}(\bm{\rho}_{-i})x_{\oplus}+ c_{\oplus\u}x_{\u} \geq d_{\oplus} \label{eq:min_pair2-d+}\\
& c_{\ominus}(\bm{\rho}_{-i})x_{\ominus} + c_{\ominus\u}x_{\u} \geq d_{\ominus} \label{eq:min_pair2-d-} \\
& \text{Cv$_{\u}$}(c_{\oplus\u},c_{\ominus\u},\bm{\rho}_{-i})\leq 0 \label{eq:min_pair2-geq0}
\end{alignat}
\label{eq:min_pair2}
\end{subequations}

In~\eqref{eq:min_pair2}, the function $\text{Cv$_{\u}$}$ is defined in~\eqref{eq:conv} in the Appendix. One can easily verify that $\text{Cv$_{\u}$}$ is convex in $c_{\oplus\u}$ and $c_{\ominus\u}$ (with  $g_{i\oplus}\geq g_{i\ominus}$). The difficulty of~\eqref{eq:min_pair2} is on the two bi-linear constraints~\eqref{eq:min_pair2-d+} and~\eqref{eq:min_pair2-d-}. However, they become linear with fixed $x_{\u}$. To ease the presentation, we define the function below.
\begin{equation}
Z_{\u}(x_{\u},\bm{\rho}_{-i})=x_{\u}+\min_{\substack{x_{\oplus},x_{\ominus}\geq0 \\c_{\oplus\u},c_{\ominus\u}\geq 0}}x_{\oplus}+x_{\ominus}\text{ s.t. \eqref{eq:min_pair2-d+}--\eqref{eq:min_pair2-geq0}}.
\label{eq:Z}
\end{equation}
Solving~\eqref{eq:min_pair2} (and equivalently~\eqref{eq:min_pair}) is to find the minimum of $Z_{\u}(x_{\u},\bm{\rho}_{-i})$.
The following theorem shows the uniqueness of the minimum of $Z_{\u}(x_{\u},\bm{\rho}_{-i})$. 
\begin{theorem}
\textcolor[rgb]{0,0,0}{$Z_{\u}(x_{\u},\bm{\rho}_{-i})$ has unique minimum in $x_{\u}$.}
\label{thm:Z}
\end{theorem}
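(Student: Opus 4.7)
The plan is to recast the inner minimization defining $Z_{\u}$ as a convex program through a perspective substitution, use convexity plus coercivity to get existence of the minimum, and then promote to uniqueness by exploiting the strict convexity of $\text{Cv}_{\u}$. First, I would change variables: let $\alpha = c_{\oplus\u}x_{\u}$ and $\beta = c_{\ominus\u}x_{\u}$, interpretable as the bits delivered non-orthogonally to $\oplus$ and $\ominus$, respectively. Under this substitution, the QoS constraints~\eqref{eq:min_pair2-d+}--\eqref{eq:min_pair2-d-} become linear in the new variables, $c_{\oplus}(\bm{\rho}_{-i})x_{\oplus}+\alpha\geq d_{\oplus}$ and $c_{\ominus}(\bm{\rho}_{-i})x_{\ominus}+\beta\geq d_{\ominus}$, while~\eqref{eq:min_pair2-geq0} becomes $x_{\u}\,\text{Cv}_{\u}(\alpha/x_{\u},\beta/x_{\u},\bm{\rho}_{-i})\leq 0$, which is the perspective of the convex function $\text{Cv}_{\u}$ and is therefore jointly convex in $(\alpha,\beta,x_{\u})$.

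With a linear objective $x_{\oplus}+x_{\ominus}+x_{\u}$ over a feasible set that is jointly convex in $(x_{\oplus},x_{\ominus},\alpha,\beta,x_{\u})$, $Z_{\u}(\cdot,\bm{\rho}_{-i})$ is convex in $x_{\u}$ as the partial minimum of a jointly convex program. The bound $Z_{\u}(x_{\u},\bm{\rho}_{-i})\geq x_{\u}$ gives coercivity, so on the compact domain $[0,\bar{\rho}]$ the minimum is attained on a non-empty compact interval.

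For uniqueness of the minimizer, I would exploit the strict convexity of $\text{Cv}_{\u}$ in $(c_{\oplus\u},c_{\ominus\u})$, which follows from positive definiteness of the Hessian of $(e^{c_{\oplus\u}}-1)w_{\oplus}e^{c_{\ominus\u}}+(e^{c_{\ominus\u}}-1)w_{\ominus}$ whenever $g_{i\oplus}>g_{i\ominus}$. Suppose for contradiction that two distinct minimizers $x_{\u}^{(1)}<x_{\u}^{(2)}$ exist with optimal tuples $(x_{\oplus}^{(k)},x_{\ominus}^{(k)},\alpha^{(k)},\beta^{(k)})$. By Lemma~\ref{lma:equality}, both QoS constraints are tight at each optimum, pinning $x_{\oplus}^{(k)}$ and $x_{\ominus}^{(k)}$ to the remaining variables. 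Strict convexity of $\text{Cv}_{\u}$ lifts to strict convexity of its perspective in every direction not aligned with a ray through the origin in $(\alpha,\beta,x_{\u})$-space, so the midpoint opens strict slack in the power constraint, which can be converted into a smaller $x_{\oplus}+x_{\ominus}$ and contradicts minimality.

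The main obstacle is the ``ray'' degeneracy, where $(\alpha^{(2)},\beta^{(2)},x_{\u}^{(2)})$ is a positive scaling of $(\alpha^{(1)},\beta^{(1)},x_{\u}^{(1)})$, because perspective functions are only affine along such rays. In that case both optima share the same $(c_{\oplus\u},c_{\ominus\u})$, and a direct computation using the Lemma~\ref{lma:equality} equalities collapses $Z_{\u}(x_{\u}^{(1)})=Z_{\u}(x_{\u}^{(2)})$ to the knife-edge identity $c_{\oplus\u}/c_{\oplus}+c_{\ominus\u}/c_{\ominus}=1$; this identity together with the tight power constraint locks down $(c_{\oplus\u},c_{\ominus\u})$ and, after a canonical tie-breaking choice (e.g., the smallest feasible $x_{\u}$ along the ray), delivers the unique minimizer. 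Handling this borderline case cleanly is where the proof has to do its real work.
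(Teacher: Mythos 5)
Your convexification is correct and is a genuinely different route from the paper's. The paper never makes the inner problem jointly convex: it invokes Lemma~\ref{lma:equality} to turn \eqref{eq:min_pair2-d+}--\eqref{eq:min_pair2-d-} into equalities, eliminates $x_{\oplus},x_{\ominus}$ via \eqref{eq:re-obj}, and reduces the inner minimization to the parametric convex program \eqref{eq:min_pair3} in $(c_{\oplus\u},c_{\ominus\u})$ alone; since the feasible region of \eqref{eq:min_pair3} shrinks as $x_{\u}$ grows, its optimal value is non-increasing in $x_{\u}$, so $Z_{\u}$ is written as a strictly increasing term plus a non-increasing term. Your perspective substitution $(\alpha,\beta)=(c_{\oplus\u}x_{\u},c_{\ominus\u}x_{\u})$ buys something strictly stronger: $Z_{\u}(\cdot,\bm{\rho}_{-i})$ is convex and coercive, hence its minimizers form a nonempty compact interval. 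That is a cleaner justification of the bisection in \textsc{Split} than the paper's decomposition, which taken literally does not by itself exclude separated local minima.

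The genuine gap is in your uniqueness step, and the ``ray'' case you defer is not deferrable: it is exactly the case in which a unique minimizer does not exist. On the Case-1 interval $x_{\u}\in[0,x^{K}_{\u}]$ the inner optimum sits at the point $K$ of \eqref{eq:pointK}, so $Z_{\u}(x_{\u},\bm{\rho}_{-i})=d_{\oplus}/c_{\oplus}(\bm{\rho}_{-i})+d_{\ominus}/c_{\ominus}(\bm{\rho}_{-i})+x_{\u}\bigl(1-c^{K}_{\oplus\u}/c_{\oplus}(\bm{\rho}_{-i})-c^{K}_{\ominus\u}/c_{\ominus}(\bm{\rho}_{-i})\bigr)$. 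Whenever the optimal value of \eqref{eq:min_pair3} equals $1$ --- in particular whenever $w_{\oplus}(\bm{\rho}_{-i})=w_{\ominus}(\bm{\rho}_{-i})$, since then \eqref{eq:min_pair2-geq0} degenerates to $c_{\oplus\u}+c_{\ominus\u}\leq c_{\oplus}(\bm{\rho}_{-i})=c_{\ominus}(\bm{\rho}_{-i})$ --- the function $Z_{\u}$ is constant on that entire interval, and every point of it is a global minimizer. Your ``knife-edge identity'' $c_{\oplus\u}/c_{\oplus}+c_{\ominus\u}/c_{\ominus}=1$ is precisely this situation, and a ``canonical tie-breaking choice'' then selects a minimizer but concedes, rather than proves, uniqueness; no argument can close this case, because the statement you set out to prove fails in it. (Relatedly, the strict-convexity lift you invoke requires $w_{\ominus}(\bm{\rho}_{-i})>w_{\oplus}(\bm{\rho}_{-i})$, which does not follow from $g_{i\oplus}\geq g_{i\ominus}$ alone and fails exactly in the degenerate case.) What your convexity argument honestly delivers --- a unique minimum value, attained on an interval that collapses to a single point outside the degenerate configuration --- is all that is needed for \textsc{Split}, and in fairness it is also all the paper's own proof delivers, since its decreasing part is explicitly non-strict; but as a proof of literal uniqueness of the minimizer your proposal is incomplete. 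A secondary, fixable issue: the off-ray slack argument needs separate treatment when $x_{\oplus}=x_{\ominus}=0$ (Case 3), where there is no $x_{\oplus}+x_{\ominus}$ left to reduce and the slack must instead be converted into a smaller $x_{\u}$.
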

\begin{proof}
\textcolor[rgb]{0,0,0}{Since the first term $x_{\u}$ in $Z_{\u}(x_{\u},\bm{\rho}_{-i})$ is strictly monotonically increasing, to prove that $Z_{\u}(x_{\u},\bm{\rho}_{-i})$ has unique minimum, we only need to prove that the remaining part of $Z_{\u}(x_{\u},\bm{\rho}_{-i})$ is monotonically (but not necessarily strictly monotonically) decreasing in $x_{\u}$.} For this part, at the optimum~\eqref{eq:min_pair2-d+} and~\eqref{eq:min_pair2-d-} hold as equalities because of Lemma~\ref{lma:equality}. Hence, reformulating the problem by replacing the inequalities in~\eqref{eq:min_pair2-d+} and~\eqref{eq:min_pair2-d-} with equalities does not lose optimality. With equalities, the variables $x_{\oplus}$ and $x_{\ominus}$ can be represented by $c_{\oplus\u}$ and $c_{\ominus\u}$:
\begin{equation}
x_{\oplus}=\frac{(d_{\oplus}-c_{\oplus\u}x_{\u})}{c_{\oplus}(\bm{\rho}_{-i})},~x_{\ominus}=\frac{(d_{\ominus}-c_{\ominus\u}x_{\u})}{c_{\ominus}(\bm{\rho}_{-i})}.
\label{eq:re-obj}
\end{equation}
Therefore $x_{\oplus}$ and $x_{\ominus}$ can be eliminated from the objective function.
The minimization is thus equivalent to maximizing $c_{\oplus\u}/c_{\oplus}(\bm{\rho}_{-i})+c_{\ominus\u}/c_{\ominus}(\bm{\rho}_{-i})$.
We formulate this maximization problem below.
\begin{subequations}
\begin{alignat}{2}
&
\max\limits_{\substack{c_{\oplus\u},c_{\ominus\u}\geq 0}} 
\frac{c_{\oplus\u}}{c_{\oplus}(\bm{\rho}_{-i})}+\frac{c_{\ominus\u}}{c_{\ominus}(\bm{\rho}_{-i})}\label{eq:min_pair3-obj}\\
\text{s.t.} \quad &  c_{\oplus\u}x_{\u} \leq d_{\oplus} \label{eq:min_pair3-d+}\\
& c_{\ominus\u}x_{\u} \leq d_{\ominus} \label{eq:min_pair3-d-} \\
& \text{Cv$_{\u}$}(c_{\oplus\u},c_{\ominus\u},\bm{\rho}_{-i})\leq 0 \label{eq:min_pair3-geq0}
\end{alignat}
\label{eq:min_pair3}
\end{subequations}

Constraints~\eqref{eq:min_pair3-d+} and~\eqref{eq:min_pair3-d-} originate from the non-negativity requirement of $x_{\oplus}$ and $x_{\ominus}$. 
Note that \eqref{eq:min_pair3} is convex. In addition, the feasible region shrinks with the increase of $x_{\u}$.  Then the optimum of~\eqref{eq:min_pair3} monotonically decreases with $x_{\u}$. Hence the theorem.
\end{proof}


Because of Theorem~\ref{thm:Z}, bi-section search of $x_{\u}$ reaches the minimum of $Z_{\u}(x_{\u},\bm{\rho}_{-i})$.
Note that for any $x_{\u}$, computing $Z_{\u}(x_{\u},\bm{\rho}_{-i})$ needs to solve~\eqref{eq:min_pair3}. In the following, we prove how this can be done much more efficiently than employing standard convex optimization. 

We remark that~\eqref{eq:min_pair3} is a two-dimensional optimization problem with respect to $c_{\oplus\u}$ and $c_{\ominus\u}$. Constraints~\eqref{eq:min_pair3-d+} and~\eqref{eq:min_pair3-d-} are defined by two hyperplanes $c_{\oplus\u}=d_{\oplus}/x_{\u}$ and $c_{\ominus\u}=d_{\ominus}/x_{\u}$, respectively. Due to the convexity of the function $\text{Cv$_{\u}$}$ in $c_{\oplus\u}$ and $c_{\ominus\u}$, the curve $\text{Cv$_{\u}$}(c_{\oplus\u},c_{\ominus\u},\bm{\rho}_{-i})=0$ in~\eqref{eq:min_pair3-geq0} along with $c_{\oplus\u}=0$ and $c_{\ominus\u}=0$ forms a convex set of $[c_{\oplus\u},c_{\ominus\u}]$. The optimum of~\eqref{eq:min_pair3} depends on whether the two hyperplanes intersect with the curve and how they intersect. This leads to three possible cases to be considered, named Case 1, Case 2, and Case 3, respectively. In Case 1, constraints~\eqref{eq:min_pair3-d+} and~\eqref{eq:min_pair3-d-} are redundant, and the optimum is determined by the coefficients ${1}/{c_{\oplus}(\bm{\rho}_{-i})}$ and ${1}/{c_{\ominus}(\bm{\rho}_{-i})}$ in the objective function and the curve $\text{Cv$_{\u}$}(c_{\oplus\u},c_{\ominus\u},\bm{\rho}_{-i})=0$. In Case 2, the optimum is defined by one of the hyperplanes and the curve. In Case 3, constraint~\eqref{eq:min_pair3-geq0} is redundant, and the optimum is determined by the two hyperplanes. With $x_{\u}$ increasing from $0$ to $\infty$, Case 1, Case 2, and Case 3 happen sequentially, and all happen eventually. The three cases are illustrated in~\figurename~\ref{fig:region} with colors. 
\begin{figure}[!t]
\centering
\includegraphics[width=\linewidth]{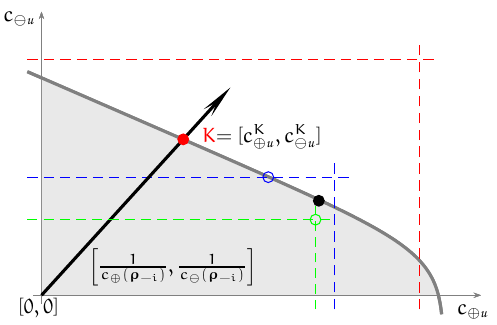}
\caption{This figure shows the feasible region of~\eqref{eq:min_pair3} of $c_{\oplus\u}$ and $c_{\ominus\u}$ for different $x_{\u}$. The shadowed area below the curve is~\eqref{eq:min_pair3-geq0}. The vertical and horizontal dashed lines are the hyperplanes defined by~\eqref{eq:min_pair3-d+} and~\eqref{eq:min_pair3-d-}, respectively. The hyperplanes are shown by red, blue, and green dashed lines for Case 1, Case 2, and Case 3, respectively. The point $K$ is optimal for Case 1. The blue and green circles represent the optimal solutions for Case 2 and Case 3, respectively. The black dot is the point on the curve where the two hyperplanes intersect with each other (see footnote~\ref{footnote:black-dot} for the existence of this point.).}
\label{fig:region}
\end{figure}
Below we respectively show how to compute the optimum for each case. 

In the following, we first compute the optimum in Case 1, represented by $K=[c^K_{\oplus\u},c^K_{\ominus\u}]$, which is the intersection of the vector $[{1}/{c_{\oplus}(\bm{\rho}_{-i})},{1}/{c_{\ominus}(\bm{\rho}_{-i})}]$ and the curve. The point also leads to a closed-form solution for the optima of all the three cases. 
Mathematically, point $K$ is solved by applying bi-section search to~\eqref{eq:pointK} below. 
\begin{equation}
\left\{
\begin{array}{l}
\text{Cv$_{\u}$}(c_{\oplus\u},c_{\ominus\u},\bm{\rho}_{-i})=0 \\
{c_{\oplus\u}}{c_{\oplus}(\bm{\rho}_{-i})}={c_{\ominus\u}}{c_{\ominus}(\bm{\rho}_{-i})}.
\end{array}
\right.	
\label{eq:pointK}
\end{equation}
For solving \eqref{eq:pointK}, we can first eliminate $c_{\oplus\u}$ or $c_{\ominus\u}$ in the first equation. This can be done by representing one of $c_{\oplus\u}$ and $c_{\ominus\u}$ with the other by the second equation. Since there is only one variable in the first equation, one can use bi-section search to find its solution\footnote{The solution is guaranteed to be unique and hence bi-section search applies. This is because, by representing one of $c_{\oplus\u}$ and $c_{\ominus\u}$ by the other by function \text{Cv$_{\u}$}, one variable is monotonically decreasing in the other, resulting in a unique zero point.}. 
Then, we compute the value of $x_{\u}$ when at least one hyperplane goes through $K$, denoted by $x^{K}_{\u}$ in~\eqref{eq:xK}. 
\begin{equation}
x^K_{\u}=\min\{d_{\oplus}/c_{\oplus\u}^K, d_{\ominus}/c_{\ominus\u}^K\}.
\label{eq:xK}
\end{equation}
The three cases, indicated by colors in \figurename~\ref{fig:region}, are as follows. 

\textbf{\textcolor[rgb]{1,0,0}{Case 1}} ($x_{\u}\leq \min_{j\in\u}d_{j}/c_{j\u}^K$): Point $K$ is the optimum of \eqref{eq:min_pair3}, because~\eqref{eq:min_pair3-d+} and~\eqref{eq:min_pair3-d-} are redundant. This happens when $x_{\u}$ is sufficiently small (or 0), as shown in~\figurename~\ref{fig:region}.

\textbf{\textcolor[rgb]{0,0,1}{Case 2}} ($x_{\u}>\min_{j\in\u}d_{j}/c_{j\u}^K$ and $\text{Cv}_{\u}(d_{\oplus}/x_{\u},d_{\ominus}/x_{\u})>0$): There exists one point on the curve where both two hyperplanes intersect\footnote{\label{footnote:black-dot}The existence of this point is guaranteed: With the increase of $x_{\u}$, both hyperplanes will eventually intersect with the curve with two intersection points. By increasing $x_{\u}$, the distance between the two intersections keeps being smaller. The two intersections will eventually overlap.}. We represent this point by the black dot on the curve in~\figurename~\ref{fig:region}. In Case 2, one hyperplane intersects with the curve at some point between $K$ and the black dot, and intersects with the other hyperplane on some point above the curve, see \figurename~\ref{fig:region}.
The intersection point of the curve and the hyperplane is the optimum of~\eqref{eq:min_pair3}. Without loss of generality, we assume $K$ violates \eqref{eq:min_pair3-d-}, meaning that the hyperplane of \eqref{eq:min_pair3-d-} goes through the optimum, as shown by  \figurename~\ref{fig:region}. By plugging the equation of the hyperplane into that of the curve, the optimal $c_{\oplus\u}$ is a function of $x_{\u}$. Similarly, if $K$ violates \eqref{eq:min_pair3-d-} instead, then $c_{\oplus\u}=d_{\oplus}/x_{\u}$ and the optimal $c_{\ominus\u}$ is a function of $x_{\u}$. To know which hyperplane goes through the optimum, one only needs to check which of $c^{K}_{j\u}x_{j\u}^{K}> d_{j}$ ($j=\oplus$ or $j=\ominus$) holds. Note that exactly one of the two holds in Case 2. The optimal $c_{\oplus\u}$ and $c_{\ominus\u}$ are computed respectively by~\eqref{eq:H+} and~\eqref{eq:H-} defined in the Appendix.

\textbf{\textcolor[rgb]{0,1,0}{Case 3}} ($x_{\u}>\min_{j\in\u}d_{j}/c_{j\u}^K$ and $\text{Cv}_{\u}(d_{\oplus}/x_{\u},d_{\ominus}/x_{\u})\leq 0$): Constraint~\eqref{eq:min_pair3-geq0} is redundant, as shown in \figurename~\ref{fig:region}. The optimum is the intersection point of the two hyperplanes, computed by $c_{j\u}=d_j/x_{\u}$ ($j=\oplus$ or $j=\ominus$). 

In summary, the optimal solution of \eqref{eq:min_pair3} is computed by \eqref{eq:C} below ($j=\oplus$ or $j=\ominus$) in closed form, with $H_{j\u}$ being~\eqref{eq:H+} or~\eqref{eq:H-} in the Appendix.
\begin{equation}
C_{j\u}(x_{\u},\bm{\rho}_{-i})=\left\{
\begin{array}{ll}
c^{K}_{j\u} &  \text{Case 1} \\
H_{j\u}(x_{\u},\bm{\rho}_{-i}) & \text{Case 2}  \\
d_{j}/x_{\u} & \text{Case 3}
\end{array}
\right.	
\label{eq:C}
\end{equation}

The function $Z_{\u}(x_{\u},\bm{\rho}_{-i})$ computes the amount of resource used for both orthogonal and non-orthogonal RB allocations for the UEs in $\u$.
It is optimal to serve the two UEs only by orthogonal RB allocation, if the minimum of $Z_{\u}(x_{\u},\bm{\rho}_{-i})$ occurs at $x_{\u}=0$.
In all other cases, $\min_{x_{\u}}Z_{\u}(x_{\u},\bm{\rho}_{-i})$ yields the optimal power split for non-orthogonal RB allocations.
The algorithm optimally solving \eqref{eq:min_pair}, named \textsc{Split}, is as follows.
\begin{tcolorbox}[boxrule=0.5pt]
\vspace{-0.2cm}
\begin{codebox}
\Procname{$\proc{Split}(\u,\bm{\rho}_{-i})$}
\li $x^{*}_{\u}\gets \argmin_{x_{\u}} Z_{\u}(x_{\u},\bm{\rho}_{-i})$  \textcolor[rgb]{0.18,0.53,0.33}{\Comment\small Bi-section search}
\li Compute $\langle c^{*}_{\oplus\u},c^{*}_{\ominus\u}\rangle$ by~\eqref{eq:C} \textcolor[rgb]{0.18,0.53,0.33}{\Comment\small With $x^{*}_{\u}$, $\bm{\rho}_{-i}$}
\li Compute $\langle x^{*}_{\oplus}$, $x^{*}_{\ominus}\rangle$ by~\eqref{eq:re-obj} \textcolor[rgb]{0.18,0.53,0.33}{\Comment\small With $c^{*}_{\oplus\u}$, $c^{*}_{\ominus\u}$, $\bm{\rho}_{-i}$}
\li Convert $\langle c^{*}_{\oplus\u},c^{*}_{\ominus\u}\rangle$ to $\langle q^{*}_{\oplus\u},q^{*}_{\ominus\u}\rangle$ \textcolor[rgb]{0.18,0.53,0.33}{\Comment\small By \eqref{eq:min_pair-p},~\eqref{eq:convert} }
\li \Return $\langle q^{*}_{\oplus\u},q^{*}_{\ominus\u},x^{*}_{\oplus},x^{*}_{\ominus},x^{*}_{\u} \rangle$
\end{codebox}
\end{tcolorbox}

\subsection{Optimal Pairing}
\label{subsec:cell}

Denote by $\Y_i$ the set of all candidate pair selections:  
\[
\Y_i=(\cup_{\u\in\U_i}\Y_{\u})\cup\{\vec{0}\}.
\]

By obtaining $\min_{x_\u}Z_{\u}(x_{\u},\bm{\rho}_{-i})$ for all $\u\in\U_i$ as shown earlier in Section~\ref{subsec:pair}, enumerating all $\vec{y}_i$ in $\Y_i$ gives the optimal solution to \eqref{eq:minload}. This exhaustive search however does not scale, as $|\Y_i|$ is exponential in the number of UEs. By the following derivation, we are able to obtain the optimum of~\eqref{eq:minload} in polynomial time. 
\begin{theorem}
The optimum of \eqref{eq:minload} is computed by finding the maximum weighted matching in an undirected graph.
\label{thm:matching}
\end{theorem}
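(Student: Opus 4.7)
My plan is to reduce \eqref{eq:minload} to a maximum weighted matching problem on the undirected graph $G_i=(\J_i,\U_i)$, whose vertices are the UEs served by cell $i$ and whose edges are the candidate pairs. By Theorem~\ref{thm:independency}, the optimal power split for each pair $\u$ depends only on $\bm{\rho}_{-i}$ and not on the global pair selection, so $\proc{Split}(\u,\bm{\rho}_{-i})$ yields a well-defined minimum consumption $Z^\star_\u := \min_{x_\u}Z_\u(x_\u,\bm{\rho}_{-i})$ whenever $\u$ is activated. For any UE $j$ that does not lie in any selected pair, constraint~\eqref{eq:minf-d} reduces to $c_j(\bm{\rho}_{-i})\,x_j \geq d_j$, and by Lemma~\ref{lma:equality} the corresponding optimal single-UE cost is $\alpha_j := d_j/c_j(\bm{\rho}_{-i})$.

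Given a feasible $\vec{y}_i\in\Y_i$, let $P := \{\u\in\U_i : y_\u=1\}$ and $S := \J_i\setminus\bigcup_{\u\in P}\u$. Constraint~\eqref{eq:minf-y} says that each UE belongs to at most one edge of $P$, so $P$ is precisely a matching in $G_i$; conversely every matching of $G_i$ corresponds to a feasible $\vec{y}_i$. Substituting the per-pair and per-singleton optima into \eqref{eq:minf-rho} decomposes the optimal cell load under $\vec{y}_i$ as
\[
\rho_i^\star(\vec{y}_i) \;=\; \sum_{\u\in P} Z^\star_\u \;+\; \sum_{j\in S}\alpha_j \;=\; \sum_{j\in\J_i}\alpha_j \;-\; \sum_{\u=\{j,h\}\in P}\! w_\u,
\]
where $w_\u := \alpha_j + \alpha_h - Z^\star_\u$ for $\u=\{j,h\}$. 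Evaluating $Z_\u$ at $x_\u=0$ reduces the pair to serving both UEs orthogonally, giving $Z_\u(0,\bm{\rho}_{-i})=\alpha_\oplus+\alpha_\ominus$; hence $Z^\star_\u\leq\alpha_\oplus+\alpha_\ominus$ and every edge weight $w_\u$ is non-negative. Since $\sum_{j\in\J_i}\alpha_j$ is a constant independent of $\vec{y}_i$, minimizing $\rho_i$ is equivalent to maximizing $\sum_{\u\in P}w_\u$ over all matchings $P$ of $G_i$.

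This latter problem is exactly maximum weighted matching on $G_i$ with non-negative edge weights $\{w_\u\}_{\u\in\U_i}$, which is solvable in polynomial time via Edmonds' blossom algorithm. The main obstacle in the argument is justifying the additive decomposition of $\rho_i^\star(\vec{y}_i)$ displayed above: the per-pair sub-problems must compose to the cell-wide optimum without cross-coupling between distinct pairs. This is supplied precisely by Theorem~\ref{thm:independency} together with the fact that, under fixed $\bm{\rho}_{-i}$, both $c_j(\bm{\rho}_{-i})$ and $c_{j\u}(\vec{q}_\u,\bm{\rho}_{-i})$ are unaffected by other pairs' selections or splits, while the resource budget \eqref{eq:minf-rho} is additive across pairs and singletons in cell $i$.
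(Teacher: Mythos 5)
Your proof is correct, and its weight construction is genuinely different from --- and tighter than --- the paper's. Both arguments reduce \eqref{eq:minload} to a maximum weighted matching over the graph whose vertices are the UEs of cell $i$ and whose edges are the candidate pairs, built from the same two ingredients: the per-pair optimum $Z^{\star}_{\u}=\min_{x_{\u}}Z_{\u}(x_{\u},\bm{\rho}_{-i})$ and the per-UE OMA cost $\alpha_j=d_j/c_j(\bm{\rho}_{-i})$. The paper, however, puts weight $T-Z^{\star}_{\u}$ on pair edges, introduces an auxiliary vertex $\Delta$ with edges $\{j,\Delta\}$ of weight $T-\alpha_j$ \emph{only} when $|\J_i|$ is odd, and takes $T$ merely to be a positive constant keeping all weights positive. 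With those weights, matching weight plus cell load is not constant across matchings (in the even case it equals $|P|T+\sum_{j\in S}\alpha_j$, in your notation), so the asserted equivalence is exact only over matchings that cover every UE; it implicitly requires the filtered $\U_i$ to admit such a matching and $T$ to be large enough to force maximum cardinality, since the OMA costs of UEs left unmatched never enter the paper's matching objective. Your weights $w_{\u}=\alpha_j+\alpha_h-Z^{\star}_{\u}$ instead measure the \emph{saving} of pair $\u$ relative to serving both of its UEs orthogonally, which makes the correspondence exact and unconditional: matching weight plus load equals the constant $\sum_{j\in\J_i}\alpha_j$, for any candidate set $\U_i$ and any parity of $|\J_i|$, with unmatched UEs correctly charged $\alpha_j$, and with no auxiliary vertex or constant $T$ needed; your observation $Z_{\u}(0,\bm{\rho}_{-i})=\alpha_{\oplus}+\alpha_{\ominus}$ also yields $w_{\u}\geq 0$ for free and substantiates the paper's remark that matched UEs need not actually use NOMA. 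One minor point: the additive decomposition you flag as the main obstacle follows already from separability --- under a fixed matching the per-pair and per-singleton subproblems share no variables, and all capacities depend only on $\bm{\rho}_{-i}$ and the pair's own power split --- so Theorem~\ref{thm:independency} is a consequence of this structure rather than the load-bearing step; citing it is harmless but not what carries the argument.
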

\begin{proof}
To prove the conclusion, 
an undirected weighted graph $\G_i$ is constructed and explained below.
\begin{equation}
\G_i=\left\{
\begin{array}{ll}
\langle \J_i, \U_i, \vec{w} \rangle & \text{$|\J_i|$ even} \\
\langle \J_i\cup\{\Delta\}, \U_i\cup\{\{j,\Delta\}|j\in\J_i\},\vec{w} \rangle & \text{$|\J_i|$ odd}. 
\label{eq:Gi}
\end{array}\right.
\end{equation}
In~\eqref{eq:Gi}, the graph is represented by a 3-tuple, with the first element being the vertex set, the second element being the edge set, and the third element being the weight vector.
Parameter $\Delta$ is an auxiliary vertex for odd $|\J_i|$. Without loss of generality, below we focus on odd $|\J_i|$. (All conclusions naturally hold for $|\J_i|$ being even.)
By the definition in~\eqref{eq:Gi}, each UE is corresponding to a vertex. For each pair $\u$ in $\U_i$, there is one edge connecting the two UEs in $\u$, associated with weight $w_{\u}$. We name these as \textit{type-1 edges}. Besides, for each UE $j$ in $\J_i$, there is one extra edge connecting $j$ and the auxiliary vertex $\Delta$, associated with weight $w_{j}$. We name these as \textit{type-2 edges}.
An illustration is given in~\figurename~\ref{fig:matching}. 
\begin{figure}[!t]
\centering
\includegraphics[width=0.7\linewidth]{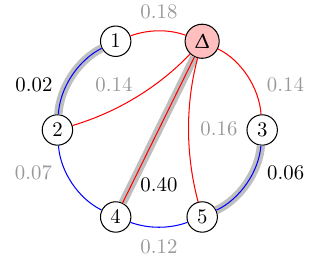}
\caption{The figure shows an example of one cell $i$ with five UEs, i.e., $\J_i=\{1,2,3,4,5\}$. Assume the candidate pair set is $\U_i=\{\{1,2\},\{2,4\},\{4,5\},\{3,5\}\}$. The blue edges are type-1. The red edges are type-2. A matching is a set of edges without common vertices (also called independent edge set) and is a pair selection solution. The maximum matching, as highlighted in the figure, is $\{\{1,2\},\{3,5\},\{4,\Delta\}\}$. \textcolor[rgb]{0,0,0}{Note that two paired UEs in the solution of matching does not necessarily imply that the two share resource via NOMA. For any pair $\u$, if $x_{\u}$ happens to be zero in the solution, then there is no RB allocated in non-orthogonal manner to the pair $\u$ and hence the two UEs in $\u$ are allocated with orthogonal resources.}}
\label{fig:matching}
\end{figure}

The weight $\vec{w}$ is defined as follows, where $T$ is a positive value keeping all weights being positive.
\begin{center}
\begin{tabular}{c|l}
\toprule
Type-1 edge & $w_{\u}=T-\min_{x_{\u}}Z_{\u}(x_{\u},\bm{\rho}_{-i})$ ($\u\in\U_i$)\\
Type-2 edge & $w_{j}=T-d_j/c_j(\bm{\rho}_{-i})$ ($j\in\J_i$) \\
\bottomrule
\end{tabular}
\end{center}

First, we remark that any $\vec{y}_i$ is feasible to~\eqref{eq:minload} if and only if all the pairs 
$\u$ with $y_{\u}=1$ ($\u\in\U_i$) form a matching (or an empty edge set) in $\G_i$. Otherwise, there exists $j$ such that $\sum_{\u\in\V_j}y_{\u}\geq 2$, and~\eqref{eq:minf-y} would be violated. Then, by the definition of weights, minimizing the load $\rho_i$ becomes finding a maximum weighted matching.
\end{proof}

The algorithm \textsc{S-Cell} solving~\eqref{eq:minload} exactly is as follows. Lines~\ref{alg:cell-compute-weight-start}--\ref{alg:cell-compute-weight-end} compute the edge weights of the graph to be constructed. Then we construct the graph in Line~\ref{alg:cell-graph} and compute the maximum matching\footnote{The best known algorithm \cite{4567800} runs on $\G_i$ in $O((|\U_i|+|\J_i|)\sqrt{|\J_i|})$ (odd $|\J_i|$) or $O(|\U_i|\sqrt{|\J_i|})$ (even $|\J_i|$).} $\U_i^{*}$  
in Line~\ref{alg:cell-matching}, which by Theorem~\ref{thm:matching} is the optimal pair selection in cell $i$. Lines~\ref{alg:cell-set-zero}--\ref{alg:cell-loop1-end} assign the obtained solutions to $\langle q_{\oplus\u}^{*},q^{*}_{\ominus\u},x^{*}_{\oplus},x^{*}_{\ominus},x^{*}_{\u} \rangle$ for the pairs in $\U^{*}_i$. The other pairs are not selected and hence their values in $\vec{x}_i^{*}$, $\vec{q}_i^{*}$, and $\vec{y}_i^{*}$ are zeros.
\begin{figure}
\begin{tcolorbox}[boxrule=0.5pt]
\vspace{-0.2cm}
\begin{codebox}
\Procname{$\proc{S-Cell}(\bm{\rho}_{-i})$}
\li \For $\u\in\U_i$ \label{alg:cell-compute-weight-start}
\li \> $\langle q_{\oplus\u},q_{\ominus\u},x_{\oplus},x_{\ominus},x_{\u} \rangle \gets \proc{Split}(\u,\bm{\rho}_{-i})$
\li \> $w_{\u} \gets T-(x_{\oplus}+x_{\ominus}+x_{\u})$  \textcolor[rgb]{0.18,0.53,0.33}{\Comment\small $w_{\u}>0$} 
\li \textbf{end for}
\li \If $|\J_i|$ is odd
\li \> \For $j\in\J_i$
\li \> \> $x_j \gets d_j/c_j(\bm{\rho}_{-i})$
\li \> \> $w_j \gets T-x_j$ \textcolor[rgb]{0.18,0.53,0.33}{\Comment\small $w_{j}>0$} \label{alg:cell-compute-weight-end}
\li \> \textbf{end for}
\li \textbf{end if}
\li Construct $\G_i$ by~\eqref{eq:Gi}  \label{alg:cell-graph}
\li $\U_i^{*} \gets \proc{Maximum-Weighted-Matching}(\G_i)$ \label{alg:cell-matching}
\li $\vec{x}^{*}_i\gets \vec{0}$,~$\vec{q}^{*}_i\gets \vec{0}$,~$\vec{y}^{*}_i\gets \vec{0}$ \label{alg:cell-set-zero}
\li \For $\u\in\U^{*}_i\cap\U_i$ \label{alg:cell-loop2-start}
\li \> $x^{*}_{\u}\gets x_{\u}$, $y^{*}_{\u}\gets 1$
\li \> \For $j\in\u$
\li \> \> $q^{*}_{j\u}\gets q_{j\u}$,~$x^{*}_{j}\gets x_{j}$ 
\li \> \textbf{end for}
\li \textbf{end for}
\li \If $|\J_i|$ is odd
\li \> Find the $\{j,\Delta\}$ in $\U_i^{*}$ and let $x^{*}_j \gets x_j$
\li \textbf{end if}
\label{alg:cell-loop1-end}
\li $\rho_i=\sum_{j\in\J_i}x^{*}_{j}+\sum_{\u\in\U_i}x^{*}_{\u}$ \label{alg:cell-rhoi}
\li \Return $\langle\rho_i,\vec{q}^{*}_i,\vec{x}^{*}_{i},\vec{y}^{*}_{i}\rangle$
\end{codebox}
\end{tcolorbox}
\end{figure}

\textcolor[rgb]{0,0,0}{We remark that in the matching process, if the number of nodes is odd, for the unpaired UE, it is allocated with orthogonal RBs. For two UEs that are paired in the solution of matching, they are not necessarily in non-orthogonal allocation but is up to optimization.}

\section{Multi-cell Load Optimization}
\label{sec:multiple}

This section proposes the algorithmic framework \textsc{M-Cell} for deriving the optimum of~\minf, by analyzing sufficient-and-necessary conditions of optimality and feasibility. 

\subsection{Revisiting Single-cell Load Minimization}
\label{subsec:revisiting}
%
Recall that for single cell optimization, the optimum of~\eqref{eq:minload} of cell $i$ ($i\in\I$) is a function of the load of other cells $\bm{\rho}_{-i}$. By Lemma~\ref{lma:feasible} below, this function is well-defined for any non-negative $\bm{\rho}_{-i}$.
\begin{lemma}
The problem in \eqref{eq:minload} is always feasible.
\label{lma:feasible}
\end{lemma}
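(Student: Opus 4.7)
The plan is to produce an explicit feasible point rather than argue existence abstractly, exploiting the fact that the OMA-only mode alone suffices to meet every user's demand. A key observation is that \eqref{eq:minload} does \emph{not} impose the cell-load cap \eqref{eq:minf-rhobar} (only \eqref{eq:minf-rho}--\eqref{eq:minf-integer} appear), so $\rho_i$ is not bounded from above in this single-cell subproblem. That removes the only constraint that could plausibly make the problem infeasible and reduces the task to writing down a candidate solution and checking constraints one by one.

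The candidate I would use is the all-orthogonal assignment. Set $y_{\u}=0$ and $x_{\u}=0$ for every $\u\in\U_i$; this instantly satisfies \eqref{eq:minf-xy}, \eqref{eq:minf-y} and \eqref{eq:minf-integer}. For each $j\in\J_i$, set
\[
x_j \;=\; \frac{d_j}{c_j(\bm{\rho}_{-i})},
\]
which is well-defined because with $\bm{\rho}_{-i}$ fixed, the SINR in \eqref{eq:sinrOMA} is
\[
\gamma_j \;=\; \frac{p_i g_{ij}}{\sum_{k\in\I\setminus\{i\}} p_k g_{kj}\rho_k + \sigma^{2}} \;>\; 0,
\]
since $\sigma^{2}>0$ and $p_i g_{ij}>0$, so $c_j(\bm{\rho}_{-i})=\log(1+\gamma_j)>0$. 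Plugging this $x_j$ into \eqref{eq:minf-d} gives $c_j(\bm{\rho}_{-i})x_j = d_j$, with the non-orthogonal term vanishing because $x_{\u}=0$. Finally let $\rho_i=\sum_{j\in\J_i} x_j$ to satisfy \eqref{eq:minf-rho}.

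The only loose end is \eqref{eq:minf-p}, since the power split $\vec{q}_i$ must be defined even for pairs that are not selected. This is harmless: because all $x_{\u}$ are zero, $\vec{q}_i$ does not affect the objective or the QoS constraints, so any split obeying $q_{\oplus\u}+q_{\ominus\u}=p_i$ works, e.g.\ $q_{\oplus\u}=q_{\ominus\u}=p_i/2$ for every $\u\in\U_i$. Non-negativity \eqref{eq:minf-rhoqx} is obvious throughout. Hence $(\rho_i,\vec{q}_i,\vec{x}_i,\vec{y}_i)$ is feasible, establishing the lemma.

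Honestly, there is no real obstacle here; the lemma is a short sanity check ensuring that the function $\rho_i(\bm{\rho}_{-i})$ used downstream in the \textsc{M-Cell} fixed-point analysis is everywhere defined on the non-negative orthant. The only subtlety worth flagging in the write-up is that the load cap $\bar{\rho}$ is intentionally absent from \eqref{eq:minload}, otherwise the lemma would be false (a large enough $\bm{\rho}_{-i}$ could force $\sum_j d_j/c_j(\bm{\rho}_{-i})>\bar{\rho}$).
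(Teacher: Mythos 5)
Your proof is correct: the all-orthogonal point ($y_{\u}=0$, $x_{\u}=0$, $x_j=d_j/c_j(\bm{\rho}_{-i})$, $q_{\oplus\u}=q_{\ominus\u}=p_i/2$, $\rho_i=\sum_j x_j$) satisfies \eqref{eq:minf-rho}--\eqref{eq:minf-integer}, since $c_j(\bm{\rho}_{-i})>0$ for any fixed non-negative $\bm{\rho}_{-i}$ (thanks to $\sigma^2>0$) and $c_j$ does not depend on $\rho_i$; your observation that the load cap \eqref{eq:minf-rhobar} is deliberately excluded from \eqref{eq:minload} is also exactly right and is what makes the claim unconditional. However, your route differs from the paper's. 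The paper fixes an \emph{arbitrary} pair selection $\vec{y}_i\in\Y_i$ together with the half-half power split, reduces the residual problem to the linear system \eqref{eq:reformulation} in $\vec{x}\geq\vec{0}$, and invokes Farkas' lemma: any infeasibility certificate $\vec{v}$ would need $\vec{v}^{\top}\vec{A}\geq\vec{0}^{\top}$, which forces $v_j\geq 0$ (each column for $x_j$ has a single positive entry $c_j$), and then $\sum_j v_j d_j\geq 0$ contradicts the requirement $\vec{v}^{\top}\vec{b}<0$. What the paper's argument buys is a strictly stronger fact: \eqref{eq:minload} is feasible under \emph{every} fixed pairing, not just the all-OMA one. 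This matters downstream, because Section~\ref{subsec:revisiting} defines $f_{ik}(\bm{\rho}_{-i})$ as the optimum of \eqref{eq:minload} under each pairing $k$, and Theorem~\ref{thm:sif} manipulates all of the $f_{ik}$ individually, so each must be well-defined (finite). Your construction proves the lemma as literally stated, but to support the later use you would want to note that it extends immediately: for any fixed $\vec{y}_i$ and the half-half split, $c_{j\u}>0$, so one can still meet each demand with orthogonal allocation alone by keeping all $x_{\u}=0$, which remains feasible for that pairing. With that one-line addendum, your elementary construction fully substitutes for the Farkas argument.
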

\begin{proof}
We select some $\vec{y}_i$ in $\Y_i$ and fix it in~\eqref{eq:minload} ($\Y_i\neq\phi$ by definition). For each pair $\u$, we fix $\vec{q}_{\u}$ to $[p_i/2,p_i/2]^{\top}$. To prove~\eqref{eq:minload} is feasible, we prove the remaining problem is always feasible. Note that, with $\vec{y}_i$ and $\vec{q}_{\u}$ being fixed, \eqref{eq:minload} becomes a linear programming (LP) problem, which is stated below (the equalities are by Lemma~\ref{lma:equality}). 
\begin{equation}
\min_{\vec{x}\geq\vec{0}}\sum_{j\in\J_i}x_j+\sum_{\u\in\U_i}x_{\u},\text{~s.t.~}c_{j}x_{j}+\sum_{\u\in\V_j}c_{j\u}x_{\u}=d_j.	
\label{eq:reformulation}
\end{equation}
By Farkas' lemma, a group of linear constraints in standard form, i.e. $\vec{A}\vec{x}=\vec{b}$ ($\vec{b}\geq\vec{0}$), is feasible with $\vec{x}\geq\vec{0}$ if and only if there does not exist $\vec{v}$ such that $\vec{v}^{\top}\vec{A}\geq\vec{0}^{\top}$ and $\vec{v}^{\top}\vec{b}<0$.
Obviously, there is no $\vec{v}$ with $\vec{d}\geq \vec{0}$ satisfying~\eqref{eq:farkas}.
\begin{equation}
v_{j}\geq 0~(j\in\J_i)\textnormal{ and }\sum_{j\in\J_i}v_{j}d_{j}<0
\label{eq:farkas}
\end{equation}
Hence \eqref{eq:reformulation} is feasible, and the conclusion holds.
\end{proof}

Let $\lambda_i=|\Y_i|$. For each $\vec{y}_i$ in $\Y_i$, we use an integer in $[1,\lambda_i]$ to uniquely index $\vec{y}_i$. 
We refer to all the pair selection solutions in $\Y_i$ as pairing $1$, pairing $2$, \ldots, pairing $\lambda_i$.
Denote by $f_{ik}(\bm{\rho}_{-i})$ the optimum of~\eqref{eq:minload} under pairing $k$ ($1\leq k\leq\lambda_i$), i.e.,
\[
f_{ik}(\bm{\rho}_{-i}) =	\min\limits_{\rho_i,\vec{q}_i,\vec{x}_i} \rho_i~\text{s.t. \eqref{eq:minf-rho}--\eqref{eq:minf-xy} and \eqref{eq:minf-rhoqx} of cell $i$.} 
\]
Let $f_i(\bm{\rho}_{-i})$ be the optimum\footnote{Therefore, $f_i(\bm{\rho}_{-i})$ equals the $\rho_i$ obtained from $\textsc{S-Cell}(\bm{\rho}_{-i})$.} of~\eqref{eq:minload}. Then we have:
\begin{equation}
f_i(\bm{\rho}_{-i})=\min_{k=1,2,\ldots,\lambda_i} f_{ik}(\bm{\rho}_{-i}).
\label{eq:fi}
\end{equation}
Network-wisely, we have:
\begin{equation}
\vec{f}(\bm{\rho})=[f_1(\bm{\rho}_{-1}),f_2(\bm{\rho}_{-2}),\ldots,f_n(\bm{\rho}_{-n})].
\label{eq:f}	
\end{equation}
The following theorem reveals a key property of $\vec{f}(\bm{\rho})$.
\begin{theorem}
$\vec{f}(\bm{\rho})$ is an SIF, i.e.
the
following properties hold: 
\begin{enumerate} 
\item (Scalability)
$\alpha\vec{f}(\bm{\rho})>\vec{f}(\alpha\bm{\rho}),~
\bm{\rho}\geq\vec{0},~\alpha>1$.
\item (Monotonicity) $\vec{f}(\bm{\rho})\geq \vec{f}(\bm{\rho}')$,
$\bm{\rho} \geq \bm{\rho}'$, $\bm{\rho},\bm{\rho}'\geq\vec{0}$.
\end{enumerate} 
\label{thm:sif}
\end{theorem}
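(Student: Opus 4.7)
Since $f_i(\bm{\rho}_{-i})=\min_k f_{ik}(\bm{\rho}_{-i})$ by~\eqref{eq:fi}, the plan is to verify scalability and monotonicity for each $f_{ik}$ and transfer them to $f_i$ via the pointwise minimum. The whole argument hinges on one structural observation: every SINR in~\eqref{eq:sinrOMA},~\eqref{eq:sinr+},~\eqref{eq:sinr-} has the form $\gamma=S/(A+B)$, where $A=\sum_{k\in\I\backslash\{i\}}p_kg_{k\cdot}\rho_k$ collects exactly the terms that scale linearly with $\bm{\rho}_{-i}$ and $B>0$ is a $\bm{\rho}_{-i}$-independent constant (the noise $\sigma^2$, augmented by the intra-cell term $q_{\oplus\u}g_{i\ominus}$ in $\gamma_{\ominus\u}$). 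Consequently, every capacity $c_j(\bm{\rho}_{-i})$ and $c_{j\u}(\vec{q},\bm{\rho}_{-i})$ is non-increasing in $\bm{\rho}_{-i}$.

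\emph{Monotonicity.} Fix any pairing $k$ and let $(\vec{x}^*,\vec{q}^*)$ attain $f_{ik}(\bm{\rho}'_{-i})$. Raising $\bm{\rho}'_{-i}$ to any $\bm{\rho}_{-i}\ge\bm{\rho}'_{-i}$ can only weaken the demand inequalities~\eqref{eq:minf-d} under $(\vec{x}^*,\vec{q}^*)$, so restoring feasibility requires enlarging some $x_j$ or $x_\u$, which cannot decrease $\rho_i$. Hence $f_{ik}(\bm{\rho}_{-i})\ge f_{ik}(\bm{\rho}'_{-i})$, and minimizing over $k$ gives $f_i(\bm{\rho}_{-i})\ge f_i(\bm{\rho}'_{-i})$.

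\emph{Scalability.} I would first prove the auxiliary strict sub-linearity bound that, for any fixed $\vec{q}$, any capacity $c$ from~\eqref{eq:minf-d}, $\bm{\rho}_{-i}\ge\vec{0}$, and $\alpha>1$,
\[
c(\vec{q},\alpha\bm{\rho}_{-i}) \;>\; \tfrac{1}{\alpha}\,c(\vec{q},\bm{\rho}_{-i}).
\]
Step (i): $\gamma(\alpha\bm{\rho}_{-i})=S/(\alpha A+B)>S/(\alpha A+\alpha B)=\gamma(\bm{\rho}_{-i})/\alpha$, the strict inequality using $B>0$ and $\alpha>1$. Step (ii): concavity of $h(x)=\log(1+x)$ together with $h(0)=0$ yields $h(tx)\ge t\,h(x)$ for $t\in(0,1]$, i.e., $\log(1+x/\alpha)\ge(1/\alpha)\log(1+x)$ for $x\ge 0$. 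Composing (i) and (ii) delivers the bound. The only mildly delicate point is step~(i) for $\gamma_{\ominus\u}$, where the intra-cell term $q_{\oplus\u}g_{i\ominus}$ must be absorbed into $B$ so that the uniform $S/(A+B)$ decomposition holds for the NOMA capacities exactly as for the OMA one.

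Given the bound, fix $k$ and let $(\vec{x}^*,\vec{q}^*)$ attain $f_{ik}(\bm{\rho}_{-i})$. Consider the candidate $(\alpha\vec{x}^*,\vec{q}^*)$ at interference $\alpha\bm{\rho}_{-i}$; its cell load is $\alpha f_{ik}(\bm{\rho}_{-i})$. For every UE $j$ with $d_j>0$, at least one of $x_j^*$ or $\{x_\u^*:\u\in\V_j\}$ is nonzero (otherwise~\eqref{eq:minf-d} would already fail at $\bm{\rho}_{-i}$), so applying the auxiliary bound termwise makes~\eqref{eq:minf-d} strict at the candidate. The resulting slack lets one shrink some $x$-component while remaining feasible, giving $f_{ik}(\alpha\bm{\rho}_{-i})<\alpha f_{ik}(\bm{\rho}_{-i})$. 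Letting $k^{\star}$ minimize for $\bm{\rho}_{-i}$ yields $f_i(\alpha\bm{\rho}_{-i})\le f_{ik^{\star}}(\alpha\bm{\rho}_{-i})<\alpha f_{ik^{\star}}(\bm{\rho}_{-i})=\alpha f_i(\bm{\rho}_{-i})$, closing the proof.
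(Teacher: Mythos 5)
Your proof is correct, and its skeleton coincides with the paper's: reduce to the fixed-pairing optima $f_{ik}$, get monotonicity from the fact that capacities are non-increasing in interference (so the feasible set of \eqref{eq:minf-d} only enlarges as $\bm{\rho}_{-i}$ decreases), get strict sublinearity of each $f_{ik}$ from a strict capacity bound, and push both properties through the pointwise minimum \eqref{eq:fi}. Where you genuinely differ is in how the two technical ingredients are established, and in both places your route is more elementary and self-contained. (i) For the key bound $c(\vec{q},\alpha\bm{\rho}_{-i})>c(\vec{q},\bm{\rho}_{-i})/\alpha$, you derive it directly from the SINR structure $S/(\alpha A+B)$ with $B\geq\sigma^2>0$, composed with $\log(1+x/\alpha)\geq\tfrac{1}{\alpha}\log(1+x)$; the paper instead asserts the equivalent inequality \eqref{eq:concave_c} by appealing to strict concavity of $1/c_j$ and $1/c_{j\u}$ in $\bm{\rho}_{-i}$, a claim it does not prove. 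Your derivation also pinpoints exactly what makes the inequality strict (the constant $B>0$), including the correct absorption of the intra-cell term of $\gamma_{\ominus\u}$ into $B$. (ii) To convert the bound into $f_{ik}(\alpha\bm{\rho}_{-i})<\alpha f_{ik}(\bm{\rho}_{-i})$, you scale the optimal allocation by $\alpha$, observe strict feasibility at interference $\alpha\bm{\rho}_{-i}$, and use the slack to contract; the paper routes the same comparison through the auxiliary problem \eqref{eq:alpha_relax} with inflated demands $\alpha d_j$. The two are logically equivalent, and you additionally spell out the min-transfer step $f_i(\alpha\bm{\rho}_{-i})\leq f_{ik^{\star}}(\alpha\bm{\rho}_{-i})<\alpha f_{ik^{\star}}(\bm{\rho}_{-i})=\alpha f_i(\bm{\rho}_{-i})$, which the paper leaves implicit. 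One wrinkle you share with the paper: scaling $\vec{x}^{*}$ by $\alpha$ can in principle violate \eqref{eq:minf-xy} ($x_{\u}\leq y_{\u}\leq 1$), which is included in the definition of $f_{ik}$; neither proof addresses this, so it is not a gap relative to the paper, but both arguments implicitly treat \eqref{eq:minf-xy} as a pure indicator linkage rather than a cap on $x_{\u}$.
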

\begin{proof}
We first prove monotonicity and scalability 
for $f_{ik}(\bm{\rho}_{-i})$ 
($i\in\I$, $k=1,2,\ldots,\lambda_i$).
For monotonicity, we prove 
that $f_{ik}(\bm{\rho}'_{-i})\leq f_{ik}(\bm{\rho}_{-i})$ for $\bm{\rho}'_{-i}\leq\bm{\rho}_{-i}$ as follows. 
Given any non-negative $\bm{\rho}_{-i}$, we replace $\bm{\rho}_{-i}$ with $\bm{\rho}'_{-i}$. 
Note that $c_{j\u}(\bm{\rho}'_{-i})\geq c_{j\u}(\bm{\rho}_{-i})$. Thus the replacement makes the solution space of \eqref{eq:minf-d} larger, 
and the optimum with $\bm{\rho}'_{-i}$ is no larger than that with $\bm{\rho}_{-i}$. Therefore $f_{ik}(\bm{\rho}'_{-i})\leq f_{ik}(\bm{\rho}_{-i})$. For scalability, we prove that $f_{ik}(\alpha\bm{\rho}_{-i})\leq \alpha f_{ik}(\bm{\rho}_{-i})$ for $\alpha>1$ and non-negative $\bm{\rho}_{-i}$ as follows. 
Denote the optimal solution of $f_{ik}(\bm{\rho}_{-i})$ by $\langle\rho''_i,\vec{q}_i'',\vec{x}_i''\rangle$. We have $f_{ik}(\bm{\rho}_{-i})=\rho''_i$. 
Due to that $1/c_{j\u}(\vec{q}_i,\bm{\rho}_{-i})$ and $1/c_j(\bm{\rho}_{-i})$ are strictly concave in $\bm{\rho}_{-i}$, the two inequalities
\begin{equation}
\frac{1}{c_{j}(\alpha\bm{\rho}_{-i})} < \frac{\alpha}{c_{j}(\bm{\rho}_{-i})}\text{, }\frac{1}{c_{j\u}(\vec{q}_i,\alpha\bm{\rho}_{-i})} < \frac{\alpha}{c_{j\u}(\vec{q}_i,\bm{\rho}_{-i})}
\label{eq:concave_c}
\end{equation}
hold for $\alpha>1$. Consider the following minimization problem~\ref{eq:alpha_relax}, with $\vec{y}_i$ being fixed to pairing $k$. 
\begin{subequations}
\begin{alignat}{2}
&\min\limits_{\rho_i,\vec{q}_i,\vec{x}_i\geq\vec{0}} \rho_i \\
\text{s.t.} \quad & \text{\eqref{eq:minf-rho}, \eqref{eq:minf-p} and \eqref{eq:minf-xy} of cell $i$}\\
&  c_j(\bm{\rho}_{-i})x_j +  \sum_{\u\in\V_j}c_{j\u}(\vec{q},\bm{\rho}_{-i})x_{\u}\geq \alpha d_j,~j\in\J_i \label{eq:alpha_relax-d}
\end{alignat}
\label{eq:alpha_relax}
\end{subequations}
\!\!Note that $\langle\alpha\rho''_i,\vec{q}''_i,\alpha\vec{x}''_i\rangle$ is feasible to~\eqref{eq:alpha_relax}, 
with the objective value being $\alpha f_{ik}(\bm{\rho}_{-i})$. 
Hence the optimum of~\eqref{eq:alpha_relax} is no more than $\alpha f_{ik}(\bm{\rho}_{-i})$. 
For $f_{ik}(\alpha\bm{\rho}_{-i})$, note that the corresponding optimization problem only differs with~\eqref{eq:alpha_relax} in~\eqref{eq:alpha_relax-d}. Instead of~\eqref{eq:alpha_relax-d}, in $f_{ik}(\alpha\bm{\rho}_{-i})$ we have:
\begin{equation}
c_j(\alpha\bm{\rho}_{-i})x_j + \sum_{\u\in\V_j} c_{j\u}(\vec{q},\alpha\bm{\rho}_{-i})\geq d_j,~j\in\J_i
\label{eq:c_j_alpha_rho}
\end{equation}
By Lemma~\ref{lma:equality}, \eqref{eq:alpha_relax-d} is equality at the optimum.
Then by~\eqref{eq:concave_c}, for any solution of~\eqref{eq:alpha_relax}, using it for the optimization problem associated with $f_{ik}(\alpha\bm{\rho}_{-i})$ makes \eqref{eq:c_j_alpha_rho} an inequality. (This is because by~\eqref{eq:concave_c} we obtain $c_j(\bm{\rho}_{-i})/\alpha < c_j(\alpha\bm{\rho}_{-i})$ and $c_{j\u}(\vec{q}_i,\bm{\rho}_{-i})/\alpha < c_{j\u}(\vec{q}_i,\alpha\bm{\rho}_{-i})$).
Therefore the problem for $f_{ik}(\alpha\bm{\rho}_{-i})$ has a lower optimum than \eqref{eq:alpha_relax}. Further, the optimum is lower than $\alpha f_{ik}(\bm{\rho}_{-i})$. Hence $f_{ik}(\alpha\bm{\rho}_{-i})<\alpha f_{ik}(\bm{\rho}_{-i})$.

We then allow $k$ to be variable and consider $f_{i}(\bm{\rho}_{-i})$ ($i\in\I$). For $\bm{\rho}'_{-i}\leq\bm{\rho}_{-i}$ we have
\[
f_i(\bm{\rho}'_{-i})=\min_{k}f_{ik}(\bm{\rho}'_{-i})\leq\min_{k}f_{ik}(\bm{\rho}_{-i})=f_i(\bm{\rho}_{-i})
\]
and for $\alpha>1$ we have
\[
f_i(\alpha\bm{\rho}_{-i})=\min_{k}f_{ik}(\alpha \bm{\rho}_{-i})< \alpha \min_{k}f_{ik}(\bm{\rho}_{-i})=\alpha f_{i}(\bm{\rho}_{-i})
\]
Hence the conclusion.
\end{proof}

Given $\bm{\rho}$, denote by $\vec{f}^{k}$ ($k> 1$) the function composition of $\vec{f}(\vec{f}^{k-1}(\bm{\rho}))$ (with $\vec{f}^{0}(\bm{\rho})=\bm{\rho}$). 
Lemma~\ref{lma:lim} holds by~\cite{414651}. 
\begin{lemma}
If $\lim_{k\rightarrow\infty}\vec{f}^{k}(\bm{\rho})$ exists, then it exists uniquely for any $\bm{\rho}\geq\vec{0}$.
\label{lma:lim}
\end{lemma}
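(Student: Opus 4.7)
The plan is to leverage Theorem~\ref{thm:sif}, which established that $\vec{f}$ is a standard interference function, and to reproduce the classical Yates-framework argument \cite{414651} in two stages: first show that any fixed point of $\vec{f}$ is unique, then show that the iterates from an arbitrary start converge to that unique fixed point.

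For uniqueness, I would argue by contradiction. Suppose $\bm{\rho}^{*}$ is the limit of $\vec{f}^{k}(\bm{\rho})$ for some $\bm{\rho}\geq\vec{0}$; by continuity of $\vec{f}$ (which follows from the closed-form case analysis in Section~\ref{subsec:pair}) such a limit satisfies $\vec{f}(\bm{\rho}^{*})=\bm{\rho}^{*}$. Assume a second fixed point $\tilde{\bm{\rho}}\neq\bm{\rho}^{*}$ exists. Let $\alpha=\max_{i}\rho^{*}_{i}/\tilde{\rho}_{i}$ and, swapping the roles of the two vectors if necessary, assume $\alpha>1$. Then $\alpha\tilde{\bm{\rho}}\geq\bm{\rho}^{*}$ componentwise with equality in at least one coordinate. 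By monotonicity, $\vec{f}(\alpha\tilde{\bm{\rho}})\geq\vec{f}(\bm{\rho}^{*})=\bm{\rho}^{*}$; by scalability, $\alpha\tilde{\bm{\rho}}=\alpha\vec{f}(\tilde{\bm{\rho}})>\vec{f}(\alpha\tilde{\bm{\rho}})\geq\bm{\rho}^{*}$ strictly in every coordinate, contradicting the equality in the coordinate attaining the maximum. Hence $\tilde{\bm{\rho}}=\bm{\rho}^{*}$.

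For the second stage, fix any $\bm{\rho}'\geq\vec{0}$ and show $\vec{f}^{k}(\bm{\rho}')\to\bm{\rho}^{*}$. I would choose $\beta>1$ large enough that $\beta^{-1}\bm{\rho}^{*}\leq\bm{\rho}'\leq\beta\bm{\rho}^{*}$ componentwise; this is possible because the noise $\sigma^{2}$ in \eqref{eq:sinr+}--\eqref{eq:sinr-} keeps $\vec{f}$ uniformly bounded above on compact sets and bounded away from zero after one application. Consider the upper iterate $\bm{\rho}^{(k+1)}_{\uparrow}=\vec{f}(\bm{\rho}^{(k)}_{\uparrow})$ with $\bm{\rho}^{(0)}_{\uparrow}=\beta\bm{\rho}^{*}$: scalability gives $\bm{\rho}^{(1)}_{\uparrow}<\beta\bm{\rho}^{*}=\bm{\rho}^{(0)}_{\uparrow}$, and monotonicity combined with $\vec{f}(\bm{\rho}^{*})=\bm{\rho}^{*}$ gives $\bm{\rho}^{(1)}_{\uparrow}\geq\bm{\rho}^{*}$. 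Induction then yields a monotonically non-increasing sequence bounded below by $\bm{\rho}^{*}$, which converges to some fixed point and therefore, by the uniqueness proved above, to $\bm{\rho}^{*}$. The symmetric argument starting from $\beta^{-1}\bm{\rho}^{*}$ gives a non-decreasing lower iterate converging to $\bm{\rho}^{*}$ as well. By monotonicity, the iterate starting from $\bm{\rho}'$ is sandwiched between these two, so it too converges to $\bm{\rho}^{*}$.

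The main obstacle will be the monotonic-sandwich step: verifying cleanly that the upper sequence is actually monotonically decreasing (rather than merely bounded) requires combining scalability and monotonicity in the right order, and one has to be careful that the constants $\beta$ used above exist uniformly in the starting vector $\bm{\rho}'$. The existence hypothesis in the lemma is crucial here because, without assuming the limit exists for some $\bm{\rho}$, there would be no candidate fixed point $\bm{\rho}^{*}$ to sandwich against; the SIF properties alone guarantee uniqueness but not existence, which in the present multi-cell NOMA setting depends on the demands $d_{j}$ being feasible under the load cap $\bar{\rho}$.
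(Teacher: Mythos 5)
Your proposal is correct and is essentially the same approach as the paper: the paper's entire proof of Lemma~\ref{lma:lim} is a citation of the standard-interference-function framework in \cite{414651}, and your two-stage argument (fixed-point uniqueness via the scaling contradiction using the monotonicity and scalability of $\vec{f}$ from Theorem~\ref{thm:sif}, then convergence of $\vec{f}^{k}(\bm{\rho}')$ from an arbitrary start by sandwiching between monotone upper and lower iterates launched from $\beta\bm{\rho}^{*}$ and $\beta^{-1}\bm{\rho}^{*}$) is precisely the classical proof contained in that reference. The only technicality to flag is that forming the ratios $\rho^{*}_{i}/\tilde{\rho}_{i}$ and choosing $\beta$ requires fixed points and first iterates to be strictly positive, which holds here because the demands $d_{j}$ are positive, exactly as you noted.
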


\subsection{Optimality and Feasibility}
\label{subsec:nopower}

Based on Theorem~\ref{thm:sif}, we derive sufficient-and-necessary conditions for \minf~in terms of its feasibility and optimality. For any load $\bm{\rho}$, we say that a load $\bm{\rho}$ is achievable if and only if there exist $\vec{q}$, $\vec{x}$, and $\vec{y}$ such that the solution $\langle \bm{\rho},\vec{q},\vec{x},\vec{y}\rangle$ is feasible to \minf.
\begin{lemma}
For any $\bm{\rho}\geq\vec{0}$, if there exists $i\in\I$ such that
$\rho_i<f_i(\bm{\rho}_{-i})$, then $\bm{\rho}$ is not achievable in \minf.
\label{lma:not_feasible}
\end{lemma}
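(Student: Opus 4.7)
My plan is a short proof by contradiction that simply invokes the definition of $f_i$ as the optimum of the single-cell problem~\eqref{eq:minload}.

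Suppose, for contradiction, that $\bm{\rho}$ is achievable in \minf. Then by definition of achievability there exist $\vec{q}$, $\vec{x}$, $\vec{y}$ such that the tuple $\langle\bm{\rho},\vec{q},\vec{x},\vec{y}\rangle$ satisfies constraints~\eqref{eq:minf-rhobar}--\eqref{eq:minf-integer}. The first step is to observe that once the load vector $\bm{\rho}_{-i}$ is regarded as fixed, constraints~\eqref{eq:minf-rho}--\eqref{eq:minf-integer} decouple across cells: all constraints indexed by cell $i$ only involve the variables $\rho_i$, $\vec{q}_i$, $\vec{x}_i$, $\vec{y}_i$ (since the coupling of cell $i$ to the others enters the capacities $c_j$ and $c_{j\u}$ only through $\bm{\rho}_{-i}$, which is now a parameter). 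Consequently, the components $\langle\rho_i,\vec{q}_i,\vec{x}_i,\vec{y}_i\rangle$ of the assumed feasible solution form a feasible solution to the single-cell problem~\eqref{eq:minload} with parameter $\bm{\rho}_{-i}$, and this feasible solution attains objective value $\rho_i$.

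Since $f_i(\bm{\rho}_{-i})$ is defined in Section~\ref{subsec:revisiting} (via~\eqref{eq:fi}) as the optimal value of~\eqref{eq:minload} for the given $\bm{\rho}_{-i}$, any feasible objective value is an upper bound on $f_i(\bm{\rho}_{-i})$. Therefore $f_i(\bm{\rho}_{-i}) \leq \rho_i$, which directly contradicts the hypothesis $\rho_i < f_i(\bm{\rho}_{-i})$. Hence no such achievable $\bm{\rho}$ can exist, establishing the lemma.

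The only subtlety — and it is minor — is confirming that the single-cell problem is well-defined at the given $\bm{\rho}_{-i}$, i.e.\ that $f_i(\bm{\rho}_{-i})$ is a meaningful number and not $+\infty$ by vacuous infeasibility. This is already guaranteed by Lemma~\ref{lma:feasible}, which states that~\eqref{eq:minload} is always feasible for any non-negative $\bm{\rho}_{-i}$, so the argument above is rigorous without additional work. I do not anticipate any real obstacle; the lemma is essentially a tautological consequence of the definition of $f_i$ combined with the per-cell separability of the constraints of \minf.
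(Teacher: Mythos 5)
Your proof is correct and is essentially the same argument as the paper's: both rest on the fact that $f_i(\bm{\rho}_{-i})$ is by definition the minimum value of $\rho_i$ satisfying the cell-$i$ constraints \eqref{eq:minf-rho}--\eqref{eq:minf-integer} under fixed $\bm{\rho}_{-i}$, so no achievable load can fall below it. The paper states this directly in contrapositive form, while you phrase it as a contradiction and spell out the per-cell decoupling and well-definedness (via Lemma~\ref{lma:feasible}); these are presentational elaborations, not a different route.
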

\begin{proof}
Let $\rho'_i = f_i(\bm{\rho}_{-i})$. By the definition of $f_i$, $\rho'_i$ is
the minimum value satisfying
\eqref{eq:minf-rho}\textnormal{--}\eqref{eq:minf-integer} under
$\bm{\rho}_{-i}$. Therefore any $\rho_i$ with $\rho_i<\rho'_i$ is not achievable with
constraints~\eqref{eq:minf-rho}\textnormal{--}\eqref{eq:minf-integer}. Hence the conclusion.
\end{proof}
\begin{theorem}
In \minf,
$\bm{\rho}$ ($\bm{\rho}\leq\bar{\bm{\rho}}$) is achievable if
and only if $\vec{f}(\bm{\rho})$ is achievable and
$\bm{\rho}\geq \vec{f}(\bm{\rho})$.
\label{thm:feasibility}
\end{theorem}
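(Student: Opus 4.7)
The plan is to exploit the per-cell decoupling of \minf~once $\bm{\rho}$ is fixed: the coupling between cells enters only through $c_j(\bm{\rho})$ and $c_{j\u}(\vec{q},\bm{\rho})$, which depend only on $\bm{\rho}_{-i}$ for any UE in cell $i$. Consequently, achievability of $\bm{\rho}$ reduces to the per-cell condition $\rho_i\geq f_i(\bm{\rho}_{-i})$ for every $i$. The forward direction of this per-cell equivalence is by definition of $f_i$ as a minimum; the reverse rests on a padding argument: take the single-cell optimizer at load $f_i(\bm{\rho}_{-i})$ and increase $x_j$ for any $j\in\J_i$ by $\rho_i - f_i(\bm{\rho}_{-i})\geq 0$, which turns (\ref{eq:minf-rho}) into the desired equality at $\rho_i$ while only slackening (\ref{eq:minf-d}) and leaving (\ref{eq:minf-p})--(\ref{eq:minf-integer}) untouched.

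For necessity, given a feasible witness $\langle\bm{\rho},\vec{q},\vec{x},\vec{y}\rangle$ for $\bm{\rho}$, its restriction to cell $i$ is feasible to (\ref{eq:minload}) at input $\bm{\rho}_{-i}$, so $\rho_i\geq f_i(\bm{\rho}_{-i})$ and hence $\bm{\rho}\geq\vec{f}(\bm{\rho})$ (essentially the contrapositive of Lemma~\ref{lma:not_feasible}). To also obtain achievability of $\vec{f}(\bm{\rho})$, I would invoke \textsc{S-Cell}$(\bm{\rho}_{-i})$ for each $i$, producing $(\vec{q}'_i,\vec{x}'_i,\vec{y}'_i)$ at cell load $f_i(\bm{\rho}_{-i})$, and assemble them into a candidate at the joint load $\vec{f}(\bm{\rho})$. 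Constraints (\ref{eq:minf-rhobar})--(\ref{eq:minf-rho}) and (\ref{eq:minf-p})--(\ref{eq:minf-integer}) hold by construction; the only delicate check is (\ref{eq:minf-d}), which was satisfied using interference $\bm{\rho}_{-i}$ but now sees the smaller $\vec{f}(\bm{\rho})_{-i}\leq \bm{\rho}_{-i}$, which can only enlarge $c_j$ and $c_{j\u}$, so the QoS constraints are still met.

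For sufficiency, assume $\vec{f}(\bm{\rho})$ is achievable and $\bm{\rho}\geq\vec{f}(\bm{\rho})$ (so $\bm{\rho}\leq\bar{\bm{\rho}}$ gives $\vec{f}(\bm{\rho})\leq\bar{\bm{\rho}}$ for free). For each cell $i$, since $\rho_i\geq f_i(\bm{\rho}_{-i})$, the per-cell padding construction outlined above yields $(\vec{q}_i,\vec{x}_i,\vec{y}_i)$ with cell load exactly $\rho_i$ satisfying all cell-$i$ constraints at interference parameter $\bm{\rho}_{-i}$. Aggregating across cells, each cell's interference parameter matches the actual load of the others, so the joint tuple $\langle\bm{\rho},\vec{q},\vec{x},\vec{y}\rangle$ is feasible to \minf.

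The main obstacle is to keep the two uses of monotonicity straight. In the necessity half, I must move from the given load $\bm{\rho}$ down to $\vec{f}(\bm{\rho})$, which requires the statement that decreased load $\Rightarrow$ decreased interference $\Rightarrow$ strengthened (\ref{eq:minf-d}); this is where monotonicity of $c_j, c_{j\u}$ in the interfering loads, closely tied to Theorem~\ref{thm:sif}, is invoked. In the sufficiency half, padding works at the target $\bm{\rho}$ directly, so the only monotonicity needed is the trivial one of the QoS LHS in $x_j$. Careful separation of the role of $\bm{\rho}_{-i}$ as a fixed parameter inside $f_i$ versus as part of the ambient network load is what makes the argument clean.
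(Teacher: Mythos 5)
Your proof is correct and follows essentially the same route as the paper's: necessity via the contrapositive of Lemma~\ref{lma:not_feasible} together with reuse of the per-cell witnesses produced when computing $\vec{f}(\bm{\rho})$, and sufficiency by reassembling those same per-cell witnesses at the load $\bm{\rho}$. You are in fact more explicit than the paper on the two delicate points---the re-check of~\eqref{eq:minf-d} under the reduced interference $\vec{f}(\bm{\rho})_{-i}\leq\bm{\rho}_{-i}$ in the necessity direction, and the padding of some $x_j$ so that the equality~\eqref{eq:minf-rho} holds at load exactly $\rho_i$ in the sufficiency direction---both of which the paper's own proof passes over with ``by the definition of $\vec{f}(\bm{\rho})$.''
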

\begin{proof}
By the inverse proposition of
Lemma~\ref{lma:not_feasible}, an achievable $\bm{\rho}$ always satisfies $\bm{\rho}\geq\vec{f}(\bm{\rho})$. The necessity is proved as follows. Suppose $\bm{\rho}$ is achievable for \minf. Consider using $\vec{f}(\bm{\rho})$ as another solution (together with the $\langle \vec{q},\vec{x},\vec{y} \rangle$ obtained when computing $\vec{f}(\bm{\rho})$). Then $\vec{f}(\bm{\rho})$ satisfies~\eqref{eq:minf-rhobar}. Also, $\vec{f}(\bm{\rho})$ together with its $\langle \vec{q},\vec{x},\vec{y} \rangle$ fulfills \eqref{eq:minf-rho}\textnormal{--}\eqref{eq:minf-integer} by the definition of $\vec{f}(\bm{\rho})$. Thus, $\vec{f}(\bm{\rho})$ is achievable. 


For the sufficiency, note that the achievability of $\vec{f}(\bm{\rho})$ implies that $\bm{\rho}$ along with $\langle \vec{q},\vec{x},\vec{y} \rangle$ obtained by solving $\vec{f}(\bm{\rho})$ satisfies \eqref{eq:minf-rho}\textnormal{--}\eqref{eq:minf-integer}.  Combined with the precondition $\rho_i\leq\bar{\rho}$ ($i\in\I$), the load $\bm{\rho}$ is feasible to \eqref{eq:minf-rhobar}\textnormal{--}\eqref{eq:minf-integer} (and thus achievable in \minf). Hence the conclusion.
\end{proof}

Theorem~\ref{thm:feasibility} provides an effective method for improving any sub-optimal solution to \minf. \textcolor[rgb]{0,0,0}{\textit{For any achievable $\bm{\rho}$, evaluating $\vec{f}(\bm{\rho})$ always yields a better solution}}\footnote{Rigorously, Theorem~\ref{thm:feasibility} implies that the new solution is not worse. In fact it is guaranteed to be strictly better (with strictly monotonic $F(\bm{\rho})$) unless the old one is already optimal. A proof can be easily derived based on Theorem~\ref{thm:optimality}.}. 
\textcolor[rgb]{0,0,0}{This conclusion is based on Theorem~\ref{thm:feasibility}: Suppose $\bm{\rho}$ ($\bm{\rho}\geq\vec{0}$) is the current cell load, and let $\bm{\rho}'$ be the function value evaluated at $\bm{\rho}$, i.e. $\bm{\rho}'=\vec{f}(\bm{\rho})$. By Theorem \ref{thm:feasibility}, we always have $\bm{\rho}'\leq\bm{\rho}$.}

\textcolor[rgb]{0,0,0}{
Recall that $F(\bm{\rho})$ is the objective function of the problem \textsc{MinF}.
Theorem~\ref{thm:optimality} below states that, the fixed point of $\vec{f}(\bm{\rho})$ (along with $\langle \vec{q},\vec{x},\vec{y}\rangle$ obtained when computing $\vec{f}(\bm{\rho})$) is optimal to \minf. }

\begin{theorem} 
Load $\bm{\rho}^{*}$ is the optimum of 
\minf~if (and only if when $F(\bm{\rho})$ is strictly monotonic) $\bm{\rho}^{*} = \vec{f}(\bm{\rho}^{*})\leq\bar{\bm{\rho}}$.
\label{thm:optimality}
\end{theorem}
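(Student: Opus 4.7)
The plan is to split the statement into the sufficiency direction (if $\bm{\rho}^{*}=\vec{f}(\bm{\rho}^{*})\leq\bar{\bm{\rho}}$ then $\bm{\rho}^{*}$ is optimal) and the necessity direction (if $\bm{\rho}^{*}$ is optimal and $F$ is strictly monotonic then $\bm{\rho}^{*}=\vec{f}(\bm{\rho}^{*})$; the bound $\bm{\rho}^{*}\leq\bar{\bm{\rho}}$ is immediate from~\eqref{eq:minf-rhobar}). Both directions lean on the fact that $\vec{f}$ is an SIF (Theorem~\ref{thm:sif}) together with the feasibility characterization in Theorem~\ref{thm:feasibility}.

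For sufficiency, I would first invoke Theorem~\ref{thm:feasibility} in the easy direction: since $\vec{f}(\bm{\rho}^{*})=\bm{\rho}^{*}\leq\bar{\bm{\rho}}$ is trivially achievable in itself, $\bm{\rho}^{*}$ is achievable with the $\langle\vec{q},\vec{x},\vec{y}\rangle$ produced when evaluating $\vec{f}(\bm{\rho}^{*})$. Next I must show no achievable load $\bm{\rho}'$ yields a smaller $F$ value. I would pick an arbitrary achievable $\bm{\rho}'$ and apply Theorem~\ref{thm:feasibility} (converse direction) to get $\bm{\rho}'\geq\vec{f}(\bm{\rho}')$, then iterate: by the monotonicity in Theorem~\ref{thm:sif}, the sequence $\bm{\rho}',\vec{f}(\bm{\rho}'),\vec{f}^{2}(\bm{\rho}'),\ldots$ is componentwise non-increasing and non-negative, so it converges. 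By Lemma~\ref{lma:lim} (the uniqueness-of-limit property of SIFs from~\cite{414651}) the limit is the same from every starting point, and $\bm{\rho}^{*}$ being a fixed point of $\vec{f}$ is exactly that limit. Hence $\bm{\rho}'\geq\bm{\rho}^{*}$ componentwise, and the monotonicity of $F(\cdot)$ gives $F(\bm{\rho}')\geq F(\bm{\rho}^{*})$, proving optimality.

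For necessity, assume $\bm{\rho}^{*}$ is optimal and $F$ is strictly monotonic. Feasibility of $\bm{\rho}^{*}$ yields $\bm{\rho}^{*}\leq\bar{\bm{\rho}}$ directly, and the forward direction of Theorem~\ref{thm:feasibility} gives $\bm{\rho}^{*}\geq\vec{f}(\bm{\rho}^{*})$ together with the achievability of $\vec{f}(\bm{\rho}^{*})$. Suppose toward contradiction that $\bm{\rho}^{*}\neq\vec{f}(\bm{\rho}^{*})$; then at least one coordinate is strictly larger in $\bm{\rho}^{*}$ than in $\vec{f}(\bm{\rho}^{*})$, and strict monotonicity of $F$ forces $F(\vec{f}(\bm{\rho}^{*}))<F(\bm{\rho}^{*})$. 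Since $\vec{f}(\bm{\rho}^{*})$ is achievable, this contradicts the optimality of $\bm{\rho}^{*}$; hence $\bm{\rho}^{*}=\vec{f}(\bm{\rho}^{*})$.

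The main obstacle I expect is the sufficiency argument, specifically justifying that the monotone iterates $\vec{f}^{k}(\bm{\rho}')$ actually converge to $\bm{\rho}^{*}$ rather than to some other limit; this is where the SIF machinery from~\cite{414651} together with Lemma~\ref{lma:lim} is essential, since scalability plus monotonicity yield uniqueness of the fixed point (any two would contradict scalability applied to an appropriate scaling $\alpha>1$). Everything else is bookkeeping against Theorems~\ref{thm:sif} and~\ref{thm:feasibility}.
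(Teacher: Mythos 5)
Your proof is correct and follows the paper's overall strategy: the same sufficiency/necessity decomposition, resting on the feasibility characterization (Theorem~\ref{thm:feasibility}), the SIF property (Theorem~\ref{thm:sif}), and the uniqueness of the limit of iterates (Lemma~\ref{lma:lim}). The sufficiency half is essentially identical to the paper's: iterate $\vec{f}$ from an arbitrary feasible $\bm{\rho}'$, use monotone decrease plus Lemma~\ref{lma:lim} to conclude the limit is the fixed point $\bm{\rho}^{*}$, hence $\bm{\rho}^{*}\leq\bm{\rho}'$ componentwise. Where you genuinely deviate is the necessity half: the paper also builds the decreasing sequence $\vec{f}^{k}(\bm{\rho}^{*})$, passes to its limit $\bm{\rho}'$, and argues $\bm{\rho}'=\bm{\rho}^{*}$ by optimality, whereas you obtain the contradiction after a \emph{single} application of $\vec{f}$: if $\vec{f}(\bm{\rho}^{*})\neq\bm{\rho}^{*}$ then $\vec{f}(\bm{\rho}^{*})\leq\bm{\rho}^{*}$ with a strict coordinate, so strict monotonicity of $F$ and the achievability of $\vec{f}(\bm{\rho}^{*})$ (from Theorem~\ref{thm:feasibility}) already beat the alleged optimum. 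This is a clean simplification -- it avoids invoking existence of the limit in that direction, and it isolates exactly where strict monotonicity of $F$ is needed (the paper uses it too, but only implicitly in the phrase ``otherwise $\bm{\rho}'$ would lead to a better objective value''). Two further points you make explicit that the paper glosses over, both to your credit: that the fixed point $\bm{\rho}^{*}$ is itself feasible for \minf{} (via the per-cell solutions produced when evaluating $\vec{f}(\bm{\rho}^{*})$, which are mutually consistent precisely because $\bm{\rho}^{*}$ is a fixed point), and that monotonicity of $F$ is what converts the componentwise bound $\bm{\rho}^{*}\leq\bm{\rho}'$ into $F(\bm{\rho}^{*})\leq F(\bm{\rho}')$.
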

\begin{proof}
(Necessity) If $\bm{\rho}^{*}$ is optimal (and thus feasible), then obviously we have $\bm{\rho}^{*}\leq\bar{\bm{\rho}}$.
By
Theorem~\ref{thm:feasibility}, $\vec{f}(\bm{\rho}^{*})$ 
is also feasible and 
$\vec{f}(\bm{\rho}^{*})\leq\bm{\rho}^{*}$. By successively applying Theorem~\ref{thm:feasibility}, $\vec{f}^{k}(\bm{\rho}^{*})$ for any $k\geq 1$ is a 
feasible solution and $\vec{f}^{k}(\bm{\rho}^{*})\leq\vec{f}^{k-1}(\bm{\rho}^{*})$. Let $\bm{\rho}'=\lim_{k\rightarrow\infty}\vec{f}^{k}(\bm{\rho}^{*})$.
Then $\bm{\rho}'\leq \bm{\rho}^{*}$ holds by the above derivation. In addition, note that $\bm{\rho}'$ is a feasible solution as well. By that $\bm{\rho}^{*}$ is optimal for \minf, we have $\bm{\rho}'=\bm{\rho}^{*}$, otherwise $\bm{\rho}'$ would lead to a better objective value in \minf~than $\bm{\rho}^{*}$. Hence $\bm{\rho}^{*}=\lim_{k\rightarrow\infty}\vec{f}^{k}(\bm{\rho}^{*})$, i.e. $\bm{\rho}^{*}=\vec{f}(\bm{\rho}^{*})$.

(Sufficiency) By Theorem~\ref{thm:feasibility}, for any feasible $\bm{\rho}$, $\lim_{k\rightarrow\infty}\vec{f}^{k}(\bm{\rho})$ is feasible and $\lim_{k\rightarrow\infty}\vec{f}^{k}(\bm{\rho})\leq\bm{\rho}$ holds. By Lemma~\ref{lma:lim}, the limit remains for any $\bm{\rho}\geq\vec{0}$, and thus $\lim_{k\rightarrow\infty}\vec{f}^{k}(\bm{\rho})=\lim_{k\rightarrow\infty}\vec{f}^{k}(\bm{\rho}^{*})$. Since $\bm{\rho}^{*}=\vec{f}(\bm{\rho}^{*})$, we have $\bm{\rho}^{*}=\lim_{k\rightarrow\infty}\vec{f}^{k}(\bm{\rho}^{*})$. Thus $\bm{\rho}^{*}\leq\bm{\rho}$ for any feasible $\bm{\rho}$, meaning that $\bm{\rho}^{*}$ is optimal for \minf.

Hence the conclusion.
\end{proof}



\subsection{The Algorithmic Framework}

Starting from any non-negative $\bm{\rho}^{(0)}$, we compute $\lim_{k\rightarrow\infty}\vec{f}^{k}(\bm{\rho})$ iteratively. During each iteration, $n$ problems in~\eqref{eq:minload} for $i\in\I$ are solved. The convergence is guaranteed by Lemma~\ref{lma:lim}. At the convergence, by Theorem~\ref{thm:optimality}, the optimum is reached. Note that once $\bm{\rho}^{(k)}$ is feasible for any $k\geq 0$, then by Theorem~\ref{thm:feasibility}, all $\bm{\rho}^{(k+1)},\bm{\rho}^{(k+2)},\ldots$ are feasible as well. One can terminate prematurely to obtain a sub-optimal solution with less computation. \textsc{M-Cell} is outlined below. 
\begin{tcolorbox}[boxrule=0.5pt]
\vspace{-0.2cm}
\begin{codebox}
\Procname{$\proc{M-Cell}(\bm{\rho}^{(0)},\epsilon)$}
\li $k\gets 0$ \label{alg:network-k}

\li \Repeat 
\li $k \gets k + 1$
\li  \For $i \in \I$
\li  \> $\langle\rho_i^{(k)},\vec{q}_i^{(k)},\vec{x}^{(k)}_i,\vec{y}_i^{(k)}\rangle \gets \proc{S-Cell}(\bm{\rho}_{-i}^{(k-1)})$ \label{alg:network-single-cell}
\li \textbf{end for}
\li \Until $\norm{\bm{\rho}^{(k)}-\bm{\rho}^{(k-1)}}_{\infty}\leq\epsilon$ \label{alg:network-epsilon}
\li \textcolor[rgb]{0,0,0}{\If $\rho_i^{(k)}>\bar{\rho}$ for some $i$ ($i\in\I$)} \label{alg:network-feasibility-1}
\li \> \textcolor[rgb]{0,0,0}{\textsc{MinF} is infeasible}\label{alg:network-feasibility-2}
\li \textbf{end if}
\li \Return $\langle\bm{\rho}^{(k)},\vec{q}^{(k)},\vec{x}^{(k)},\vec{y}^{(k)}\rangle$\label{alg:network-return}
\end{codebox}
\end{tcolorbox}

\textsc{M-Cell} applies fixed point iterations using $\vec{f}(\bm{\rho})$. 
The convergence of fixed point iterations on $\vec{f}(\bm{\rho})$ is linear~\cite{lemmens_nussbaum_2012}. \textcolor[rgb]{0,0,0}{The feasibility check is done by Lines \ref{alg:network-feasibility-1} and \ref{alg:network-feasibility-2}. The infeasibility of \textsc{MinF} implies that at least one cell will be overloaded for meeting user demands. If this happens, we know for sure that the user demands cannot be satisfied.}
We remark that all the conclusions derived in this section are independent of the implementation of \textsc{S-Cell} in Line \ref{alg:network-single-cell}. As long as the sub-routine \textsc{S-Cell} yields the optimal solution to~\eqref{eq:minload}, \textsc{M-Cell} achieves the optimum of \minf\footnote{\textcolor[rgb]{0,0,0}{With filtered $\U$, the proposed \textsc{M-Cell} is proved to converge to the global optimum of \textsc{MinF}. Without filtered $\U$ (or for any possible candidate pairs set $\U$), \textsc{M-Cell} is still applicable to \textsc{MinF} though there is no theoretical guarantee of convergence or optimality, as the decoding order for each pair may change in the iteration process.}}. 
Besides, \textsc{M-Cell} possesses the optimality for \textsc{MinF} with any objective function that is monotonically (but not necessarily strictly monotonic) increasing in each element of $\bm{\rho}$. 
These two properties make \textsc{M-Cell} an algorithmic framework. To our knowledge, the most efficient \textsc{S-Cell} is what we derived in Section~\ref{sec:single}.

\pgfplotsset{compat=1.11,
        /pgfplots/ybar legend/.style={
        /pgfplots/legend image code/.code={%
        \draw[##1,/tikz/.cd,bar width=3pt,yshift=-0.2em,bar shift=0pt]
                plot coordinates {(0cm,0.8em)};},
},
}
\begin{figure}[t!] 
\centering 
\begin{tikzpicture}
\begin{axis}[
	xlabel={Normalized demand $d$},
	ylabel={Total load},
	label style = {font=\fontsize{9pt}{10pt}\selectfont},
	legend cell align={left},
	legend pos = north west,
	legend style = {font=\fontsize{8pt}{10pt}\selectfont},
	axis background/.style={fill=white},
minor x tick num=4,
minor y tick num=4,
major tick length=0.15cm,
minor tick length=0.075cm,
tick style={semithick,color=black},
	height=0.667\linewidth,
	width=\linewidth,
]
	
\addplot [smooth, mark=*, color=black] coordinates {
	(0.2, 1.26603)
	(0.4, 3.25773)
	(0.6, 5.98824)
	(0.8, 9.64595)
	(1.0, 14.5622)
};

\addplot [smooth, mark=square, color=brown] coordinates {
	(0.2, 1.26603)
	(0.4, 3.07362)
	(0.6, 5.56959)
	(0.8, 8.85518)
	(1.0, 13.2036)
};

\addplot [smooth, mark=triangle*, color=blue] coordinates {
	(0.2, 1.15191)
	(0.4, 2.87725)
	(0.6, 5.16167)
	(0.8, 8.14019)
	(1.0, 12.0569)
};

\addplot [smooth, mark= diamond, color=red] coordinates{
	(0.2, 1.21471)
	(0.4, 2.34416)
	(0.6, 4.23353)
	(0.8, 6.72999)
	(1.0, 10.0478)
};

\legend{OMA$_{\text{(Opt)}}$, SP$_{\text{(Uni)}}$, SP$_{\text{(FTPC)}}$, NOMA$_{\text{(Opt)}}$}
\end{axis}
\end{tikzpicture}\\~\\

\begin{tikzpicture}
\begin{axis}[
	label style = {font=\fontsize{9pt}{10pt}\selectfont},
	legend cell align={left},
	legend pos = north west,
	legend style = {font=\fontsize{8pt}{10pt}\selectfont},
	axis background/.style={fill=white},
	xlabel={Normalized demand $d$},
	ylabel={Max load},
minor x tick num=4,
minor y tick num=4,
major tick length=0.15cm,
minor tick length=0.075cm,
tick style={semithick,color=black},
	height=0.667\linewidth,
	width=\linewidth,
]
	
\addplot [smooth, mark=*, color=black] coordinates {
	(0.2, 0.0769911)
	(0.4, 0.205131)
	(0.6, 0.388603)
	(0.8, 0.644085)
	(1.0, 1.0)
};

\addplot [smooth, mark=square, color=brown] coordinates {
	(0.2, 0.0736585)
	(0.4, 0.192651)
	(0.6, 0.359418)
	(0.8, 0.587716)
	(1.0, 0.901219)
};

\addplot [smooth, mark=triangle*, color=blue] coordinates {
	(0.2, 0.0701859)
	(0.4, 0.181201)
	(0.6, 0.334799)
	(0.8, 0.543253)
	(1.0, 0.827574)
};

\addplot [smooth, mark= diamond, color=red] coordinates{
	(0.2, 0.0581377)
	(0.4, 0.150397)
	(0.6, 0.279505)
	(0.8, 0.456336)
	(1.0, 0.699261)
};
\legend{OMA$_{\text{(Opt)}}$, SP$_{\text{(Uni)}}$, SP$_{\text{(FTPC)}}$, NOMA$_{\text{(Opt)}}$}
\end{axis}
\end{tikzpicture}   
\caption{This figure illustrates the total and maximum load in function of normalized demand. At $d=1.0$, the network reaches its resource limit such that any larger demand cannot be satisfied by OMA$_{\text{(Opt)}}$. SP$_{\text{(Uni)}}$ and SP$_{\text{(FTPC)}}$ are two sub-optimal NOMA power allocation schemes, for which, pair selections are optimally computed.}
\label{fig:load} 
\end{figure}
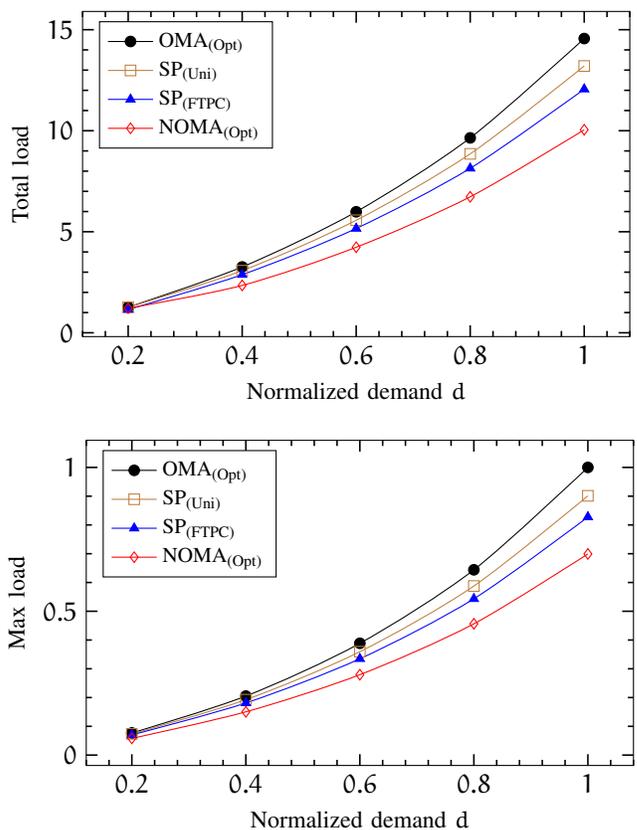

For a cell $i$ ($i\in\I$), given the information of other cells' load $\bm{\rho}_{-i}$, solving $f_i(\bm{\rho})$ is based on local information, making \textsc{M-Cell} suitable to run in a distributed manner. 
A cell can maintain the information of a subset of cells (e.g., the surrounding cells) having major significance in terms of interference, and exchange the information with other cells periodically, which can be implemented via the LTE X2 interface. 
The technique called ``asynchronous fixed-point iterations''~\cite{414651} can be used. 

\begin{figure*}[!t]
\begin{center}
\setlength{\tabcolsep}{0.2em}
\subfigure[PA$_{\text{(B-W)}}$ \label{fig:pairing-best-worst}]{\includegraphics[width=0.245\textwidth]{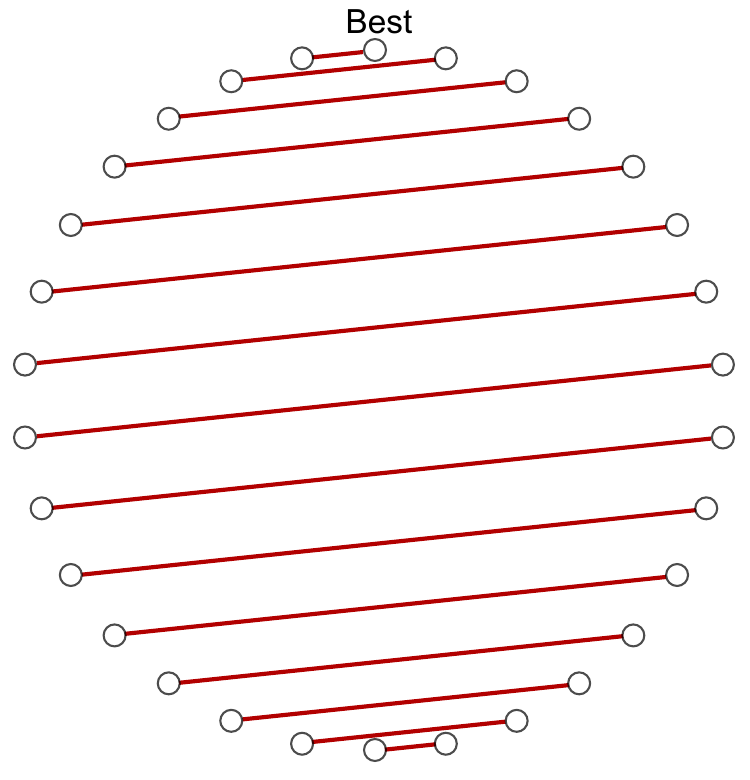}} 
\subfigure[PA$_{\text{(B-SB)}}$ \label{fig:pairing-neighbored}]{\includegraphics[width=0.245\textwidth]{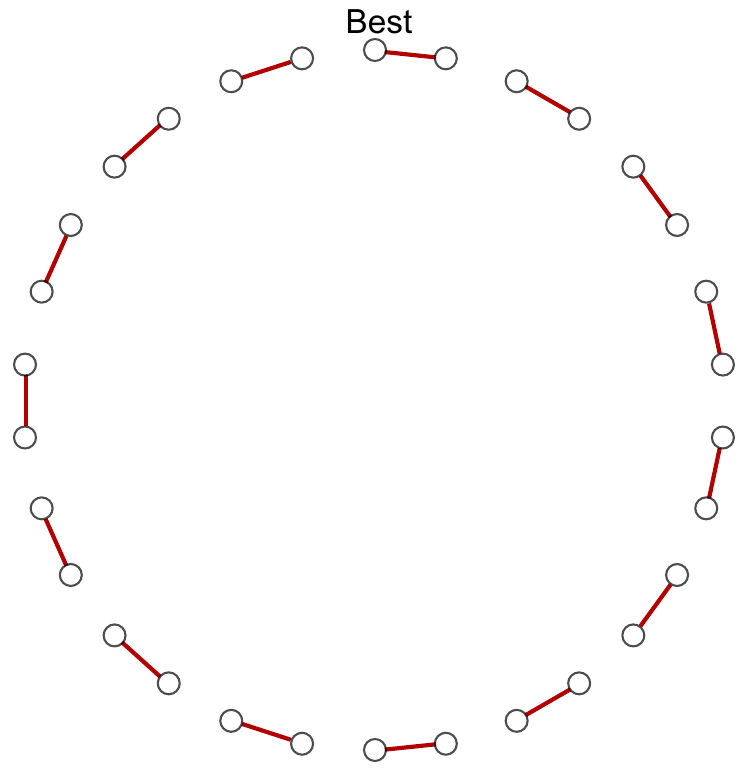}} 
\subfigure[Optimal pairing in filtered $\U$ \label{fig:pairing-filtered}]{\includegraphics[width=0.245\textwidth]{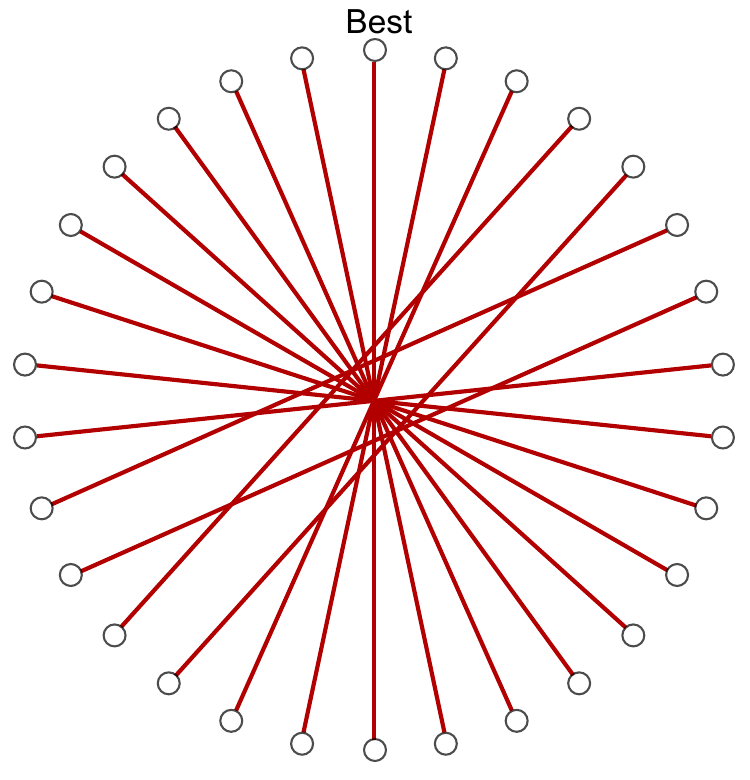}}
\subfigure[Pairing in non-filtered $\U$  \label{fig:pairing-non-filtered}]{\includegraphics[width=0.245\textwidth]{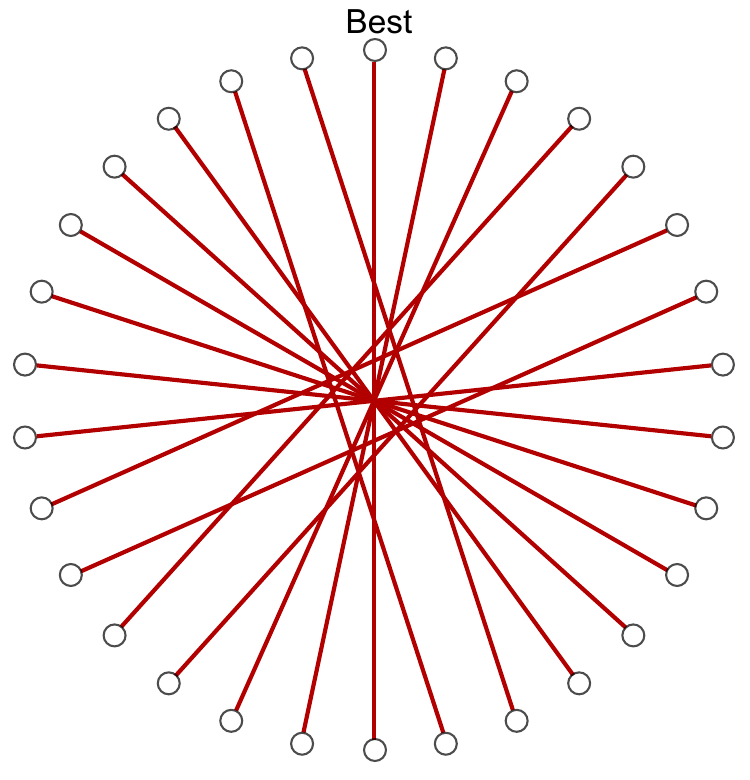}} 
\end{center}
\caption{This figure illustrates pair selection in a typical cell of $30$ UEs. The UEs are represented by the vertices on a circle. 
The UE marked ``\textsf{\scriptsize Best}'' at the top position has the best channel condition. 
The UEs are arranged clock-wisely in the descending order of channel conditions.
The edges are selected pairs. Each subfigure represents one pairing method. \figurename~\ref{fig:pairing-best-worst} and \figurename~\ref{fig:pairing-neighbored} show PA$_{\text{(B-W)}}$ and PA$_{\text{(B-SB)}}$, respectively. \figurename~\ref{fig:pairing-filtered} shows optimal pairing after filtered $\U$. \figurename~\ref{fig:pairing-non-filtered} shows the pairing solution obtained with non-filtered $\U$.}
\label{fig:pairing}
\end{figure*}
\pgfplotsset{compat=1.11,
        /pgfplots/ybar legend/.style={
        /pgfplots/legend image code/.code={%
        \draw[##1,/tikz/.cd,bar width=3pt,yshift=-0.2em,bar shift=0pt]
                plot coordinates {(0cm,0.8em)};},
},
}
\begin{figure*}[t!]
\centering
\begin{tikzpicture}
\begin{axis}[
	font=\fontsize{8pt}{10pt}\selectfont,
    ybar,
    legend style={at={(0.28,0.9)},
      anchor=north,legend columns=-1},
    ylabel={Cell load},
    xlabel={Cell index},
    symbolic x coords={1,2,3,4,5,6,7,8,9,10,11,12,13,14,15,16,17,18,19},
    xtick=data,
    	height=4cm,
	width=\textwidth,
	bar width = 0.2cm,
	xtick align=inside,
	minor y tick num=4,
	major tick length=0.15cm,
	minor tick length=0.075cm,
	tick style={semithick,color=black},
    ]
\addplot  coordinates {(1, 0.383168) (2, 0.440131) (3, 0.462814) (4, 0.469713) (5, 0.48283) (6,0.486942) (7, 0.494808) (8, 0.50069) (9, 0.506148) 
                     (10,0.511738) (11, 0.52642) (12, 0.536375) (13, 0.541067) (14, 0.544915) (15,0.555893) (16, 0.582443) (17, 0.651261) (18,0.65367) (19,0.696861)};
\addplot  coordinates {(1, 0.383346) (2, 0.440405) (3, 0.463309) (4, 0.47017) (5, 0.48307) (6, 0.487386) (7, 0.49505) (8, 0.501243) (9, 0.506527) 
                     (10, 0.512171) (11, 0.527345) (12, 0.537577) (13, 0.541425) (14, 0.546102) (15, 0.556359) (16, 0.587553) (17, 0.652278) (18, 0.65723) (19, 0.699261)};

\legend{$|\U|= {30 \choose 2}\times 19=8265$~~~~~~,$|\U|=5779$ ($\U$ filtered by Lemma~\ref{rmk:decoding})}
\end{axis}
\end{tikzpicture}
\caption{This figure shows the optimized load levels of all $19$ cells. The cells are numbered in an ascending order of loads. The blue bars show the computed cell load under $\U$ composed of all ${30 \choose 2}\times 19$ pairs. The red bars show the minimum cell load with pairs satisfying Lemma~\ref{rmk:decoding}. }
\label{fig:cell-load}
\end{figure*}
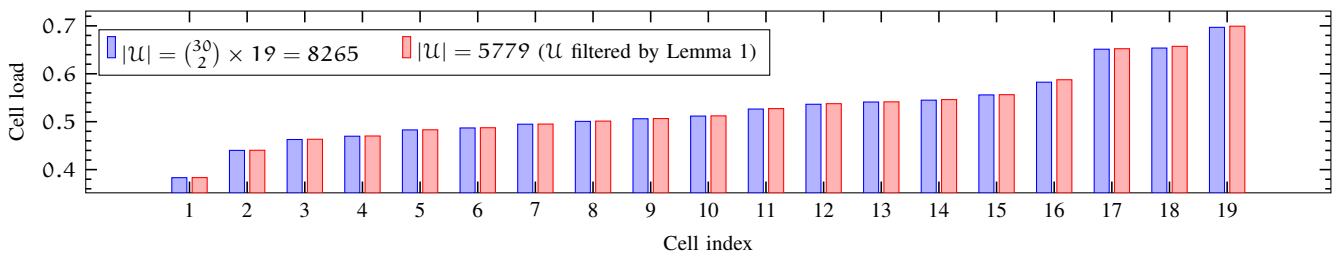
\pgfplotsset{compat=1.11,
        /pgfplots/ybar legend/.style={
        /pgfplots/legend image code/.code={%
        \draw[##1,/tikz/.cd,bar width=3pt,yshift=-0.2em,bar shift=0pt]
                plot coordinates {(0cm,0.8em)};},
},
}
\begin{figure}[!t]
\centering
\begin{tikzpicture}
\begin{axis}[
	label style = {font=\fontsize{9pt}{10pt}\selectfont},
	legend style = {font=\fontsize{8pt}{10pt}\selectfont},
    ybar,
    legend style={at={(0.2,0.96)},
      anchor=north,legend columns=-1},
    ylabel={Average load},
    xlabel={Normalized demand $d$},
    symbolic x coords={0.2, 0.4, 0.6, 0.8, 1.0},
    xtick=data,
	minor x tick num=4,
	minor y tick num=4,
	major tick length=0.15cm,
	minor tick length=0.075cm,
	tick style={semithick,color=black},
	height=0.667\linewidth,
	width=\linewidth,
    bar width = 0.15cm,
    legend entries = {OMA$_{\text{(Opt)}}$, PA$_{\text{(B-W)}}$-SP$_{\text{(Uni)}}$,  PA$_{\text{(B-W)}}$-SP$_{\text{(FTPC)}}$, PA$_{\text{(B-W)}}$-SP$_{\text{(Opt)}}$, NOMA$_{\text{(Opt)}}$},
    legend columns = 1,
    legend cell align={left},
    xtick align=inside,
    ]
\addplot  coordinates {(0.2, 0.0666332) (0.4, 0.171459) (0.6, 0.315171)  (0.8, 0.507682) (1.0, 0.766432) };
\addplot  coordinates {(0.2, 0.0642668) (0.4, 0.163159) (0.6, 0.296563)  (0.8, 0.472869) (1.0, 0.707021) };
\addplot  coordinates {(0.2, 0.0606416) (0.4, 0.152161) (0.6, 0.274757)  (0.8, 0.436202) (1.0, 0.650116) };
\addplot  coordinates {(0.2, 0.0516132) (0.4, 0.130956) (0.6, 0.238668)  (0.8, 0.381958) (1.0, 0.573205) };
\addplot  coordinates {(0.2, 0.0493522) (0.4, 0.123377) (0.6, 0.222817)  (0.8, 0.35421) (1.0, 0.528832) };
\end{axis}
\end{tikzpicture}
\caption{This figure evaluates PA$_{\text{(B-W)}}$, combined with three power split schemes, SP$_{\text{(Uni)}}$, SP$_{\text{(FTPC)}}$, and SP$_{\text{(Opt)}}$. In SP$_{\text{(Opt)}}$, the power split is optimal for each pair. OMA$_{\text{(Opt)}}$ and NOMA$_{\text{(Opt)}}$ are baselines. }
\label{fig:Best-Worst}
\end{figure}

\pgfplotsset{compat=1.11,
        /pgfplots/ybar legend/.style={
        /pgfplots/legend image code/.code={%
        \draw[##1,/tikz/.cd,bar width=3pt,yshift=-0.2em,bar shift=0pt]
                plot coordinates {(0cm,0.8em)};},
},
}
\begin{figure}[!t]
\centering
\begin{tikzpicture}
\begin{axis}[
	label style = {font=\fontsize{9pt}{10pt}\selectfont},
	legend style = {font=\fontsize{8pt}{10pt}\selectfont},
    ybar,
    legend style={at={(0.2,0.96)},
      anchor=north,legend columns=-1},
    ylabel={Average load},
    xlabel={Normalized demand $d$},
    symbolic x coords={0.2, 0.4, 0.6, 0.8, 1.0},
    xtick=data,
	minor x tick num=4,
	minor y tick num=4,
	major tick length=0.15cm,
	minor tick length=0.075cm,
	tick style={semithick,color=black},
	height=0.667\linewidth,
	width=\linewidth,
    bar width = 0.15cm,
    legend entries = {OMA$_{\text{(Opt)}}$, PA$_{\text{(B-SB)}}$-SP$_{\text{(Uni)}}$,  PA$_{\text{(B-SB)}}$-SP$_{\text{(FTPC)}}$, PA$_{\text{(B-SB)}}$-SP$_{\text{(Opt)}}$, NOMA$_{\text{(Opt)}}$},
    legend columns = 1,
    legend cell align={left},
    xtick align=inside,
    ]
\addplot  coordinates {(0.2, 0.0666332) (0.4, 0.171459) (0.6, 0.315171)  (0.8, 0.507682) (1.0, 0.766432) };
\addplot  coordinates {(0.2, 0.0663363) (0.4, 0.170257) (0.6, 0.312196)  (0.8, 0.501665) (1.0, 0.755479) };
\addplot  coordinates {(0.2, 0.0663068) (0.4, 0.170148) (0.6, 0.311947)  (0.8, 0.501193) (1.0, 0.754668) };
\addplot  coordinates {(0.2, 0.0651647) (0.4, 0.167084) (0.6, 0.306346)  (0.8, 0.492281) (1.0, 0.741489) };
\addplot  coordinates {(0.2, 0.0493522) (0.4, 0.123377) (0.6, 0.222817)  (0.8, 0.35421) (1.0, 0.528832) };
\end{axis}
\end{tikzpicture}
\caption{This figure evaluates PA$_{\text{(B-SB)}}$, combined with three power split schemes, SP$_{\text{(Uni)}}$, SP$_{\text{(FTPC)}}$, and SP$_{\text{(Opt)}}$. In SP$_{\text{(Opt)}}$, the power split is optimal for each pair. OMA$_{\text{(Opt)}}$ and NOMA$_{\text{(Opt)}}$ are baselines. }
\label{fig:Neighbored}
\end{figure}

 The asynchronous fixed-point iterations converge to the fixed point that is the same as obtained by its synchronized version. Intuitively, the fixed point is unique, regardless of how we reach it.

\section{Performance Evaluation}
\label{sec:simulation}

\begin{table}[!t]
\centering
\caption{Simulation Parameters.}
\begin{tabular}{ll}
\toprule
\textbf{Parameter} & \textbf{Value} \\
Cell radius & $500$ m\\
Carrier frequency & $2$ GHz \\
Total bandwidth & $20$ MHz\\
Cell load limit $\bar{\rho}$ & $1.0$ \\
Path loss model & COST-231-HATA \\
Shadowing (Log-normal) & $6$ dB standard deviation\\
Fading & Rayleigh flat fading \\
Noise power spectral density & $-173$ dBm/Hz \\
RB power $p_i$ ($i\in\I$) & $800$ mW \\
Convergence tolerance ($\epsilon$) & $10^{-4}$ \\
\bottomrule
\end{tabular}
\label{tab:sim}	
\end{table}

We use a cellular network of $19$ cells. To eliminate edge effects, wrap-around technique~\cite{wrap-around} is applied. Inside each cell, $30$ UEs are randomly and uniformly distributed. In each cell, there are in total ${30 \choose 2}=435$ possible choices for user pairing in NOMA. User demands are set to be a uniform value $d$. In the simulations, $d$ is normalized by $M\times B$ in~\eqref{eq:cj} and~\eqref{eq:cju}, and belongs to $(0,1]$. The network in OMA reaches the resource limit at $d=1.0$, i.e., any $d>1.0$ leads to at least one cell being overload in OMA.
 Other parameters are given in~\tablename~\ref{tab:sim}.

We consider two objectives for performance evaluation: resource efficiency and load balancing. For resource efficiency, the objective function is $F(\bm{\rho})=\sum_{i\in\I} \rho_i$, i.e., to minimize the total network time-frequency resource consumption (or cells' average resource consumption if divided by $n$). For load balancing, we adopt min-max fairness and the objective function is $F(\bm{\rho})=\max_{i\in\I} \rho_i$. Section~\ref{subsec:sim-power-alloc} and Section~\ref{subsec:sim-pairing} provide results for power allocation and user pairing, respectively. 
The optimal OMA, named OMA$_{\text{(Opt)}}$, is obtained by fixing $\vec{y}$ to $\vec{0}$ in \minf~and solving the remaining problem to optimality\footnote{With $\vec{y}$ being fixed to $\vec{0}$ in~\minf, the variables $\vec{q}$ and $\vec{x}$ disappear. Then we modify Line~\ref{alg:network-single-cell} of \textsc{M-Cell} to be ``$\rho_i^{(k)}=\sum_{j\in\J_i}d_j/c_j(\bm{\rho}_{-i})$'' and Line~\ref{alg:network-return} to be ``$\textbf{return}~\bm{\rho}^{(k)}$''. The modified \textsc{M-Cell} gives the optimal load for OMA (see~\cite{6204009} for further details).}. 
The proposed optimal NOMA solution is named NOMA$_{\text{(Opt)}}$ in the remaining context.

\subsection{Power Allocation}
\label{subsec:sim-power-alloc}

We use OMA$_{\text{(Opt)}}$ as baseline. As for NOMA, the pairing candidate set $\U$ initially covers all pairs of UEs in each cell. Then, those pairs not fulfilling Lemma~\ref{rmk:decoding} are dropped from $\U$. We then use \textsc{M-Cell} to compute NOMA$_{\text{(Opt)}}$.
Besides the optimal NOMA, we implement two other sub-optimal NOMA power split schemes for comparison. One is named ``SP$_{\text{(Uni)}}$'', in which the power $p_i$ splits equally between $q_{\oplus\u}$ and $q_{\ominus\u}$ for any pair $\u=\{\oplus,\ominus\}$ ($\u\in\U$). The other is ``fractional transmit power control'' (FTPC), named SP$_{\text{(FTPC)}}$, using a parameter to control the fairness for power split. We set this parameter to be $0.4$ as recommended in \cite{6666209}. Under both SP$_{\text{(Uni)}}$ and SP$_{\text{(FTPC)}}$, we use the method in Section~\ref{subsec:cell} to compute the optimal pair selection. Both two power split schemes are easily accommodated by \textsc{M-Cell}. 

\figurename~\ref{fig:load} shows the total load and the maximum load in function of normalized demand. As expected, the cell load levels monotonically increase with user demand. At high user demand, NOMA$_{\text{(Opt)}}$ dramatically improves the load performance. For $d=1.0$, it achieves 31\% better performance than OMA$_{\text{(Opt)}}$ for both total load and maximum load. The two sub-optimal solutions SP$_{\text{(Uni)}}$ and SP$_{\text{(FTPC)}}$ also result in load improvement than OMA$_{\text{(Opt)}}$. 
Compared to the two sub-optimal solutions, the improvement achieved by NOMA$_{\text{(Opt)}}$ over OMA$_{\text{(Opt)}}$ is doubled or more. On average, by using the same amount of time-frequency resource, NOMA$_{\text{(Opt)}}$ delivers 33\% more bits demand than OMA$_{\text{(Opt)}}$. Besides, SP$_{\text{(FTPC)}}$ achieves better performance than SP$_{\text{(Uni)}}$, as the former takes into account the channel conditions in power split. Generally, in SP$_{\text{(FTPC)}}$, UE with worse channel is allocated with more power.

In summary, 
power allocation has considerably large influence on NOMA. 
Even if the UE pairs are optimally selected, sub-optimal power allocations in NOMA have significant deviation from optimal NOMA.

\subsection{User Pairing}
\label{subsec:sim-pairing}

\pgfplotsset{compat=1.12,
        /pgfplots/ybar legend/.style={
        /pgfplots/legend image code/.code={%
        \draw[##1,/tikz/.cd,bar width=3pt,yshift=-0.2em,bar shift=0pt]
                plot coordinates {(0cm,0.8em)};},
},
}
\begin{figure}[!h]  
\begin{tikzpicture}
\begin{axis}[
	ymode = log,
	xlabel={Iteration $k$ in \textsc{M-Cell}},
	ylabel={$\norm{\bm{\rho}^{(k)}-\bm{\rho}^{(k-1)}}_{\infty}$},
	label style = {font=\fontsize{9pt}{10pt}\selectfont},
	legend cell align={left},
	legend pos = north east,
	legend style = {font=\fontsize{8pt}{10pt}\selectfont},
	axis background/.style={fill=white},
minor x tick num=0,
minor y tick num=4,
major tick length=0.15cm,
minor tick length=0.075cm,
xtick = {2,3,4,...,20},
tick style={semithick,color=black},
	height=0.667\linewidth,
	width=0.97\linewidth,
	xmin=2,
	xmax=10,
]
	
\addplot [smooth,dashed,color=black] coordinates {
	(2, 0.0577461)
	(3, 0.00203188)
	(4, 0.0000852924)
	(5, 3.65109*10^-6)
	(6, 1.36205*10^-7)
	(7, 6.07099*10^-9)
	(8, 2.07329*10^-10)
	(9, 8.0064*10^-12)
	(10, 3.09183*10^-13)
	(11, 1.19397*10^-14)
	(12, 4.61075*10^-16)
	(13, 1.78053*10^-17)
	(14, 6.87588*10^-19)
	(15, 2.65526*10^-20)
	(16, 1.02538*10^-21)
	(17, 3.95971*10^-23)
	(18, 1.52912*10^-24)
	(19, 5.90499*10^-26)
	(20, 2.28033*10^-27)
};

\addplot [smooth, color=brown] coordinates {
	(2, 0.0536244)
	(3, 0.00215018)
	(4, 0.0000852924)
	(5, 3.68565*10^-6)
	(6, 1.52194*10^-7)
	(7, 5.81971*10^-9)
	(8, 1.92194*10^-10)
	(9, 7.14749*10^-12)
	(10, 2.65807*10^-13)
	(11, 9.88507*10^-15)
	(12, 3.67615*10^-16)
	(13, 1.36712*10^-17)
	(14, 5.08417*10^-19)
	(15, 1.89075*10^-20)
	(16, 7.03147*10^-22)
	(17, 2.61493*10^-23)
	(18, 9.72462*10^-25)
	(19, 3.61648*10^-26)
	(20, 1.34493*10^-27)
};

\addplot [smooth, color=blue, dash dot] coordinates {
	(2, 0.0496822)
	(3, 0.00239899)
	(4, 0.000124095)
	(5, 6.43253*10^-6)
	(6, 3.34073*10^-7)
	(7, 1.73539*10^-8)
	(8, 5.88846*10^-10)
	(9, 2.60267*10^-11)
	(10, 1.15037*10^-12)
	(11, 5.08458*10^-14)
	(12, 2.24736*10^-15)
	(13, 9.93322*10^-17)
	(14, 4.39044*10^-18)
	(15, 1.94055*10^-19)
	(16, 8.57716*10^-21)
	(17, 3.79106*10^-22)
	(18, 1.67563*10^-23)
	(19, 7.40622*10^-25)
	(20, 3.27351*10^-26)
};

\addplot [smooth, color=red, dotted,thick] coordinates {
	(2, 0.0241175)
	(3, 0.00151652)
	(4, 0.000109303)
	(5, 8.59927*10^-6)
	(6, 5.46573*10^-7)
	(7, 2.7344*10^-8)
	(8, 1.41131*10^-9)
	(9, 7.62465*10^-11)
	(10, 5.11924*10^-12)
	(11, 2.22543*10^-13)
	(12, 1.2023*10^-14)
	(13, 6.49545*10^-16)
	(14, 3.50919*10^-17)
	(15, 1.89585*10^-18)
	(16, 1.02424*10^-19)
	(17, 5.53348*10^-21)
	(18, 2.98948*10^-22)
	(19, 1.61508*10^-23)
	(20, 8.72551*10^-25)
};
\legend{$d=1.0$, $d=0.7$, $d=0.4$, $d=0.1$}
\end{axis}
\end{tikzpicture}
\caption{This figure shows the norm $\norm{\cdot}_{\infty}$ in function of iteration $k$ in \textsc{M-Cell}, under the uniform demands $0.1$, $0.4$, $0.7$, and $1.0$, respectively.}
\label{fig:convergence}
\end{figure}
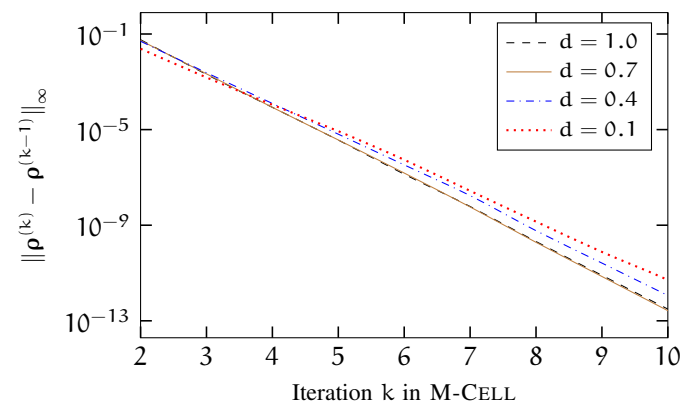

We study the influence of user pairing by considering
two sub-optimal ones~\cite{7273963}, named ``PA$_{\text{(B-W)}}$'' and ``PA$_{\text{(B-SB)}}$'', respectively. Suppose we sort the UEs in descending order of their channel conditions. In PA$_{\text{(B-W)}}$, the UE with the best channel condition is paired with the UE with the worst, and the UE with the second best is paired with one with the second worst, and so on. In PA$_{\text{(B-SB)}}$, the UE with the best channel condition is paired with the one with the second best, and so on. See~\figurename{s}~\ref{fig:pairing-best-worst} and \ref{fig:pairing-neighbored} for an illustration. In addition, we examine to what extend pair filtering (by Lemma~\ref{rmk:decoding}) affects performance. 
For filtered $\U$, optimal pair selection is done by Section~\ref{subsec:cell}. For non-filtered $\U$, we apply \textsc{M-Cell} even though there is no theoretical guarantee on optimality. Convergence, however, is observed for all the instances we considered.  
\figurename{s}~\ref{fig:pairing-filtered} and \ref{fig:pairing-non-filtered} illustrated the resulted selection patterns.

In~\figurename~\ref{fig:cell-load}, we show the load levels of all $19$ cells with $d=1.0$, under both filtered and non-filtered $\U$. In this specific scenario, $|\U|$ is reduced from ${30 \choose 2}\times 19=8265$ to $5779$ after being filtered by Lemma~\ref{rmk:decoding}. We choose $d=1.0$ because the performance difference among the solutions is the largest. 
There is very slight difference in cell load levels between the two cases. Numerically, the differences between them are only $0.1\%$ and $0.5\%$ for average and maximum cell load, respectively. This result is coherent with \figurename~\ref{fig:pairing-filtered} and \figurename~\ref{fig:pairing-non-filtered}. One can see that the patterns of the two pair selection solutions are almost identical. Thus, pair filtering by Lemma~\ref{rmk:decoding} is effective in reducing the number of candidate pairs, with virtually no impact on performance.

In \figurename~\ref{fig:Best-Worst} and \figurename~\ref{fig:Neighbored}, we respectively evaluate PA$_{\text{(B-W)}}$ and PA$_{\text{(B-SB)}}$, combined with three power split schemes SP$_{\text{(Uni)}}$, SP$_{\text{(FTPC)}}$, and SP$_{\text{(Opt)}}$. 
In SP$_{\text{(Opt)}}$, we use the algorithm~\textsc{Split} to compute the optimal power split for each pair.  All of SP$_{\text{(Uni)}}$, SP$_{\text{(FTPC)}}$, and SP$_{\text{(Opt)}}$ are put into the framework of \textsc{M-Cell} but with fixed pair selection PA$_{\text{(B-W)}}$ or PA$_{\text{(B-SB)}}$. In addition, 
OMA$_{\text{(Opt)}}$ and NOMA$_{\text{(Opt)}}$ are also included for comparison as baselines. 

One can see that all the NOMA schemes outperform OMA$_{\text{(Opt)}}$. In \figurename~\ref{fig:Best-Worst}, with PA$_{\text{(B-W)}}$, SP$_{\text{(FTPC)}}$ outperforms SP$_{\text{(Uni)}}$. SP$_{\text{(Opt)}}$ beats the other two. On one hand, there is non-negligible gap in load performance between SP$_{\text{(Opt)}}$ and NOMA$_{\text{(Opt)}}$, even though in SP$_{\text{(Opt)}}$, the power split is optimal for the PA$_{\text{(B-W)}}$ pairing. Hence pair selection plays an important role for NOMA performance. On the other hand, SP$_{\text{(Opt)}}$ yields significantly load improvement compared to OMA$_{\text{(Opt)}}$, and we conclude that PA$_{\text{(B-W)}}$ is a good sub-optimal pair selection for NOMA. Indeed, PA$_{\text{(B-W)}}$ pairs the UEs in a greedy way, aiming at maximizing the diversity of channel conditions of paired UEs. 
As shown in~\figurename~\ref{fig:pairing-filtered}, the optimal pair selection has a similar trend. 
The difference is that optimal pairing has a more ``global view'' than PA$_{\text{(B-W)}}$. 
In \figurename~\ref{fig:Neighbored}, under PA$_{\text{(B-SB)}}$, SP$_{\text{(Uni)}}$, SP$_{\text{(FTPC)}}$, and SP$_{\text{(Opt)}}$ improve the load very slightly. All of the three are far from the global optimum and the gap is large under high user demands. We conclude that PA$_{\text{(B-SB)}}$ is not as effective as PA$_{\text{(B-W)}}$ in terms of network load optimization. 

As the overall conclusion, jointly optimizing power allocation and user pairing is important for the performance of NOMA.


\subsection{Convergence Analysis}
\label{subsec:sim-convergence}

We show the convergence performance of \textsc{M-Cell} in 
\figurename~\ref{fig:convergence}, for demands $0.3$, $0.5$, $0.7$, and $1.0$, respectively. 
Initially, $\rho^{(0)}_i=1$ ($i\in\I$). 
We observe that \textsc{M-Cell} converges very fast. With higher demand, the convergence becomes slightly faster. High accuracy is reached after a very few iterations. 
For all the demands consider in the figure, even if we terminate \textsc{M-Cell} after a very few iterations, the obtained solution is close to the optimum.

\section{Conclusions}
\label{sec:conclusion}

This paper has investigated optimal resource management in multi-cell NOMA, with power allocation and user pairing being considered simultaneously. Joint optimization of both is shown to be very important for NOMA performance. The proposed system model admits a mixed use of OMA and NOMA for the users. Therefore, network architectures that support various multiple access techniques can be analyzed under this model. Finally, as for future work, the paper suggests that mathematical tools in SIF are useful for analyzing multi-cell NOMA. In summary, NOMA is a promising technique for spectrum efficiency enhancement and cell load balancing.

\section*{Acknowledgement}
The work has been partially supported by the Luxembourg National Research Fund (FNR) CORE project ROSETTA (C17/IS/11632107).

\begin{strip}
\appendix
We remark that the function~\eqref{eq:conv} is first referred to in Section~\ref{subsec:pair} and is used in the formulation~\eqref{eq:min_pair2}. The two functions~\eqref{eq:H+} and~\eqref{eq:H-} are first referred to in Section~\ref{subsec:pair} and are used in the algorithm \textsc{Split}.
\begin{equation}
\text{Cv$_{\u}$}(c_{\oplus\u},c_{\ominus\u},\bm{\rho}_{-i})=\log\left[\frac{w_{\oplus}(\bm{\rho}_{-i})e^{(c_{\oplus\u}+c_{\ominus\u})}+(w_{\ominus}(\bm{\rho}_{-i})-w_{\oplus}(\bm{\rho}_{-i}))e^{c_{\ominus\u}}}{p_i+w_{\ominus}(\bm{\rho}_{-i})}\right].	
\label{eq:conv}
\end{equation}	
\begin{equation}
H_{\oplus\u}\left(x_{\u},\bm{\rho}_{-i}\right)=\left\{
\begin{array}{ll}
d_{\oplus}/x_{\u}  & c^K_{\oplus\u}x_{\u}> d_{\oplus}\\
\log\left[
\frac{(p_i+w_{\ominus}(\bm{\rho}_{-i}))/e^{{d_{\ominus}}/{x_{\u}}}+w_{\oplus}(\bm{\rho}_{-i})-w_{\ominus}(\bm{\rho}_{-i})}{w_{\oplus}(\bm{\rho}_{-i})}
\right]
& \text{Otherwise}
\end{array}
\right.
\label{eq:H+}
\end{equation}
\begin{equation}
H_{\ominus\u}\left(x_{\u},\bm{\rho}_{-i}\right)=\left\{
\begin{array}{ll}
d_{\ominus}/x_{\u}	 & c^K_{\ominus\u}x_{\u}> d_{\ominus} \\
\log\left[
\frac{p_i+w_{\ominus}(\bm{\rho}_{-i})}
{w_{\oplus}(\bm{\rho}_{-i})e^{{d_{\oplus}}/{x_{\u}}}+w_{\ominus}(\bm{\rho}_{-i})-w_{\oplus}(\bm{\rho}_{-i})}
\right] 
 & \text{Otherwise} 
\end{array}
\right.
\label{eq:H-}
\end{equation}
\end{strip}

\bibliographystyle{IEEEtran}
\bibliography{ref.bib}

\begin{thebibliography}{10}
\providecommand{\url}[1]{#1}
\csname url@samestyle\endcsname
\providecommand{\newblock}{\relax}
\providecommand{\bibinfo}[2]{#2}
\providecommand{\BIBentrySTDinterwordspacing}{\spaceskip=0pt\relax}
\providecommand{\BIBentryALTinterwordstretchfactor}{4}
\providecommand{\BIBentryALTinterwordspacing}{\spaceskip=\fontdimen2\font plus
\BIBentryALTinterwordstretchfactor\fontdimen3\font minus
  \fontdimen4\font\relax}
\providecommand{\BIBforeignlanguage}[2]{{%
\expandafter\ifx\csname l@#1\endcsname\relax
\typeout{** WARNING: IEEEtran.bst: No hyphenation pattern has been}%
\typeout{** loaded for the language `#1'. Using the pattern for}%
\typeout{** the default language instead.}%
\else
\language=\csname l@#1\endcsname
\fi
#2}}
\providecommand{\BIBdecl}{\relax}
\BIBdecl

\bibitem{lei:globecom17}
L.~You, L.~Lei, D.~Yuan, S.~Sun, S.~Chatzinotas, and B.~Ottersten, ``A
  framework for optimizing multi-cell {NOMA}: Delivering demand with less
  resource,'' in \emph{2017 IEEE GLOBECOM}, 2017.

\bibitem{7676258}
S.~M.~R. Islam, N.~Avazov, O.~A. Dobre, and K.~S. Kwak, ``Power-domain
  non-orthogonal multiple access ({NOMA}) in {5G} systems: Potentials and
  challenges,'' \emph{IEEE Communications Surveys Tutorials}, vol.~19, no.~2,
  pp. 721--742, 2017.

\bibitem{2016arXiv161101607S}
\BIBentryALTinterwordspacing
W.~{Shin}, M.~{Vaezi}, B.~{Lee}, D.~J. {Love}, J.~{Lee}, and H.~V. {Poor},
  ``{Non-orthogonal multiple access in multi-cell networks: Theory,
  performance, and practical challenges},'' \emph{arXiv.org}, 2016. [Online].
  Available: \url{https://arxiv.org/pdf/1611.01607.pdf}
\BIBentrySTDinterwordspacing

\bibitem{7273963}
Z.~Ding, P.~Fan, and H.~V. Poor, ``Impact of user pairing on {5G} nonorthogonal
  multiple-access downlink transmissions,'' \emph{IEEE Transactions on
  Vehicular Technology}, vol.~65, no.~8, pp. 6010--6023, 2016.

\bibitem{R1-153332}
``Evaluation methodologies for downlink multiuser superposition
  transmissions,'' 3GPP, Tech. Rep. R1-153332, 2014.

\bibitem{7582424}
W.~Shin, M.~Vaezi, B.~Lee, D.~J. Love, J.~Lee, and H.~V. Poor, ``Coordinated
  beamforming for multi-cell {MIMO-NOMA},'' \emph{IEEE Communications Letters},
  vol.~21, no.~1, pp. 84--87, 2017.

\bibitem{7357604}
J.~Kim, J.~Koh, J.~Kang, K.~Lee, and J.~Kang, ``Design of user clustering and
  precoding for downlink non-orthogonal multiple access ({NOMA}),'' in
  \emph{2015 IEEE MILCOM}, 2015, pp. 1170--1175.

\bibitem{DBLP:journals/corr/TabassumHH16}
\BIBentryALTinterwordspacing
H.~Tabassum, E.~Hossain, and M.~J. Hossain, ``Modeling and analysis of uplink
  non-orthogonal multiple access {(NOMA)} in large-scale cellular networks
  using poisson cluster processes,'' \emph{arXiv.org}, 2016. [Online].
  Available: \url{http://arxiv.org/abs/1610.06995.pdf}
\BIBentrySTDinterwordspacing

\bibitem{7964738}
Y.~Fu, Y.~Chen, and C.~W. Sung, ``Distributed power control for the downlink of
  multi-cell {NOMA} systems,'' \emph{IEEE Transactions on Wireless
  Communications}, to appear.

\bibitem{7878674}
L.~P. Qian, Y.~Wu, H.~Zhou, and X.~Shen, ``Joint uplink base station
  association and power control for small-cell networks with non-orthogonal
  multiple access,'' \emph{IEEE Transactions on Wireless Communications},
  vol.~16, no.~9, pp. 5567--5582, 2017.

\bibitem{7974737}
B.~Di, L.~Song, Y.~Li, and G.~Y. Li, ``Non-orthogonal multiple access for
  high-reliable and low-latency {V2X} communications in {5G} systems,''
  \emph{IEEE Journal on Selected Areas in Communications}, vol.~35, no.~10, pp.
  2383--2397, 2017.

\bibitem{7973138}
L.~P. Qian, Y.~Wu, H.~Zhou, and X.~Shen, ``Dynamic cell association for
  non-orthogonal multiple-access {V2S} networks,'' \emph{IEEE Journal on
  Selected Areas in Communications}, vol.~35, no.~10, pp. 2342--2356, 2017.

\bibitem{7982784}
J.~Zhu, J.~Wang, Y.~Huang, S.~He, X.~You, and L.~Yang, ``On optimal power
  allocation for downlink non-orthogonal multiple access systems,'' \emph{IEEE
  Journal on Selected Areas in Communications}, to appear.

\bibitem{7959870}
L.~You and D.~Yuan, ``Joint {CoMP}-cell selection and resource allocation in
  fronthaul-constrained {C-RAN},'' in \emph{2017 WiOpt Workshop}, 2017, pp.
  1--6.

\bibitem{7132788}
L.~Lei, D.~Yuan, C.~K. Ho, and S.~Sun, ``Optimal cell clustering and activation
  for energy saving in load-coupled wireless networks,'' \emph{IEEE
  Transactions on Wireless Communications}, vol.~14, no.~11, pp. 6150--6163,
  2015.

\bibitem{5198628}
I.~Viering, M.~Dottling, and A.~Lobinger, ``A mathematical perspective of
  self-optimizing wireless networks,'' in \emph{2009 IEEE ICC}, 2009, pp. 1--6.

\bibitem{6204009}
I.~Siomina and D.~Yuan, ``Analysis of cell load coupling for {LTE} network
  planning and optimization,'' \emph{IEEE Transactions on Wireless
  Communications}, vol.~11, no.~6, pp. 2287--2297, 2012.

\bibitem{6732895}
A.~J. Fehske, I.~Viering, J.~Voigt, C.~Sartori, S.~Redana, and G.~P. Fettweis,
  ``Small-cell self-organizing wireless networks,'' \emph{Proceedings of the
  IEEE}, vol. 102, no.~3, pp. 334--350, 2014.

\bibitem{7585124}
L.~You, D.~Yuan, N.~Pappas, and P.~V\"{a}rbrand, ``Energy-aware wireless relay
  selection in load-coupled {OFDMA} cellular networks,'' \emph{IEEE
  Communications Letters}, vol.~21, no.~1, pp. 144--147, 2017.

\bibitem{7880696}
L.~You and D.~Yuan, ``Load optimization with user association in cooperative
  and load-coupled {LTE} networks,'' \emph{IEEE Transactions on Wireless
  Communications}, vol.~16, no.~5, pp. 3218--3231, 2017.

\bibitem{5450287}
I.~Siomina, A.~Furuskär, and G.~Fodor, ``A mathematical framework for
  statistical {QoS} and capacity studies in {OFDM} networks,'' in \emph{2009
  IEEE PIMRC}, 2009, pp. 2772--2776.

\bibitem{5489842}
K.~Majewski and M.~Koonert, ``Conservative cell load approximation for radio
  networks with {Shannon} channels and its application to {LTE} network
  planning,'' in \emph{2010 Sixth Advanced International Conference on
  Telecommunications}, 2010, pp. 219--225.

\bibitem{6292896}
E.~Pollakis, R.~L.~G. Cavalcante, and S.~Stańczak, ``Base station selection
  for energy efficient network operation with the majorization-minimization
  algorithm,'' in \emph{2012 IEEE SPAWC}, 2012, pp. 219--223.

\bibitem{6363999}
A.~J. Fehske and G.~P. Fettweis, ``Aggregation of variables in load models for
  interference-coupled cellular data networks,'' in \emph{2012 IEEE ICC}, 2012,
  pp. 5102--5107.

\bibitem{6479364}
A.~J. Fehske, H.~Klessig, J.~Voigt, and G.~P. Fettweis, ``Concurrent load-aware
  adjustment of user association and antenna tilts in self-organizing radio
  networks,'' \emph{IEEE Transactions on Vehicular Technology}, vol.~62, no.~5,
  pp. 1974--1988, 2013.

\bibitem{6747283}
C.~K. Ho, D.~Yuan, and S.~Sun, ``Data offloading in load coupled networks: A
  utility maximization framework,'' \emph{IEEE Transactions on Wireless
  Communications}, vol.~13, no.~4, pp. 1921--1931, April 2014.

\bibitem{6924853}
R.~L.~G. Cavalcante, S.~Stanczak, M.~Schubert, A.~Eisenblaetter, and U.~Tuerke,
  ``Toward energy-efficient 5g wireless communications technologies: Tools for
  decoupling the scaling of networks from the growth of operating power,''
  \emph{IEEE Signal Processing Magazine}, vol.~31, no.~6, pp. 24--34, 2014.

\bibitem{6815652}
S.~Tombaz, S.~w.~Han, K.~W. Sung, and J.~Zander, ``Energy efficient network
  deployment with cell {DTX},'' \emph{IEEE Communications Letters}, vol.~18,
  no.~6, pp. 977--980, 2014.

\bibitem{7151124}
B.~Błaszczyszyn, M.~Jovanovic, and M.~K. Karray, ``Performance laws of large
  heterogeneous cellular networks,'' in \emph{2015 WiOpt}, 2015, pp. 597--604.

\bibitem{6887352}
C.~K. Ho, D.~Yuan, L.~Lei, and S.~Sun, ``Power and load coupling in cellular
  networks for energy optimization,'' \emph{IEEE Transactions on Wireless
  Communications}, vol.~14, no.~1, pp. 509--519, 2015.

\bibitem{7480379}
R.~L.~G. Cavalcante, S.~Stańczak, J.~Zhang, and H.~Zhuang, ``Low complexity
  iterative algorithms for power estimation in ultra-dense load coupled
  networks,'' \emph{IEEE Transactions on Signal Processing}, vol.~64, no.~22,
  pp. 6058--6070, 2016.

\bibitem{7273956}
H.~Klessig, D.~Öhmann, A.~J. Fehske, and G.~P. Fettweis, ``A performance
  evaluation framework for interference-coupled cellular data networks,''
  \emph{IEEE Transactions on Wireless Communications}, vol.~15, no.~2, pp.
  938--950, 2016.

\bibitem{7332797}
R.~L.~G. Cavalcante, Y.~Shen, and S.~Stańczak, ``Elementary properties of
  positive concave mappings with applications to network planning and
  optimization,'' \emph{IEEE Transactions on Signal Processing}, vol.~64,
  no.~7, pp. 1774--1783, 2016.

\bibitem{7962728}
Q.~Liao, ``Dynamic uplink/downlink resource management in flexible
  duplex-enabled wireless networks,'' in \emph{2017 ICC Workshops}, 2017, pp.
  625--631.

\bibitem{7744690}
R.~L.~G. Cavalcante, M.~Kasparick, and S.~Stańczak, ``Max-min utility
  optimization in load coupled interference networks,'' \emph{IEEE Transactions
  on Wireless Communications}, vol.~16, no.~2, pp. 705--716, 2017.

\bibitem{DBLP:journals/corr/abs-1710-09318}
\BIBentryALTinterwordspacing
D.~A. Awan, R.~L.~G. Cavalcante, and S.~Stanczak, ``A robust machine learning
  method for cell-load approximation in wireless networks,'' \emph{arXiv.org},
  2017. [Online]. Available: \url{http://arxiv.org/abs/1710.09318}
\BIBentrySTDinterwordspacing

\bibitem{tse2005fundamentals}
D.~Tse and P.~Viswanath, \emph{Fundamentals of Wireless Communication}.\hskip
  1em plus 0.5em minus 0.4em\relax Cambridge university press, 2005.

\bibitem{7307220}
G.~Geraci, M.~Wildemeersch, and T.~Q.~S. Quek, ``Energy efficiency of
  distributed signal processing in wireless networks: A cross-layer analysis,''
  \emph{IEEE Transactions on Signal Processing}, vol.~64, no.~4, pp.
  1034--1047, 2016.

\bibitem{6954404}
M.~Wildemeersch, T.~Q.~S. Quek, M.~Kountouris, A.~Rabbachin, and C.~H. Slump,
  ``Successive interference cancellation in heterogeneous networks,''
  \emph{IEEE Transactions on Communications}, vol.~62, no.~12, pp. 4440--4453,
  2014.

\bibitem{414651}
R.~D. Yates, ``A framework for uplink power control in cellular radio
  systems,'' \emph{IEEE Journal on Selected Areas in Communications}, vol.~13,
  no.~7, pp. 1341--1347, 1995.

\bibitem{7560605}
B.~Di, L.~Song, and Y.~Li, ``Sub-channel assignment, power allocation, and user
  scheduling for non-orthogonal multiple access networks,'' \emph{IEEE
  Transactions on Wireless Communications}, vol.~15, no.~11, pp. 7686--7698,
  2016.

\bibitem{7587811}
L.~Lei, D.~Yuan, C.~K. Ho, and S.~Sun, ``Power and channel allocation for
  non-orthogonal multiple access in {5G} systems: Tractability and
  computation,'' \emph{IEEE Transactions on Wireless Communications}, vol.~15,
  no.~12, pp. 8580--8594, 2016.

\bibitem{4567800}
S.~Micali and V.~V. Vazirani, ``An {$O(\sqrt{V}\cdot|E|)$} algoithm for finding
  maximum matching in general graphs,'' in \emph{21st Annual Symposium on
  Foundations of Computer Science}, 1980, pp. 17--27.

\bibitem{lemmens_nussbaum_2012}
H.~R. Feyzmahdavian, M.~Johansson, and T.~Charalambous, ``Contractive
  interference functions and rates of convergence of distributed power control
  laws,'' \emph{IEEE Transactions on Wireless Communications}, vol.~11, no.~12,
  pp. 4494--4502, 2012.

\bibitem{wrap-around}
D.~Huo, ``Clarification on the wrap-around hexagon network structure,'' IEEE,
  Tech. Rep. C802.20-05/15, 2005.

\bibitem{6666209}
Y.~Saito, A.~Benjebbour, Y.~Kishiyama, and T.~Nakamura, ``System-level
  performance evaluation of downlink non-orthogonal multiple access ({NOMA}),''
  in \emph{2013 IEEE PIMRC}, 2013, pp. 611--615.

\end{thebibliography}

\end{document}